\newcommand{\xhdr}[1]{\vspace{2mm} \noindent{\bf #1}}
\crefname{claim}{claim}{claims}
\pgfplotsset{compat=1.18}
\Crefname{algocf}{Algorithm}{Algorithms}
\tikzstyle{vecArrow} = [thick, decoration={markings,mark=at position
\tikzstyle{innerWhite} = [semithick, white,line width=1.4pt, shorten >= 4.5pt]
\theoremstyle{plain}
\newtheorem{theorem}{Theorem}[section]
\newtheorem{lemma}[theorem]{Lemma}
\newtheorem{claim}[theorem]{Claim}
\newtheorem{proposition}[theorem]{Proposition}
\newtheorem{observation}[theorem]{Observation}
\theoremstyle{plain}
\newtheorem{definition}{Definition}[section] 
\newtheorem{example}[definition]{Example}
\newtheorem{remark}[definition]{Remark}
\theoremstyle{plain}
\newcommand{\OPT}{\texttt{OPT}}
\newcommand{\alloc}{x}
\newcommand{\price}{p}
\newcommand{\wealth}{\truewealth}
\newcommand{\pacescalar}{\alpha}
\newcommand{\val}{v}
\newcommand{\primed}{^\dagger}
\newcommand{\doubleprimed}{^\ddagger}
\newcommand{\util}{u}
\newcommand{\moneycost}{C}
\newcommand{\reals}{\mathbb{R}}
\newcommand{\twelfare}{W}
\newcommand{\wtp}{\twelfare}
\newcommand{\moneycostderivative}{R}
\newcommand{\truewealth}{w}
\newcommand{\allocs}{\mathbf{\alloc}}
\newcommand{\optalloc}{\alloc^*}
\newcommand{\valuefunc}{V}
\newcommand{\valuefuncderivative}{S}
\newcommand{\instance}{I}
\newcommand{\vali}{\val_i}
\newcommand{\allocij}{\alloc_{ij}}
\newcommand{\prices}{\mathbf{\price}}
\newcommand{\eqlb}{\texttt{NE}}
\newcommand{\eqlbs}{\mathcal{E}}
\newcommand{\valuefunctioni}{\valuefunc_i}
\newcommand{\truewealthi}{\truewealth_i}
\newcommand{\moneycosti}{\moneycost_i}
\newcommand{\ctr}{\phi}
\newcommand{\ctrs}{\boldsymbol{\ctr}}
\newcommand{\ctri}{\ctrs_i}
\newcommand{\ctrij}{\ctr_{ij}}
\newcommand{\purchaseset}{H}
\newcommand{\purchaseseti}{\purchaseset_i}
\newcommand{\pricej}{\price_j}
\newcommand{\randomprice}{\mathbf{\price}}
\newcommand{\randompricej}{\randomprice_j}
\newcommand{\payment}{t}
\newcommand{\alloci}{\alloc_i}
\newcommand{\paymenti}{\payment_i}
\newcommand{\randomalloc}{\mathbf{x}}
\newcommand{\randomallocs}{{\randomalloc}}
\newcommand{\randomalloci}{\randomallocs_i}
\newcommand{\randompayment}{\mathbf{\payment}}
\newcommand{\valuefunctionderivativei}{\valuefuncderivative_i}
\newcommand{\moneycostderivativei}{\moneycostderivative_i}
\newcommand{\reportval}{\tilde\val}
\newcommand{\reportwealth}{\tilde\wealth}
\newcommand{\randomreportval}{\tilde{\mathbf\val}}
\newcommand{\randomreportwealth}{\tilde{\mathbf\wealth}}
\newcommand{\realsinf}{\reals_+^{\infty}}
\newcommand{\strategy}{\mathbf{s}}
\newcommand{\valuefunction}{\valuefunc}
\newcommand{\PoA}{\Gamma}
\newcommand{\purePoA}{\PoA_{\texttt{Pure}}}
\newcommand{\mixedPoA}{\PoA_{\texttt{Mixed}}}
\newcommand{\bayesPoA}{\PoA_{\texttt{Bayes}}}
\newcommand{\typedist}{F}
\newcommand{\allocselectionrule}{\sigma}
\newcommand{\tildeP}{\tilde P}
\newcommand{\tildePj}{\tildeP_j}
\newcommand{\optallocij}{\optalloc_{ij}} 
\newcommand{\lagent}{A_S}
\newcommand{\hagent}{A_G}
\newcommand{\allocLB}{y}
\newcommand{\allocLBi}{\allocLB_i}
\newcommand{\allocUB}{z}
\newcommand{\allocUBi}{\allocUB_i}
\newcommand{\priceL}{\price_L}
\newcommand{\priceH}{\price_H}
\newcommand{\PriceL}{P_L}
\newcommand{\PriceH}{P_H}
\newcommand{\allocHat}{\hat{\alloc}}
\newcommand{\allocHati}{\allocHat_i}
	\DeclareMathOperator{\argmax}{argmax}
\newcommand{\condition}{\,\mid\,}
\newcommand{\prob}[2][]{\text{Pr}\ifthenelse{\not\equal{}{#1}}{_{#1}}{}\!\left[{\def\givenn{\middle|}#2}\right]}
\newcommand{\expect}[2][]{\mathbb{E}\ifthenelse{\not\equal{}{#1}}{_{#1}}{}\!\left[{\def\givenn{\middle|}#2}\right]}
\newcommand{\tparen}{\big}
\newcommand{\tprob}[2][]{\text{Pr}\ifthenelse{\not\equal{}{#1}}{_{#1}}{}\tparen[{\def\given{\tparen|}#2}\tparen]}
\newcommand{\texpect}[2][]{\mathbb{E}\ifthenelse{\not\equal{}{#1}}{_{#1}}{}\tparen[{\def\given{\tparen|}#2}\tparen]}
\newcommand{\sprob}[2][]{\text{Pr}\ifthenelse{\not\equal{}{#1}}{_{#1}}{}[#2]}
\newcommand{\sexpect}[2][]{\mathbb{E}\ifthenelse{\not\equal{}{#1}}{_{#1}}{}[#2]}
\newcommand{\indicator}[1]{{\mathbbm{1}\left\{ #1 \right\}}}
\renewcommand{\OPT}{\ensuremath{\mathrm{OPT}}}
\title{Strategic Budget Selection in a Competitive Autobidding World}
\author[1]{Yiding Feng}
\author[2]{Brendan Lucier}
\author[3]{Aleksandrs Slivkins}
\affil[1]{University of Chicago, Chicago IL, USA.}
\affil[2]{Microsoft Research, Cambridge MA, USA.}
\affil[3]{Microsoft Research, New York NY, USA.}
\affil[ ]{\normalsize{\texttt{yiding.feng@chicagobooth.edu}},~\{\normalsize{\texttt{brlucier,slivkins}}\}\normalsize{\texttt{@microsoft.com}}}
\date{}
\begin{document}

\maketitle

\begin{abstract}
We study a game played between advertisers in an online ad platform.  The platform sells ad impressions by first-price auction and provides autobidding algorithms that optimize bids on each advertiser's behalf, \bledit{subject to advertiser constraints such as budgets}.  
Crucially, these constraints are \emph{strategically} chosen by the  advertisers.
The chosen constraints define an ``inner'' budget-pacing game for the autobidders.
Advertiser payoffs in the constraint-choosing ``metagame'' are determined by the equilibrium reached by the autobidders.

Advertiser
preferences can be more general than what is implied by their constraints: we assume only that they have weakly decreasing marginal value for clicks and weakly increasing marginal disutility for spending money.
Nevertheless, we show that
at any pure Nash equilibrium of the metagame, the resulting allocation obtains at least half of the liquid welfare of any allocation and this bound is tight.  We also obtain a $4$-approximation for any mixed Nash equilibrium 
or Bayes-Nash equilibria.  These results rely on the power to declare budgets: if advertisers can specify only a (linear) value per click \bledit{or an ROI target} but not a budget constraint, the approximation factor at equilibrium can be as bad as linear in the number of advertisers.

\end{abstract}


\setcounter{page}{0}
\newpage

\section{Introduction}
\label{sec:intro}

In many large online platforms, it is increasingly common for users to delegate their choices to algorithmic proxies that optimize on their behalf.  Examples include autobidders in online advertising markets~\citep{GoogleAutobidder,BingAutobidder,FacebookAutobidder}, dynamic price adjustment algorithms for sales or rental platforms like Amazon and Airbnb~\citep{AmazonPricing,AirbnbPricing}, price prediction tools for flights~\citep{HopperPrediction,ExpediaPrediction}, and more.  These tools typically employ techniques from machine learning to optimize some goal subject to user-specified constraints or targets.  When many users deploy these algorithmic tools simultaneously, the algorithms are effectively competing against each other.  But in addition to the game between the algorithms, this setup defines a metagame between the users, who are choosing the parameters of their algorithms in anticipation of the competition.  This raises a natural question: \emph{how will users play the metagame, and what is the impact on the market?}

We study these questions through the lens of autobidding in online advertising markets.  Online advertising platforms sell individual advertising events (like clicks or conversions) by auction, with a separate auction held for each ad impression.  These auctions are strategically linked and optimal bidding is complicated, but advertisers can delegate the bidding details to an autobidder. We focus on online advertising in part because the use of automated bidding algorithms is very well established in that market, with all major advertising platforms offering integrated autobidding services~\citep{GoogleAutobidder,BingAutobidder,FacebookAutobidder}.

The autobidding paradigm involves three layers: advertisers, autobidders, and auctions.  See \Cref{fig:metagame} for an illustration.  Each advertiser has their own individual autobidding algorithm (often provided by the platform) that bids on their behalf.  An advertiser provides their autobidder with aggregate instructions like a budget constraint and/or a maximum bid.\footnote{In our model, maximum bid constraints are equivalent to average return-on-investment (ROI) constraints, which are common in autobidding.  See Remark~\ref{remark:ROI} in Section~\ref{sec:prelim}.}  The autobidder then participates in many individual auctions, implementing learning algorithms that use auction feedback to adjust bids in order to maximize the total value received (e.g., total clicks) while adhering to the specified constraints.  There is by now a very rich academic literature on the design of autobidding algorithms for various constraints and auction designs~\citep{Borgs-www07,balseiro2019learning,balseiro2017budget,balseiro2020dual,balseiro2023best,castiglioni2022unifying} and how a platform might design its auction rules or prices for such algorithms~\citep{balseiro2021robust,deng2021towards,golrezaei2023pricing}.

From the advertisers' perspective, the autobidder and auction layers can be seen as a cohesive mechanism that the advertisers interact with strategically. The advertiser-specified constraints (e.g., budgets) define the mechanism's message space. But we emphasize that this is not a direct-revelation mechanism, and the allowable constraints do not necessarily capture the intricacies of advertiser preferences. Indeed, the latter may be much more complex than, e.g., maximizing a linear value for clicks or other events subject to a budget.\footnote{And even if the message space does capture advertisers' preferences, these mechanisms do not generally incentivize truth-telling; see, e.g., \cite{AMP-23}.}

Thus, while the autobidders play a bidding game amongst themselves, the advertisers play a game of constraint selection, henceforth called the \emph{metagame}. The actions in this metagame are the budgets and any other constraints specified to the autobidders. These constraints in turn define the bidding game, called the \emph{inner game}, played by the autobidders. The outcome of the inner game ultimately determines the advertisers' payoffs. In this paper we focus on the metagame played by the advertisers, explore its equilibrium outcomes, and analyze the efficiency of the market.

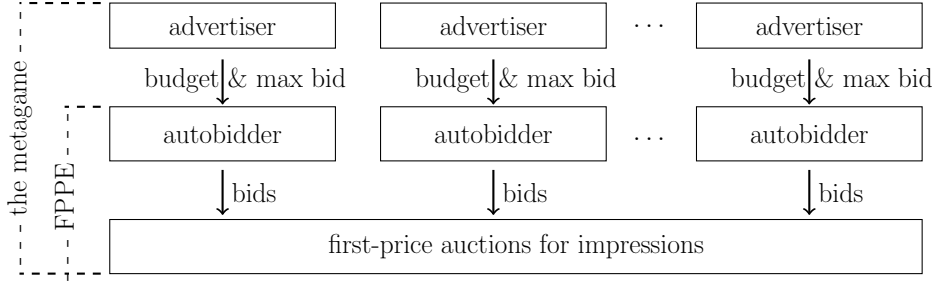
\begin{figure}
    \centering
    \begin{tikzpicture}[scale=0.6, transform shape]

\draw (0, 0) rectangle (18, 1.2);
\draw (9, 0.6) node {\LARGE first-price auctions for impressions};

\draw (0, 2.5) rectangle (5, 3.7);
\draw (2.5, 3.1) node {\LARGE autobidder};
\draw[->, thick] (2.5, 2.3) -- (2.5, 1.3);
\draw (3.2, 1.8) node {\LARGE bids};

\draw (6, 2.5) rectangle (11, 3.7);
\draw (8.5, 3.1) node {\LARGE autobidder};
\draw[->, thick] (8.5, 2.3) -- (8.5, 1.3);
\draw (9.2, 1.8) node {\LARGE bids};

\draw (13, 2.5) rectangle (18, 3.7);
\draw (15.5, 3.1) node {\LARGE autobidder};
\draw[->, thick] (15.5, 2.3) -- (15.5, 1.3);
\draw (16.2, 1.8) node {\LARGE bids};

\draw (12, 3) node {\LARGE $\dots$};

\draw (0, 5) rectangle (5, 6);
\draw (2.5, 5.5) node {\LARGE advertiser};
\draw[->, thick] (2.5, 4.75) -- (2.5, 3.75);
\draw (3, 4.3) node {\LARGE budget \& max bid};

\draw (6, 5) rectangle (11, 6);
\draw (8.5, 5.5) node {\LARGE advertiser};
\draw[->, thick] (8.5, 4.75) -- (8.5, 3.75);
\draw (9, 4.3) node {\LARGE budget \& max bid};

\draw (13, 5) rectangle (18, 6);
\draw (15.5, 5.5) node {\LARGE advertiser};
\draw[->, thick] (15.5, 4.75) -- (15.5, 3.75);
\draw (16, 4.3) node {\LARGE budget \& max bid};

\draw (12., 5.5) node {\LARGE $\dots$};

 \node[rotate=90] at (-1,1.7) {\LARGE - - - FPPE - - -};
 \node[rotate=90] at (-2,3) {\LARGE - - - the metagame - - -};

\draw [dashed, thick] (-0.2, 0) -- (-1, 0);
\draw [dashed, thick] (-0.2, 3.7) -- (-1, 3.7);
\draw [dashed, thick] (-1.2, 0) -- (-2, 0) ;
\draw [dashed, thick] (-0.2, 6) -- (-2, 6);

\end{tikzpicture} 
    \caption{The metagame of strategic budget selection.}
    \label{fig:metagame}
\end{figure}


\xhdr{Our model.}
For the inner game, each autobidder receives as input a budget constraint and maximum bid, either of which can be infinite.
We first need to specify the outcome of the inner game between the autobidders for a fixed choice of constraints. One potential challenge is that if there are multiple equilibria in the inner game, the metagame payoffs would be ambiguous and dependent on the details of the learning dynamics.
Fortunately, if the auction format is a first-price auction, the inner game has an essentially unique equilibrium known as the \emph{first-price pacing equilibrium (FPPE)} \citep{CKPSSSW-22}.  \bledit{Moreover, common learning dynamics are known to converge quickly to this equilibrium \citep{Borgs-www07}.} Given that and the practical prevalence of first-price auctions in the online advertising market, we will focus on first-price auctions in our model and use FPPE as our predicted outcome of the inner game.\footnote{In contrast, the inner game equilibrium
may not be unique \bledit{for 2nd-price auctions \citep{CKSS-22}, and finding any equilibrium is PPAD-hard~\citep{CKK-21}.}  So extending to 2nd-price auctions necessitates taking a stance on equilibrium selection and/or the learning dynamics,
which we leave for future work.}  \bledit{This approach allows us to abstract away from the details of the autobidder implementation; our results are relevant to any learning methods that converge to the FPPE.}


Turning next to the advertisers, our model encompasses a broad range of advertiser preferences.
We consider general \emph{separable} utility models
with a weakly concave (not necessarily linear) valuation for events (e.g., clicks) and weakly convex increasing disutility for spending money.
This includes quasi-linear utility up to a hard budget constraint, but also allows value-maximizing advertisers, 
soft budget constraints, decreasing marginal value for clicks, etc.
We emphasize the distinction between preferences and constraints: while the autobidders strive to maximize value given the specified constraints, the advertisers are optimizing for their more general preferences.


To measure the efficiency of the overall market, we focus on the \emph{liquid welfare}, which is the summation of each advertiser's willingness-to-pay given her own type and allocation. This general definition
encompasses the classic definition of welfare for linear agents and the original definition of liquid welfare for budgeted agents, respectively \citep{vic-61,DP-14}. Liquid welfare is also closely related to \emph{compensating variation} from economics \citep{hic-75}.%
\footnote{\label{footnote:compensating variation}Compensating variation is the transfer of money required, after some market change, to return agents to their original utility levels.  In our scenario, the ``change" is an allocation of resources, and the compensating variation is the maximum amount the agents would pay for that allocation; i.e., the liquid welfare.}
\bledit{We emphasize that liquid welfare is determined by the allocation and the advertisers' true preferences, not the payments or the declared constraints.}
In our results, we compare to the optimal liquid welfare achieved by \emph{any} allocation, henceforth called the \OPT.

\xhdr{Approximate efficiency in the metagame.}
In our baseline model, we consider a full-information environment where each advertiser's type (i.e., valuation and disutility for spending money) is fixed and publicly known.  We first consider pure Nash equilibria of the metagame: the advertisers simultaneously declare their constraints to their respective autobidders 
then outcomes and payoffs are determined by the FPPE of the inner game.

\bledit{While the FPPE is essentially unique given the choices of the advertisers, the same is not true for the metagame of constraint selection.  We show through a sequence of examples that the metagame may have multiple pure Nash equilibria, or none at all.  Intuitively, a unilateral change of budget by one advertiser can substantially change the equilibrium behavior of \emph{all} autobidders in the inner game, leading to a rich strategic environment for the advertisers.  Nevertheless,} 
in our main result we prove that at any pure Nash equilibrium (if one exists), the resulting liquid welfare is a 2-approximation to the
$\OPT$
and this is tight.\footnote{We say that liquid welfare at equilibrium is an $\alpha$-approximation to the \OPT, for some approximation factor $\alpha\geq 1$, if the liquid welfare is at least $\OPT/\alpha$ for any vector of advertiser types.}
%

\bledit{
\begin{theorem}
\label{thm:intro.PNE}
At any pure Nash equilibrium of the metagame, the liquid welfare is a $2$-approximation to the \OPT\ and
this approximation factor is tight.
\end{theorem}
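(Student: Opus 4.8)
The plan is to combine two inequalities. Fix a pure Nash equilibrium of the metagame; let $(\allocs,\prices,\pacescalar)$ denote the induced FPPE of the inner game (allocation, per‑impression prices, pacing multipliers), let $\payment_i=\sum_j\price_j\alloc_{ij}$ be advertiser $i$'s payment, and abbreviate the equilibrium liquid welfare by $\sum_i\wtp_i(\alloc_i)$ and the revenue by $R=\sum_j\price_j=\sum_i\payment_i$. I will establish \textbf{(A)} $\sum_i\wtp_i(\alloc_i)\ge R$ and \textbf{(B)} $\sum_i\wtp_i(\alloc_i)+R\ge \OPT$; together these give $2\sum_i\wtp_i(\alloc_i)\ge\OPT$, which is the theorem.

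Inequality \textbf{(A)} is the easy half and follows from individual rationality. Declaring a zero budget is always an available deviation: it yields advertiser $i$ the empty allocation and the payoff of receiving nothing and paying nothing. Hence at equilibrium $\util_i(\alloc_i,\payment_i)\ge\util_i(0,0)$. Since $\util_i$ is separable with a weakly increasing disutility for money, the set of payments $\payment$ with $\util_i(\alloc_i,\payment)\ge\util_i(0,0)$ is a downward‑closed interval whose right endpoint is, by definition, the willingness‑to‑pay $\wtp_i(\alloc_i)$; as $\payment_i$ lies in this interval we get $\wtp_i(\alloc_i)\ge\payment_i$. Summing over advertisers yields \textbf{(A)}.

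Inequality \textbf{(B)} is the crux. Fix an optimal allocation $\optalloc$ and an advertiser $i$, and consider the deviation in which $i$ raises her declared budget (and drops any max‑bid constraint) so as to target her optimal bundle $\optalloc_i$. The goal is to lower‑bound her deviation payoff by, in essence, $\util_i(0,0)+\wtp_i(\optalloc_i)$ minus the amount she must pay, and then to argue that this payment, summed over all advertisers' deviations, is controlled by the equilibrium revenue $R$. Two structural properties of FPPE drive this: (i) monotonicity of the FPPE in a single buyer's budget (raising $i$'s budget perturbs the whole inner game but only weakly, and in a controlled direction, on prices); and (ii) the first‑price property, whereby an (essentially) un‑paced buyer outbids every competitor on every impression where her per‑impression value does — competitors being paced at multiplier $\le 1$ and hence bidding below value — so $i$ can acquire at least the impression‑fractions assigned to her under $\optalloc$, paying on each at most what the equilibrium price on that impression already was. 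Combining the resulting deviation‑payoff bound with the Nash condition $\util_i^{\mathrm{NE}}\ge\util_i(\text{dev})$ and translating back through the definition of $\wtp_i$, then summing over $i$, yields $\OPT=\sum_i\wtp_i(\optalloc_i)\le\sum_i\wtp_i(\alloc_i)+R$. This step is also where I expect the main obstacle to lie: a unilateral budget change by one advertiser can move the \emph{entire} inner FPPE (pacing multipliers and all), so the outbidding bound must be run through the monotonicity/continuity of FPPE rather than a static argument; impressions with zero price and the interaction with the pacing multipliers need separate handling; and the target bundle $\optalloc_i$ is the liquid‑welfare‑optimal one, not anything the auction would naturally hand to $i$, so the argument must genuinely use that $i$'s willingness‑to‑pay for $\optalloc_i$ dominates the value credited to $i$ in $\OPT$ (this is exactly where the power of declaring a budget — as opposed to only a value‑per‑click or ROI target — is used).

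Finally, for tightness I would exhibit a small instance — two advertisers and one or two divisible impressions should suffice — together with a declared‑budget profile that is a pure Nash equilibrium of the metagame and whose liquid welfare is exactly $\OPT/2$. The equilibrium is sustained because each advertiser shades her declared budget strictly below her true willingness‑to‑pay in order to depress prices, and any unilateral budget increase triggers a price jump in the inner FPPE that exactly cancels the gain in her true utility, so no profitable deviation exists; one then checks directly that the liquid‑welfare‑optimal allocation achieves twice the equilibrium liquid welfare (indeed, in the tight instance the inequalities \textbf{(A)} and \textbf{(B)} both hold with equality, i.e.\ $\sum_i\wtp_i(\alloc_i)=R$ and $\OPT=\sum_i\wtp_i(\alloc_i)+R$).
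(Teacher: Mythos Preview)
Your decomposition into \textbf{(A)} and \textbf{(B)} is exactly the skeleton of the paper's argument, and your proof of \textbf{(A)} is correct.  The gap is in \textbf{(B)}.

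The claim that after a large budget increase agent~$i$ can ``acquire at least the impression-fractions assigned to her under $\optalloc$, paying on each at most what the equilibrium price on that impression already was'' is false.  When $i$ raises her budget, the FPPE sensitivity lemma you invoke says prices \emph{weakly increase} everywhere; in particular, other agents' pacing multipliers move toward $1$, so their bids go \emph{up}, not down.  The price $i$ faces after the deviation can exceed the old equilibrium price by an amount comparable to her budget increase, and this blow-up is unbounded when $i$ already holds a large share of the items she wants.  A ``target the full bundle $\optalloc_i$'' deviation of this kind is what the paper uses for \emph{mixed} equilibria, and there it only yields a factor of $4$; it does not give $2$.

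The paper's route to the tight factor $2$ is a \emph{marginal} deviation: increase $i$'s declared budget by an infinitesimal $\epsilon$ and take $\epsilon\to 0$.  The sensitivity lemma then bounds the total price increase by $\epsilon$, and the resulting first-order condition gives, for any agent not exhausting her true budget,
\[
\frac{\valuefuncderivative_i\bigl(\sum_j\ctrij\allocij\bigr)}{\moneycostderivative_i(\payment_i)}
\;\le\;
\frac{\sum_{j\in\purchaseseti}\price_j}{\sum_{j\in\purchaseseti}(1-\allocij)\ctrij}\,.
\]
The $(1-\allocij)$ in the denominator is the key missing idea: the price distortion caused by $i$'s own deviation is proportional to the share she \emph{already} holds, so an agent with a small allocation behaves approximately as a price-taker.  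From this inequality, together with concavity of $\valuefunc_i$ and convexity of $\moneycost_i$, one obtains $\wtp_i(\optalloc_i)\le\wtp_i(\alloc_i)+\sum_j\price_j\optallocij$ for every such agent, and summing gives \textbf{(B)}.

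Your tightness sketch is also off.  In the paper's tight example the equilibrium is \emph{truthful} reporting, not budget shading: one budgeted agent with value $K$ and budget $1$, one linear agent with value $1$, and a single item.  The budgeted agent wins the whole item at price $1$ (liquid welfare $1$), while $\OPT\to 2$ as $K\to\infty$ by splitting the item.  The tightness comes from the budgeted agent's hard budget capping her willingness to pay at $1$ regardless of her huge value, not from strategic under-declaration.
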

}

We also show that at any mixed Nash equilibrium (which is guaranteed to exist), the resulting expected liquid welfare is at most a $4$-approximation to the \OPT.  In fact, this approximation result extends to Bayesian environments, where each advertiser's private type is independently drawn from a publicly-known prior.  In this extended setting, we can likewise show that the expected liquid welfare at any Bayesian Nash equilibrium is at most a $4$-approximation to the expected \OPT.

\bledit{
\begin{theorem}
\label{thm:intro.BNE}
At any mixed or Bayesian Nash equilibrium of the metagame, the expected liquid welfare is a $4$-approximation to the expected \OPT.  
\end{theorem}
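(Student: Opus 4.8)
The plan is to prove Theorem~\ref{thm:intro.BNE} by establishing a \emph{smoothness}-style deviation inequality for the metagame and then invoking the standard lifting of such inequalities from pure to mixed and Bayes--Nash equilibria (Roughgarden; Syrgkanis--Tardos). The target statement has the form: for every type profile, every advertiser $i$, and every profile of declarations $s_{-i}$ of the other advertisers, there is a deviation $s_i'$ for advertiser $i$ --- depending only on $i$'s own type $v_i$ and on the $i$-th component of an optimal allocation $x^*$ --- whose payoff in the FPPE induced by $(s_i',s_{-i})$ satisfies
\[
u_i(s_i',s_{-i}) \;\ge\; \tfrac12\, W_i\bigl(v_i,\, x^*_i\bigr) \;-\; \beta_i(s_{-i}),
\]
where $\beta_i(s_{-i})$ is a ``competition'' term bounded by the total payments (equivalently, winning bids) extracted from the \emph{other} advertisers on the items comprising $x^*_i$. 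Since each item lies in at most one optimal bundle, summing over $i$ gives $\sum_i u_i(s_i',s_{-i}) \ge \tfrac12\OPT - \mathrm{Rev}$ with $\mathrm{Rev}$ the total FPPE revenue. At an FPPE each advertiser receives nonnegative utility and hence never pays more than her willingness-to-pay for what she wins, so $\mathrm{Rev}$ is at most the realized liquid welfare; combining this with the equilibrium inequalities $u_i(s)\ge u_i(s_i',s_{-i})$ and the bookkeeping relating $\sum_i u_i(s)$, $\mathrm{Rev}$, and the realized liquid welfare at an FPPE yields a factor-$4$ bound in the pure case, which the extension theorems then carry verbatim to mixed and Bayes--Nash equilibria. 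Relative to Theorem~\ref{thm:intro.PNE}, the loss of a factor two is exactly the price of making the deviation oblivious to the realized equilibrium (and, in the Bayesian case, to the realized types of others).

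The substance is the construction of $s_i'$ and the lower bound on $u_i(s_i',s_{-i})$. The natural choice is for advertiser $i$ to declare an essentially unbounded maximum bid together with a budget equal to twice what the bundle $x^*_i$ would cost at the prices of the \emph{reduced} FPPE in which $i$ is absent; the factor two is the first-price markup and is the source of the $\tfrac12$. To analyze the induced FPPE I would use two structural facts about first-price pacing equilibria: (i) re-inserting advertiser $i$ into a market can only (weakly) raise item prices and can only (weakly) lower every other advertiser's bids; and (ii) a budget-binding autobidder in an FPPE spends its entire budget and obtains a clicks-maximizing bundle at the prevailing prices. From (ii), $i$'s autobidder obtains at least as many clicks as $x^*_i$ contains whenever $x^*_i$ remains affordable under the declared budget, which (i) is used to guarantee; monotonicity of $i$'s concave valuation then gives value at least that of $x^*_i$, and converting this value together with the at-most-budget-sized payment through $i$'s weakly convex disutility yields the displayed bound on $u_i$ and ties it back to $W_i(v_i,x^*_i)$.

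The step I expect to be the main obstacle is precisely controlling the post-deviation FPPE prices on the items of $x^*_i$: inserting $i$ with a large budget pushes prices upward, so one must argue that the declared budget still secures a bundle with the required click count. This is where fact (ii) --- treating the autobidder as a clicks-maximizer at \emph{whatever} equilibrium prices result, rather than naively ``buying $x^*_i$'' at fixed prices --- carries the argument, and where the first-price identity ``price $=$ winning bid'' must be used carefully to bound how far prices can move; getting the accounting to close with the clean constant $\tfrac12$ (hence $4$ overall) is the delicate part. A secondary technical point is the Bayes--Nash extension: because $s_i'$ depends only on $v_i$ and $x^*$, advertiser $i$ can implement it by resampling a hypothetical profile $w_{-i}$ from the (independent) priors and deviating as if the realized profile were $(v_i,w_{-i})$; independence lets the smoothness inequality survive taking expectations, and one checks that both $\beta_i$ and the revenue bound remain valid in expectation because they are dominated by quantities measurable in the actually realized equilibrium.
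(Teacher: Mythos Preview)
Your smoothness inequality $u_i(s_i',s_{-i}) \geq \tfrac12 W_i(x^*_i) - \beta_i(s_{-i})$ cannot hold as written, and the ``bookkeeping'' you leave for the end cannot be closed. The left side is utility, measured in \emph{value} units, while $W_i$ and $\beta_i$ are in \emph{money} units; for a general convex money-cost $\moneycost_i$ these are incommensurable. A single-item counterexample: one agent with $\valuefunc_1(x)=\varepsilon x$ and $\moneycost_1(t)=t^2$ facing a competitor whose payments are at most $\varepsilon$. Then $W_1(1)=\sqrt{\varepsilon}$, any deviation yields $u_1\leq\varepsilon$, and your inequality would require $\varepsilon \geq \tfrac12\sqrt{\varepsilon}-\varepsilon$, which is false for small $\varepsilon$. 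Even for budgeted agents (identity $\moneycost_i$) the aggregation fails: you end up with $\sum_i u_i(s)\geq \tfrac12\OPT-\mathrm{Rev}$, but $u_i$ can exceed $W_i(x_i)$ arbitrarily (an agent with budget $1$ who wins value $100$ at price $1$ has $u_i=99$, $W_i=1$), so there is no way to turn $\sum_i u_i$ into a bound on liquid welfare. The paper's fix is to apply $\moneycost_i^{-1}$ to utility first and prove the money-unit inequality $\moneycost_i^{-1}(u_i)\geq \tfrac12 W_i(x^*_i)-\sum_j \mathbb E[p_j]\,x^*_{ij}$; this aggregates cleanly because $\moneycost_i^{-1}(u_i)\leq W_i(x_i)$ (after absorbing the hard budget into $\moneycost_i$) and $\sum_j \mathbb E[p_j]\leq W(x)$.

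Two secondary points. First, the deviation the paper uses sets the budget equal to (not twice) the expected cost of $x^*_i$ at the \emph{reduced} FPPE prices; the $\tfrac12$ does not come from a ``first-price markup'' but from a Jensen step on $\tau/(1+\tau)$ with $\mathbb E[1/\tau]=1$, after using \Cref{lem:fppe change budget} to bound how much re-inserting $i$ can inflate total revenue. Second, the paper does \emph{not} lift from pure to mixed/Bayesian via Roughgarden--Syrgkanis--Tardos; because the correct inequality involves the nonlinear map $\moneycost_i^{-1}$, it doesn't fit that framework. Instead the deviation is built directly for the random case (using expected prices without $i$), and the concavity of $\moneycost_i^{-1}$ is used explicitly in the aggregation. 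Your resampling idea is not needed and would not help with the units issue.
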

}

\bledit{We reiterate that our approximation bounds hold for any (distribution over) separable preferences of the advertisers.\footnote{To the best of our knowledge, the only other $O(1)$-approximation for such separable utilities in a multi-item auction is due to~\cite{BCHNL-21}, who study direct bidding in second-price auctions; their bound applies to pure Nash equilibria under an additional ROI-optimality refinement and linear valuations.  See Section~\ref{sec:related} for further discussion.}
We view the generality of our bounds as a benefit of the design paradigm of mechanisms with proxy autobidders, which provide an interface for advertisers with complex preferences to engage with a simple and robust auction format.}

\xhdr{Importance of budget constraints.} Our approximation results highlight an important practical implication:
\begin{displayquote}
\emph{Simple instructions for autobidders are sufficient to achieve good market outcomes even when advertisers have much more complex preferences.}
\end{displayquote}
An interesting follow-up question arises: Can these instructions be further simplified? Is it necessary to provide advertisers with the option to specify both hard budgets and maximum bids? To answer this, we study variants of the metagame where advertisers exclusively declare budget constraints or maximum bid constraints, respectively.
We demonstrate that the budget constraints are crucial. If advertisers exclusively specify maximum bids but not budgets, the resulting approximation factor at equilibrium can be as bad as linear in the number of advertisers. On the other hand, all our approximation results extend to the variant of the metagame with no maximum bid.

\begin{theorem}
    The results of Theorem~\ref{thm:intro.PNE} and Theorem~\ref{thm:intro.BNE} continue to hold in a model where advertisers can specify only budget constraints.  However, in a model where advertisers can specify only maximum bids, there exist instances where the liquid welfare at Bayesian Nash equilibrium is at most an $\Omega(n)$-approximation to the expected \OPT.
\end{theorem}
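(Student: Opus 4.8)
The statement has two essentially independent halves, and I would treat them separately. \emph{For the positive half}, the plan is to argue that restricting the metagame to budget-only strategies cannot violate the bounds of Theorems~\ref{thm:intro.PNE} and~\ref{thm:intro.BNE}, because those bounds are established through deviations that never use a finite maximum bid. Concretely, a budget-only strategy profile is exactly a profile of the full metagame in which every advertiser sets maximum bid $=\infty$; on such profiles the inner game, its FPPE, the allocation, and hence the liquid welfare coincide with those of the full model. So the budget-only metagame is literally the full metagame restricted to a sub-box of the strategy space. The strategy space is still compact, so mixed and Bayes--Nash equilibria still exist, and the only thing to verify is that the welfare guarantee transfers. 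For that I would revisit the proofs of Theorems~\ref{thm:intro.PNE} and~\ref{thm:intro.BNE} and observe that in each, the lower bound on equilibrium welfare is obtained by exhibiting, for every advertiser $i$, a single profitable-or-equilibrium-consistent deviation --- declaring a budget tailored to $i$'s liquid value for (a suitable scaling of) her OPT allocation --- while leaving the maximum bid unconstrained. Such deviations are budget-only, hence still available in the restricted metagame, and shrinking the strategy set only shrinks the set of deviations each advertiser must be immune to. Therefore every (pure / mixed / Bayes--) Nash equilibrium of the budget-only metagame is still immune to exactly these deviations, and the same chain of inequalities yields the $2$- and $4$-approximations. (One should also double-check that the FPPE existence/uniqueness and any monotonicity facts invoked hold with $m_i=\infty$; they do, since that is the standard FPPE setting of~\citet{CKPSSSW-22}.)

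\emph{For the negative half}, the plan is to exhibit a family of instances parameterized by $n$, together with a Bayes--Nash equilibrium of the maximum-bid-only metagame whose liquid welfare is $O(\OPT/n)$. The guiding intuition is that a \emph{flat} maximum bid couples an advertiser's per-impression aggressiveness to her total exposure: to win a genuinely contested impression she must raise $m_i$, but then her autobidder --- which knows only $m_i$ --- bids $m_i$ everywhere, so she sweeps a large block of impressions and overspends relative to her true (budgeted or concave) preferences, whereas keeping $m_i$ low forfeits the contested impressions outright. A budget decouples the two: the pacing multiplier can stay high on the good impressions while total spend is hard-capped, which is precisely the slack the positive-side proof exploits. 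I would build an instance around one advertiser of the ``high value per click but small liquid value'' type facing $\Theta(n)$ unit-demand advertisers, and arrange the (possibly stochastic) competition so that in the induced equilibrium almost all impressions go to that single advertiser --- who derives only $O(1)$ liquid value from the entire block --- while the $\Theta(n)$ unit-demand advertisers are priced out; comparing against the OPT that gives each unit-demand advertiser her impression then yields the $\Omega(n)$ gap.

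I expect the main obstacle to be this equilibrium analysis on the negative side. One must first pin down the inner-game FPPE for flat-maximum-bid profiles, including the tie-breaking that occurs when several autobidders post equal bids on identical impressions, and then show that the intended profile really is a Bayes--Nash equilibrium of the metagame --- in particular that no unit-demand advertiser can profitably outbid the sweeping advertiser on her own impression, and that the sweeping advertiser cannot profitably lower her bid. Obtaining a clean $\Omega(n)$ (rather than a mere constant) requires the construction to force the bad allocation in every equilibrium of the chosen instance, or at least to make the intended equilibrium robust, and that is where most of the effort will go; by contrast the positive half should reduce to a bookkeeping check that the earlier arguments localize to budget deviations.
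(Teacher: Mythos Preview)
Your positive half is essentially the paper's argument. One small correction: in the pure Nash case (Theorem~\ref{thm:intro.PNE}), the deviation actually used in the proof of the key lemma (\Cref{lem:poa pure allocation price relation}) keeps the reported value at a specific finite level $\reportval_i\primed = \valuefuncderivativei(\cdot)/\moneycostderivativei(\cdot)$ rather than $\infty$. So your claim that the earlier proofs ``never use a finite maximum bid'' is not literally true for the pure case; the paper explicitly notes that the deviation can be modified to $\reportval_i\doubleprimed = \infty$ with the same budget, and that the argument goes through unchanged. This is the only bookkeeping you would need to add. For mixed and Bayes--Nash the deviations are already of the form $(\infty,\reportwealth_i\primed)$, exactly as you say.

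For the negative half, your intuition about why max-bid-only fails is right, but your proposed construction has the roles reversed relative to what actually makes the equilibrium work, and I think you would get stuck trying to stabilize it. You want one high-value low-budget advertiser to sweep while $\Theta(n)$ unit-demand advertisers are priced out. The difficulty is that a low-budget sweeper cannot sustain a high bid (she would violate her own budget), so she must bid low, and then nothing stops the unit-demand advertisers from overbidding her. The paper's construction inverts this: there are $N$ advertisers with \emph{very high} value $N^2$ and \emph{small} budget $\tfrac12$ (these are the ones who should share the single item in \OPT, contributing $N/2$), plus two advertisers with small value $2$, larger budget $1$, and \emph{Bayesian} types ($v=2$ w.p.\ $1-\epsilon$, else $0$). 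In equilibrium the $N$ high-value agents each bid $\tfrac12$ and the two stochastic agents bid $1$ when active, so the latter win and liquid welfare is $O(1)$. The crucial point is why the high-value agents cannot raise their bid above $\tfrac12$: with probability $\epsilon^2$ both stochastic agents are absent, the deviator would then win the entire item alone at her bid price, and any bid above $\tfrac12$ violates her hard budget, yielding $-\infty$ utility. The Bayesian uncertainty is doing the real work, and it is placed on the \emph{sweepers}, not on the agents carrying the \OPT\ welfare. With this orientation the equilibrium verification is short; with yours, I do not see how to prevent the priced-out agents from profitably deviating.
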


Intuitively, the presence of budgets allows advertisers to hedge against unexpectedly high expenditures, which can have a potentially catastrophic impact on their utility. 



\xhdr{Characterization of equilibria for homogeneous items.}
\bledit{Finally, to shed light on the equilibrium structure, we also explicitly characterize the pure Nash equilibria in the special case where all ad impressions are homogeneous.  We show that a pure Nash equilibrium always exists in this homogeneous case and can be found in polynomial time.  In fact, there are two distinct \emph{types} of equilibria, which we call low-price equilibria and high-price equilibria.  There may be multiple equilibria of each type, corresponding to a range of implementable per-unit prices.  In a low-price equilibrium, all advertisers specify budget constraints that target their minimal utility-maximizing allocations at a market-clearing price.  In a high-price equilibrium, one or more advertisers forego budget constraints and instead impose high maximum bids, effectively forcing a higher price.  This multiplicity demonstrates the rich strategic landscape of the metagame, even in the special case where all ad impressions are identical.}

\xhdr{Summary of Techniques.}
Let us now return to our approximation results and provide some additional intuition into the proof ideas.  At any pure Nash equilibrium, each advertiser will face a Pareto curve that represents the tradeoff between the value she receives and the budget constraint she declares to her autobidder. Increasing the declared budget allows the advertiser to win more, but at a potentially decreasing bang-per-buck.  
The first step of our analysis is to analyze this tradeoff. 
%
{
To that end, we will employ a useful interpretation of FPPE due to~\cite{CKPSSSW-22}: the inner FPPE of autobidders corresponds to a market-clearing outcome that assigns a price to each impression, so that each autobidder obtains a preferred bundle under those prices.
}
Since increasing one autobidder's budget has downstream effects on how other autobidders will behave (even keeping the specified constraints of other advertisers fixed), adjusting one's budget can cause the market-clearing prices to change.
{
This gives us the perspective we need to understand an advertiser's tradeoff when choosing a budget: the sensitivity of FPPE prices and market-clearing allocations to budget changes.
}

{
Unfortunately, this sensitivity can be unbounded in general.  An advertiser might need to increase their spend by an arbitrary amount to secure a target increase in allocation.
}
\bledit{For example, consider two advertisers competing for a single impression type.  If the advertisers set finite budgets and no maximum bids, then at FPPE they will spend their budgets exactly and split the item in proportion to their declared budgets.  (See Section~\ref{sec:example} for examples and intuition for FPPE.)  This means that if the first advertiser sets a budget of $1$, the second advertiser would need to set a budget of $9$ to receive $90\%$ of the item (at a price of $10$ per unit), but a budget of $99$ to receive $99\%$ of the item (at a price of $100$ per unit).  In such a scenario, large changes in price might be required to implement a small change in allocation.}

\bledit{
However, a key insight is that this hyper-inflation only occurs when one advertiser obtains a very large allocation of an impression type.}
As long as an advertiser is not winning an excessive share of the impressions, the market-clearing prices faced by her autobidder will not be overly sensitive to small changes in her allocation.  This is our main technical lemma, Lemma~\ref{lem:poa pure allocation price relation}, which relates the rate of substitution between valuation utility and payment disutility of an advertiser at equilibrium to the FPPE prices and the share of each impression she obtains.  It implies that if an advertiser is obtaining significantly less value than she would in the optimal allocation, then either (a) the prices are low and her allocation is small, in which case by Lemma~\ref{lem:poa pure allocation price relation} she can improve her utility by increasing her budget (contradicting the equilibrium assumption), or (b) her autobidder must be facing high market-clearing prices. But in the latter case, we can employ a standard trick for bounding the price of anarchy: the lost liquid welfare from the advertiser can be charged against revenue collected from the other advertisers who are paying these high prices. {Here we use the fact that since liquid welfare measures total willingness to pay, the revenue collected is always a lower bound on liquid welfare.}  Putting these pieces together yields our $2$-approximation result for pure Nash equilibrium.

{
The extension to mixed (and Bayesian) Nash equilibrium requires a slightly different approach.  Since the messages (and hence outcomes and prices) may be random, it is less straightforward to characterize each advertiser's tradeoff between allocation and spend.  So instead of quantifying that tradeoff precisely, we consider a specific budget-setting strategy that each agent will consider.  Namely, each advertiser considers the \emph{expected} FPPE market-clearing prices (over the specified constraints of others) that would be obtained \emph{if she were not present in the market}.  She then calculates the expected cost of her expected optimal allocation under this price distribution; this will be the deviating budget she considers.  We show that if an advertiser is receiving low expected value at equilibrium and the FPPE prices are low in expectation, this deviation will be utility-improving, a contradiction.  A low-value advertiser must therefore be facing high expected prices, and we can once again charge any shortfall of liquid welfare to the revenue generated by those high prices.
}

\xhdr{Organization.} We start by formalizing the model and providing necessary preliminaries and
notations in \Cref{sec:prelim}. In \Cref{sec:example} we study two simple examples and analyze their pure Nash equilibrium in the metagame. In \Cref{sec:poa pure,sec:poa mixed}, we present the approximation results of the liquid welfare at equilibrium. We study the extension of our model in the Bayesian environments in \Cref{sec:bayesian}, and several variants of our model under restrictive message space (i.e., advertisers can only exclusively declare hard budgets, or maximum bids) in \Cref{sec:value/budget reporting only}.  Finally, in Section~\ref{sec:single item pure nash} we characterize and prove existence of pure Nash equilibria for single-item (aka., homogeneous) instances.

\subsection{Additional Related Work}
\label{sec:related}

\paragraph{Incentives in Autobidding.} A recent line of work has studied the incentives of value-maximizing advertisers to truthfully describe their preferences to their autobidders. \cite{MP-23} show that the advertisers may benefit by misreporting their preference to autobidders when the platform does not have commitment power and can change the auction rule. \cite{LT-22} conduct an empirical investigation using numerical examples, illustrating the potential advantages for advertisers when misreporting to autobidders in a second-price auction.  Conceptually closest to our work is \cite{AMP-23}, who introduce the concept of auto-bidding incentive compatibility (AIC)
and show that first-price auction with uniform bidding satisfies AIC (while many other auction formats do not).
Crucially, all of these works assume that both advertisers and their autobidders are value-maximizers with similar constraints, so the advertisers are able to declare their true preferences to their autobidders.  In contrast, in our model the advertisers are utility-maximizers and they can only give their autobidders simple constraints (hard budget, maximum bid) that may not be rich enough to capture their true complex preferences.  In particular, the induced mechanism is not direct-revelation and hence incentive compatibility does not apply; we focus instead on Nash equilibria.

\xhdr{Metagame between no-regret learning agents.}
\cite{KN-22a} explores the concept of a metagame involving rational agents who utilize regret-minimizing learning algorithms to play games on their behalf,
and investigates
whether the agents
are incentivized
to manipulate or misrepresent their true preferences in various classic games.
In \cite{KN-22b}, the authors delve into a similar problem but within the context of auctions involving two agents with linear utility, 
showing that agents have incentive to misreport their true values in a second-price auction when both agents use multiplicative-weights learning algorithms. On the other hand, they find that truthful reporting forms a Nash equilibrium in the first-price auction when both agents use mean-based learning algorithms.
Compared to both \cite{KN-22a} and \cite{KN-22b}, this present paper considers a different auction setup with budget constraints, focuses on cases where advertiser preferences and autobidder instructions are different, and abstracts away from the learning process by assuming that the autobidders converge to bidding according to an FPPE of the simultaneous auction game. 

\xhdr{Price of anarchy for non-quasi linear agents} There is a long line of literature about the price of anarchy (PoA) -- the approximation between worst equilibrium and best outcome -- in the context of liquid welfare for non-quasi linear agents. Besides works already mentioned above, \cite{DP-14} introduce the concept of liquid welfare for agents with hard budget constraints. They prove the PoA for posting market clearing prices and the clinching auction. \cite{LX-15} design a sampling mechanism with a better PoA guarantee. All mechanisms studied in \cite{DP-14,LX-15} are truthful mechanisms. For non-truthful mechanisms, \cite{CST-15,CV-16,CV-18} study the PoA of the simultaneous Kelly mechanism, while \cite{AFGR-17} studies the PoA of the simultaneous first-price auction and second-price auction with no over-bidding. \cite{BCHNL-21} studies the PoA under ROI-optimal pure Nash equilibrium of the simultaneous second price auction, establishing a $2$-approximation and that such a pure Nash equilibrium always exists. \cite{ABM-19,LMP-22,meh-22,deng2022efficiency} study the PoA for the autobidders under various auctions. \cite{FT-23,GLLLS-23,LPSZ-23} study the dynamic of no-regret learning/budget-pacing players (autobidders) and provide the liquid welfare guarantees. It is important to highlight that
these works treat the autobidder constraints as exogenous and do not model them as strategic choices, focusing rather on the auction and autobidders' interaction. Additionally, except \cite{BCHNL-21}, the aforementioned studies assume that agents have a linear disutility for spending money up to a fixed budget. In contrast, our paper (like \citealp{BCHNL-21}) considers agents with a general convex disutility function for spending money with a hard budget.

\section{Model and Preliminaries}
\label{sec:prelim}

\paragraph{Agent models.}
There are $n\geq 2$ agents (advertisers)
and $m$ divisible items (impressions).
The outcome for agent $i$ is
$(\alloci, \paymenti)$,
where $\alloci = (\alloc_{i1}, \dots, \alloc_{im})
\in [0, 1]^m$
is the allocation for each item $j$ and
$\paymenti\in\reals_+$ is the payment.
Given allocation $\alloci$,
agents receives
$\sum_{j\in[m]}\ctrij\allocij$
number of clicks,\footnote{From this point on we will assume the event of interest is ``clicks'' for expositional convenience.}
where $\ctrij\in \reals_+$
is the \emph{click-through rate}\footnote{The click-through rate can also be interpreted as conversion rate or other related concepts in different applications.}
of each item $j$ for agent $i$.

Agent $i$'s  von Neumann–Morgenstern utility is
parameterized by
her \emph{type
$(\valuefunctioni,
\truewealthi, \moneycosti)$}:
Given outcome $(\alloci,
\paymenti)$,
agent $i$'s utility $\util_i(\alloci,\paymenti)$
is defined as
\begin{align*}
    \util_i(\alloci,\paymenti)
    \triangleq
    \left\{
    \begin{array}{ll}
    \valuefunctioni\left(
    \sum_{j\in[m]}
    \ctrij\allocij\right)
     -
    \moneycosti\left(\paymenti\right)&
    \quad \text{if } \paymenti \leq \truewealthi\\
    -\infty     &
     \quad \text{if } \paymenti > \truewealthi
    \end{array}
    \right.
\end{align*}
where
$\valuefunctioni:\reals_+\rightarrow \reals_+$
is the \emph{valuation function}
mapping from the total number of received clicks
to agent's valuation;
$\truewealthi\in \realsinf$
is the \emph{hard budget};\footnote{We use notation $\realsinf$
to denote the set of all non-negative real numbers and infinite,
i.e., $\realsinf = \reals_+ \cup \{\infty\}$.}
and
$\moneycosti:\reals_+ \rightarrow\reals_+$
is the \emph{money cost function}
mapping from the payment to the disutility for spending money.\footnote{An interpretation of
money cost function $\moneycosti$ is as follows:
Agent $i$ has an outside option value for her money,
e.g., spending on another advertising platform.
Then $\moneycosti(\paymenti)$
is the utility hat buyer $i$ foregoes by paying
$\paymenti$ to the current advertising platform (seller).}
We assume $\valuefunctioni$
is differentiable,
weakly concave, weakly increasing, and $\valuefunctioni(0) = 0$;
and money cost function
$\moneycosti$
is differentiable,
weakly convex,
weakly increasing, and $\moneycosti(0) = 0$.
\bledit{We will also assume that either $w_i < \infty$ or $\moneycosti(t_i) > 0$ for some $t_i$.}\footnote{This rules out agents with no value for money or spending constraint. It ensures liquid welfare is well-defined.}
We write $\valuefunctionderivativei(\sum_j\ctrij\allocij)$
and
$\moneycostderivativei(\paymenti)$
as the derivative of $\valuefunctioni$
and $\moneycosti$
at $\sum_j\ctrij\allocij$
and $\paymenti$, respectively.
\yfdelete{Finally, for every agent $i$,
we normalize $\moneycosti$ such that $\moneycostderivativei(0) = 1$,
i.e.,  the first infinitesimal unit of money is worth one infinitesimal unit of value.}
Agents with this utility model are called \emph{general agents}.


Three classic models can be viewed as special cases of general agents.
\begin{enumerate}
    \item \textsl{(linear utility)}
    An agent $i$ with linear utility with type $\vali\in \reals_+$, hereafter
    \emph{linear agent},
    has
    linear valuation function
    $\valuefunctioni(\sum_{j}\ctrij\allocij)
    = \val_i\cdot \sum_{j}\ctrij\allocij$
    where $\val_i$ is her \emph{value per click},
    no hard budget (aka., $\truewealthi = \infty$),
    and identity money cost function $\moneycosti(\paymenti) = \paymenti$.

    \item \textsl{(budgeted utility)}
    An agent $i$ with budgeted utility with type $(\vali, \truewealthi)\in\reals_+\times \realsinf$, hereafter
    \emph{budgeted agent}, has
    linear valuation function
    $\valuefunctioni(\sum_{j}\ctrij\allocij)
    = \val_i\cdot \sum_{j}\ctrij\allocij$
    where $\val_i$ is her \emph{value per click},
    hard budget $\truewealthi$,
    and identity money cost function $\moneycosti(\paymenti) = \paymenti$.

    \yfedit{
    \item \textsl{(value-maximizing utility)}
    A \emph{value-maximizing} agent $i$ with type $(\vali, \truewealthi) \in \reals_+ \times \reals_+$,
    has linear valuation function $\valuefunctioni(\sum_{j}\ctrij\allocij)
    = \val_i\cdot \sum_{j}\ctrij\allocij$
    where $\val_i$ is her \emph{value per click},
    hard budget $\truewealthi$,
    and zero money cost function $\moneycosti(\paymenti) = 0$ for $t_i \leq w_i$.
    {Value-maximizing agents} have been studied extensively in 
    the recent autobidding literature~\citep{balseiro2021landscape}.
    }
\end{enumerate}
Clearly, every linear agent is a budgeted agent,
and every budgeted agent is a general agent.

\xhdr{First-price pacing equilibrium for budgeted agents (autobidders)}
Before we introduce our solution concept for general agents, we first revisit a pivotal concept: the \emph{first-price pacing equilibrium}, introduced by \cite{CKPSSSW-22} to study budgeted agents (autobidders) in an advertising market.\footnote{\yfedit{For ease of presentation, we use an equivalent definition of the first-price pacing equilibrium. See \Cref{apx:FPPE original definition} for the original definition from \cite{CKPSSSW-22} and the proof of equivalence.}}

\begin{definition}[First-price pacing equilibrium]
\label{def:FPPE}
For budgeted agents with types $\{(\val_i, \truewealth_i)\}_{i\in[n]}$, a \emph{first-price pacing equilibrium (FPPE)} is a tuple $(\price, \alloc, \payment)$ of per-unit price $\pricej\in \reals_+$ for each item $j$, allocation $\alloci\in[0,1]^m$, and payment $\paymenti\in\reals_+$ for each agent $i$ that satisfies the following properties:
\begin{enumerate}
    \item \textsl{(highest bang-per-buck)} if $\allocij > 0$, then $j \in \argmax_{j'\in[m]}\frac{\val_i\ctr_{ij'}}{\price_{j'}}$
    and $\vali\ctrij \geq \pricej$;
    \item \textsl{(supply feasibility)} $\sum_{i\in[n]}\allocij \leq 1$ and equality holds if $\pricej > 0$;
    \item \textsl{(payment calculation)} $\paymenti = \sum_{j\in[m]}\allocij\pricej$;
    \item \textsl{(budget feasibility)} $\paymenti \leq \truewealthi$
    and equality holds if $\max_{j\in[m]}\frac{\val_i\ctr_{ij}}{\pricej} > 1$.
\end{enumerate}
\end{definition}

{As noted by~\cite{CKPSSSW-22}, the per-unit prices, allocation, and payments in FPPE can be interpreted as the outcome of a budget-pacing game for first-price auctions, as follows.} Each budgeted agent (autobidder) $i$ first determines an item-independent \emph{pacing multiplier $\pacescalar_i = 1 \wedge (\min_{j\in[m]}\sfrac{\price_{j}}{\vali\ctr_{ij}}) \in[0, 1]$}. The agent then submits bid $\pacescalar_i\vali\ctrij$ for each item $j$.\footnote{Here the goal of each autobidder is to maximize the total value (or equivalently total number of clicks due to the linear valuation function) received subject to the budget constraint. Under this interpretation, the pacing multiplier can be considered as the Lagrangian  multiplier of the budget constraint.} Then the allocation is the highest-bids-win and the payment is computed under the first-price format. As a sanity check, under this interpretation, the ``highest bang-per-buck'' property is satisfied due to the definition of pacing scalar $\pacescalar_i$ and the highest-bids-win allocation construction. 

The FPPE can also be interpreted as a (supply-unaware) competitive equilibrium: if we fix the per-unit price $\pricej$ for each item $j$, the allocation $\alloci$ maximizes the utility of agent $i$ subject to her budget $\truewealthi$, without taking into account the supply feasibility.  We will occasionally refer to this interpretation.

%


\begin{lemma}[\citealp{CKPSSSW-22}]
\label{lem:FPPE uniqueness}
    For any set of budgeted agents, FPPE exists. Moreover, the per-unit price of each item and utility of each budgeted agent are unique.
\end{lemma}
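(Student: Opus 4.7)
The plan is to recognize FPPE as the competitive equilibrium of a quasi-linear Fisher market, with budgeted agent $i$ having linear per-click values $v_i \phi_{ij}$ and budget $w_i$, and the quasi-linearity arising because agents may leave their budget partially unspent exactly when every item offers bang-per-buck strictly below $1$. Existence and uniqueness would then follow from a standard Eisenberg--Gale-style convex program, adapted to allow unspent budget.

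First I would write down the convex program with primal variables $x_{ij} \geq 0$, per-agent utilities $u_i \geq 0$, and per-agent ``unspent budget'' $\delta_i \geq 0$, with objective $\sum_i w_i \log u_i - \sum_i \delta_i$, subject to $u_i \leq \sum_j v_i \phi_{ij} x_{ij} + \delta_i$ for each $i$ and $\sum_i x_{ij} \leq 1$ for each $j$. Existence of an optimal primal solution follows from continuity of the objective on a compact feasible region (after harmless upper bounds on $u_i$ and $\delta_i$ implied by the values and budgets). Slater's condition holds trivially, so strong duality gives existence of optimal dual multipliers; I would attach the multiplier $p_j$ to the supply constraint on item $j$. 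I would then verify that the KKT conditions match the four FPPE properties in Definition~\ref{def:FPPE}: stationarity in $x_{ij}$ with complementary slackness gives the highest-bang-per-buck condition and $v_i \phi_{ij} \geq p_j$ whenever $x_{ij} > 0$; complementary slackness on $\sum_i x_{ij} \leq 1$ gives supply feasibility with equality when $p_j > 0$; the payment identity $t_i = \sum_j x_{ij} p_j$ follows from rewriting the budget constraint at the optimum; and stationarity in $\delta_i$ combined with complementary slackness on $\delta_i \geq 0$ gives budget feasibility and forces equality exactly when some item's bang-per-buck exceeds $1$.

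For uniqueness, I would exploit strict concavity of $\log$ in the objective: although the allocations $x_{ij}$ need not be unique across optimal solutions, the objective $\sum_i w_i \log u_i - \sum_i \delta_i$ is strictly concave in $(u, \delta)$, so any two optima share the same $u_i$ and $\delta_i$. In particular, each agent's utility $u_i = \sum_j v_i \phi_{ij} x_{ij} + \delta_i$ is unique. For the per-unit prices, I would argue by contradiction: if two FPPEs had distinct price vectors $p \neq p'$, averaging the two feasible primal solutions yields a feasible point that, by strict concavity, strictly improves the objective unless $u_i$ and $\delta_i$ coincide in both solutions; given that they do, a direct exchange argument using the KKT stationarity conditions in $x_{ij}$ (which tie $p_j$ to the common ratio $v_i \phi_{ij} / (\text{Lagrange multiplier for } u_i)$ across all agents buying item $j$) shows that $p = p'$ on every item with positive demand, and $p_j = 0$ otherwise by complementary slackness.

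The main obstacle I expect is the uniqueness of prices rather than of utilities: uniqueness of $u_i$ and $\delta_i$ falls out immediately from strict concavity of the logarithm, but translating this into uniqueness of $p_j$ requires care, since an item whose aggregate demand happens to equal its supply at the equilibrium outcome could in principle admit a range of KKT prices. Handling this cleanly is exactly the technical core of the convex-program analysis of \cite{CKPSSSW-22}, and rather than reproduce their argument in full I would cite it here, contributing only the verification that the KKT system of the program above coincides with the equivalent bang-per-buck formulation of FPPE used in Definition~\ref{def:FPPE}.
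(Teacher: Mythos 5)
This lemma is not proved in the paper at all: it is imported wholesale from \cite{CKPSSSW-22} (the only related material in the appendix is the verification that Definition~\ref{def:FPPE} is equivalent to the original definition of FPPE, not a proof of existence or uniqueness). Your sketch is essentially the argument that underlies the cited result, namely the quasi-linear Eisenberg--Gale convex program for Fisher markets with linear values and earmarked budgets, so there is no divergence of approach to report --- you are reconstructing the proof the authors chose to cite. Two remarks on the sketch itself. First, the objective $\sum_i w_i\log u_i - \sum_i\delta_i$ is strictly concave in $u$ but only \emph{linear} in $\delta$, so strict concavity gives you uniqueness of the $u_i$ across optima but not, as you claim, of the $\delta_i$; fortunately uniqueness of the program's $u_i$ is already enough for the lemma, since under the FPPE correspondence agent $i$'s equilibrium utility equals $u_i - w_i$ (value plus unspent budget, minus the full budget). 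Second, your treatment of price uniqueness is the only place where the argument is genuinely incomplete --- the averaging-plus-exchange step is not spelled out, and an item whose demand exactly meets supply can a priori support a range of KKT multipliers --- but you identify this honestly and defer it to \cite{CKPSSSW-22}, which is exactly what the paper does for the entire lemma. As a self-contained proof it therefore has a gap at the price-uniqueness step; as a justification for a cited lemma it is more than the paper itself provides.
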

Given \Cref{lem:FPPE uniqueness}, we denote
$\price(\{(\val_i, \wealth_i)\}_{i\in[n]})$,
$\alloc^\allocselectionrule(\{(\val_i, \wealth_i)\}_{i\in[n]})$,
and $\payment^\allocselectionrule(\{(\val_i, \wealth_i)\}_{i\in[n]})$
by the unique per-unit prices
and the corresponding allocation, payment (under tie-breaking rule $\allocselectionrule$)
in the FPPE for budgeted agents with types $\{(\val_i, \wealth_i)\}_{i\in[n]}$,
respectively.\footnote{\yfedit{We assume that the seller decides a tie-breaking rule $\allocselectionrule$ exogenously ex ante. 
In real-world applications, whatever tie-breaking rule is selected by the auction designer, the autobidders are expected to effectively ``implement'' the appropriate market-clearing tie-breaking rule by making micro-adjustments to their bids across time, in order to hit their budgets in aggregate. Our single-shot model abstracts away from this dynamic behavior, but it motivates our focus on FPPE allocations.
Importantly, our results
are independent of the specific choice of $\allocselectionrule$. When it is clear from the context, we omit mentioning $\allocselectionrule$ in our notation.}}

Let us present two examples to illustrate the concept of FPPE; we reuse these examples in Section \ref{sec:example}.
\begin{example}[Two linear agents and single item]
\label{example:linear agents}
Suppose there are two linear agents and one item. Let us assume that the click-through rates are the same for both agents, i.e., $\ctr_{11} = \ctr_{21} = 1$. Additionally, we assume that the value per click $\val_1$ for agent 1 is greater than or equal to the value per click $\val_2$ for agent 2, i.e., $\val_1 \geq \val_2$.

In the FPPE, the unique per-unit price for the item is $\price_1 = \val_1$. If $\val_1 > \val_2$, the allocation in the FPPE is also unique, with $\alloc_{11} = 1$ and $\alloc_{21} = 0$. On the other hand, if $\val_1 = \val_2$, any allocation $\alloc_{11}, \alloc_{21}\in[0, 1]$ satisfying $\alloc_{11}+\alloc_{21} = 1$, along
with the unique per-unit price $\price_1 = \val_1$, forms an FPPE.
\end{example}

\begin{example}[Two budgeted agents and two items]
\label{example:budgeted agents}
Suppose there are two budgeted agents and two items. Let us assume that the click-through rate $\ctr_{ij}$ is given by $\frac{1}{2} + \frac{1}{2}\indicator{i=j}$ for each $i\in[2]$ and $j\in[2]$, i.e., each agent $i$ favors item $i$ than the other item. The value per click is the same for both agents, i.e., $\val_1 = \val_2 = 1$. Both agents have a budget constraint of $\truewealth_1 = \truewealth_2 = \frac{1}{2}$.

In the FPPE, both the per-unit price and the allocation are unique. Specifically, we have $\price_1 = \price_2 = \frac{1}{2}$, and the allocation $\alloc_{ij} = \indicator{i = j}$ for each $i\in[2]$ and $j\in[2]$. In other words, both agents receive their favorite items as per the allocation.
\end{example}

\xhdr{Metagame of strategic budget selection for general agents.}
An FPPE is defined only for budgeted agents.  For more general agents, we will take inspiration from autobidding platforms in practice and imagine the agents are provided an interface to report a budgeted agent's type that will specify the behavior of an autobidder.
This defines a \emph{metagame of strategic budget selection} for agents with general utility models:

\begin{definition}[Metagame of strategic budget selection]
\label{def:metagame}
Each general agent $i$ decides on a message $(\reportval_i, \reportwealth_i) \in (\realsinf)^2$ as her
report
to the seller.\footnote{Throughout the paper, we use notation $\tilde{}$ to denote the budgeted utility model that an agent reports to the seller.} Given reported message profile $\{(\reportval_i, \reportwealth_i)\}_{i\in[n]}$, the seller implements  allocation $\alloc(\{(\reportval_i, \reportwealth_i)\}_{i\in[n]})$ and payment $\payment(\{(\reportval_i, \reportwealth_i)\}_{i\in[n]})$ induced by the FPPE, assuming that agents have budgeted utility with types $\{(\reportval_i, \reportwealth_i)\}_{i\in[n]}$.
\end{definition}

Going forward, we will refer to the metagame of budgeted utility reporting as simply the \emph{metagame} for the sake of brevity. Similarly, the FPPE induced by a message profile will be referred to as the \emph{inner FPPE}.
Given the structure of FPPE, the message $(\reportval_i, \reportwealth_i)$ reported by agent $i$ in the metagame can also be interpreted as the constraints on the maximum bid $\reportval_i$ and constraints on the maximum payment (hard budget) $\reportwealth_i$ that agent $i$ specifies to her autobidder. 
The agent has the option to report $\reportval_i = \infty$ ($\reportwealth_i = \infty$), indicating the absence of any constraint on the maximum bid or maximum payment, respectively.

\bledit{
\begin{remark}[Return on Investment Constraints]
\label{remark:ROI}
Another common type of autobidding constraint is an (aggregate) ROI constraint, which bounds the ratio between the total number of allocated clicks and the total payment.  We note that, for FPPE, a maximum bid $\reportval_i$ is equivalent to an aggregate ROI constraint.  Indeed, due to the ``highest-bang-per-buck'' property of FPPE, all items allocated to an autobidder have the same ROI, which is equal to the equilibrium bid.  So an ROI constraint of the form $t_i \leq \gamma \sum_j \phi_{ij}x_{ij}$ excludes equilibrium bids higher than $\gamma$, and a maximum bid of $\reportval_i$ guarantees an average payment of at most $\reportval_i$ per click.
\end{remark}
}



With slight abuse of notations, we use $\util_i(\reportval_i, \reportwealth_i, \reportval_{-i}, \reportwealth_{-i})$,
$\alloc_i(\reportval_i, \reportwealth_i, \reportval_{-i}, \reportwealth_{-i})$, $\payment_i(\reportval_i, \reportwealth_i, \reportval_{-i}, \reportwealth_{-i})$ to represent the utility, allocation and payment of agent $i$ in the metagame when agent $i$ reports message $(\reportval_i, \reportwealth_i)$ and other agents report message $(\reportval_{-i}, \reportwealth_{-i}) \triangleq \{(\reportval_{i'}, \reportwealth_{i'})\}_{i'\in[n]\backslash\{i\}}$.\footnote{Throughout the paper, we use notation $-i$ to denote other $n-1$ agents excluding agent $i$.}
Similarly, we use $\pricej(\reportval_i, \reportwealth_i, \reportval_{-i}, \reportwealth_{-i})$ to represent the per-unit price of item $j$ of the inner FPPE given message profile $\{(\reportval_i, \reportwealth_i), (\reportval_{-i}, \reportwealth_{-i})\}$.
Now we present a useful lemma that characterizes the relationship between agents' utility, payment, and per-unit prices. Essentially, it suggests that for each agent, her payment along with the per-unit prices of the inner FPPE serve as sufficient statistics for computing her utility.
Its proof is straightforward given the definitions of FPPE and metagame, and is deferred to \Cref{apx:computeutility}.
\begin{restatable}{lemma}{computeutility}
\label{lem:compute utility}
    In the metagame, for every agent $i$ with type $\{\valuefunctioni, \truewealthi,\moneycosti\}$
    and every message profile,
    let $\paymenti$ be the payment of agent $i$ and $\price$
    be the per-unit prices of the inner FPPE.
    Then agent $i$'s utility $\util_i$ satisfies
    \begin{align*}
        \util_i =
    \left\{
    \begin{array}{ll}
     \valuefunctioni\left(
        \left(
        \max_{j\in[m]}\frac{\ctrij}{\pricej}\right)\cdot \paymenti\right) -
        \moneycosti\left(
        \paymenti
        \right)
        &
    \quad \text{if } \paymenti \leq \truewealthi\\
    -\infty     &
     \quad \text{if } \paymenti > \truewealthi
    \end{array}
    \right.
    \end{align*}
\end{restatable}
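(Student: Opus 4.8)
The plan is to split on whether agent $i$'s realized payment exceeds her true hard budget, dispose of the over-budget case at once, and then reduce the remaining case to a single ``click--payment'' identity that falls out directly from the defining properties of FPPE.

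In detail: fix agent $i$, her true type $(\valuefunctioni,\truewealthi,\moneycosti)$, and a message profile, and let $(\alloci,\paymenti)$ be the allocation and payment of the inner FPPE and $\price$ its per-unit prices. In the metagame, agent $i$'s realized utility is $\util_i(\alloci,\paymenti)$ evaluated at her \emph{true} type, so if $\paymenti>\truewealthi$ it equals $-\infty$ by definition, which matches the claimed formula. When $\paymenti\le\truewealthi$, the definition of a general agent gives $\util_i=\valuefunctioni\bigl(\sum_{j\in[m]}\ctr_{ij}\allocij\bigr)-\moneycosti(\paymenti)$; since the $\moneycosti(\paymenti)$ term already agrees with the claimed formula, it remains only to prove the identity
\[
\sum_{j\in[m]}\ctr_{ij}\allocij \;=\; \Bigl(\max_{j\in[m]}\tfrac{\ctr_{ij}}{\pricej}\Bigr)\cdot\paymenti .
\]

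To establish this identity I would show that every item agent $i$ wins carries the same clicks-per-dollar ratio $\rho_i:=\max_{j}\ctr_{ij}/\pricej$. Indeed, the highest-bang-per-buck property of FPPE (part~1 of Definition~\ref{def:FPPE}) says that $\allocij>0$ forces $j\in\argmax_{j'}\reportval_i\ctr_{ij'}/\price_{j'}$; for a report $\reportval_i\in(0,\infty)$ the constant $\reportval_i$ cancels from the $\argmax$, so $\ctr_{ij}/\pricej=\rho_i$ whenever $\allocij>0$. Summing over the items she wins and then invoking the payment-calculation property (part~3 of Definition~\ref{def:FPPE}) gives $\sum_j\ctr_{ij}\allocij=\rho_i\sum_j\pricej\allocij=\rho_i\paymenti$, which is exactly the desired identity, and the lemma follows.

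The only step requiring genuine care, and the main obstacle, is the boundary behaviour. When $\reportval_i\in\{0,\infty\}$ the cancellation above is not literally valid and one instead appeals to the original formulation of FPPE (see \Cref{apx:FPPE original definition}), under which a budget-pacing autobidder with no bid cap still concentrates its spend on the cheapest clicks, preserving the identity; and when some $\pricej=0$ with $\ctr_{ij}>0$ one has $\rho_i=\infty$, in which case the FPPE conditions force $\paymenti=0$ as well, so both sides vanish under the convention $0\cdot\infty=0$. These are bookkeeping rather than substance and I would relegate them to the appendix; the core of the proof is the one-line fact that in any FPPE an agent's total clicks equal $\rho_i$ times her total spend.
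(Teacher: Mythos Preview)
Your proposal is correct and follows essentially the same argument as the paper: split on whether $\paymenti>\truewealthi$, then in the within-budget case use the highest-bang-per-buck property of FPPE to conclude that every positively-allocated item has the same ratio $\ctrij/\pricej=\max_{j'}\ctr_{ij'}/\price_{j'}$, and finish by summing and invoking the payment-calculation property. The paper's proof is the same one-liner, only it omits the boundary discussion you flag (zero prices, $\reportval_i\in\{0,\infty\}$) entirely.
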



\xhdr{Equilibria in the metagame.}
We are interested in Nash equilibria for agents with public types.~
\footnote{In \Cref{sec:bayesian}, we extend our model to the Bayesian environment with private types.}

\begin{definition}
Consider agents with types $\{(\valuefunctioni, \truewealthi, \moneycosti)\}_{i\in[n]}$.
A \emph{pure Nash equilibrium} is a message profile
    $\{(\reportval_i, \reportwealth_i)\}_{i\in[n]}$ such that
    for every agent $i$ and every
    message $(\reportval_i\primed, \reportwealth_i\primed)$,
    \begin{align*}
        \util_i(\reportval_i, \reportwealth_i, \reportval_{-i}, \reportwealth_{-i})
        \geq
        \util_i(\reportval_i\primed, \reportwealth_i\primed, \reportval_{-i}, \reportwealth_{-i}).
    \end{align*}
 A \emph{mixed Nash equilibrium} is a randomized message profile\footnote{Throughout the paper, we use bold symbols (e.g., $\randomreportval,\randomreportwealth$) to denote random variables and their corresponding distributions.} $\{(\randomreportval_i, \randomreportwealth_i)\}_{i\in[n]}$ such that
the random messages are mutually independent, and for every agent $i$ and every message $(\reportval_i\primed, \reportwealth_i\primed)$,
    \begin{align*}
        \expect[
        (\reportval_{i},\reportwealth_{i})
        \sim (\randomreportval_{i}, \randomreportwealth_{i}),
        (\reportval_{-i},\reportwealth_{-i})
        \sim (\randomreportval_{-i}, \randomreportwealth_{-i})
        ]
        {\util_i(\reportval_i, \reportwealth_i, \reportval_{-i}, \reportwealth_{-i})}
        \geq
        \expect[
        (\reportval_{-i},\reportwealth_{-i})
        \sim (\randomreportval_{-i}, \randomreportwealth_{-i})
        ]{
        \util_i(\reportval_i\primed, \reportwealth_i\primed, \reportval_{-i}, \reportwealth_{-i})
        }
    \end{align*}
\end{definition}
As a sanity check, in both pure/mixed Nash equilibrium, the utility of every agent $i$ is non-negative, since zero utility is always guaranteed by reporting message $(\reportval_i = 0, \reportwealth_i = 0)$.

\xhdr{Liquid welfare.} We evaluate a particular allocation in terms of its \emph{liquid welfare}. 


\begin{definition}[Liquid welfare]
For agents with types $\{(\valuefunctioni, \truewealthi, \moneycosti)\}_{i\in[n]}$, the \emph{liquid welfare} of a (possibly) randomized allocation $\randomalloc$ is defined as
\begin{align*}
    \twelfare(\randomalloc) \triangleq \sum\nolimits_{i\in[n]} \wtp_i(\randomalloc_i),
\text{ where }
  \wtp_i(\randomalloc_i) \triangleq \min\left\{\truewealthi,\;
\moneycost_i^{-1}\left(
\expect[\alloci\sim\randomalloc_{i}]{
\valuefunctioni\left({\textstyle \sum_{j\in[m]}}\;
\ctrij\allocij \right)}
\right)
\right\}.
\end{align*}
Here $\wtp_i(\randomalloc_i)$ is
agent $i$'s \emph{willingness to pay} for allocation $\randomalloc_i$.
\footnote{Since our assumption on money cost function $\moneycosti$ implies that either $\truewealthi < \infty$ or $\lim_{\paymenti\rightarrow \infty} \moneycosti(\paymenti) = \infty$, and $\expect[\alloci\sim\randomalloc_{i}]{
\valuefunctioni(\sum_{j\in[n]}
\ctrij\allocij)}$
is bounded for every feasible randomized allocation $\randomalloci$, 
agent $i$'s willingness to pay
$\wtp_i(\randomalloci)$ is well-defined and finite.}
\end{definition}



Within the set of feasible randomized allocations, the optimal allocation that maximizes the liquid welfare is deterministic, thanks to the weak concavity of the valuation function $\valuefunctioni$.


We evaluate the metagame via an approximation of liquid welfare. Specifically, we compare liquid welfare of the worst equilibria against that of the best allocation, over all instances.\footnote{\yfedit{Though the inner FPPE of the metagame assumes that each autobidder uses a single pacing multiplier and conducts ``linear bidding'', the price of anarchy compares its equilibrium efficiency with the unrestricted optimal allocation. Our approximation results hold in spite of the restriction to linear bidding in the inner FPPE.}}

\begin{definition}[Price of anarchy]
    The \emph{price of anarchy (PoA)} of the metagame $\purePoA$
    (resp., $\mixedPoA$)
    under pure (resp., mixed) Nash equilibrium is
    \begin{align*}
        \purePoA \triangleq
        \sup_{n, m, \ctr}
        \sup_{\{\valuefunctioni, \truewealthi, \moneycosti\}_{i\in[n]}}
        \frac
        {\max_{\alloc}\twelfare(\alloc)}
        {\inf_{\alloc \in \texttt{Pure}}\twelfare(\alloc)},
        \quad
        \mixedPoA \triangleq
        \sup_{n, m, \ctr}
        \sup_{\{\valuefunctioni, \truewealthi, \moneycosti\}_{i\in[n]}}
        \frac
        {\max_{\alloc}\twelfare(\alloc)}
        {\inf_{\randomalloc \in \texttt{Mixed}}\twelfare(\randomalloc)}
    \end{align*}
    where $\texttt{Pure}$ (resp., $\texttt{Mixed}$)
    is the set of the deterministic (resp., randomized) allocation profile
    induced by all pure (mixed) Nash equilibrium
    given types $\{\valuefunctioni, \truewealthi, \moneycosti\}_{i\in[n]}$.
\end{definition}

\section{Warmup: Pure Nash Equilibrium in Examples}
\label{sec:example}

{Let us revisit \Cref{example:linear agents,example:budgeted agents} and analyze pure Nash equilibria therein. We establish the following:}

\begin{proposition}
\label{prop:non existencee}
In the metagame, pure Nash equilibria {might not exist}, even for budgeted agents. {When they do exist,} 
the resulting per-unit prices may not be unique, even for linear agents, {due to the multiplicity of equilibria.}
\end{proposition}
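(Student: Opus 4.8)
The plan is to establish the three claims of Proposition~\ref{prop:non existencee} via explicit constructions built on the two running examples. I would organize the argument as three self-contained parts.

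\textbf{Part 1: Multiplicity of equilibrium prices for linear agents.} I revisit \Cref{example:linear agents} with $\val_1 = \val_2 =: \val$ (the tie case). Here each agent is linear, so reporting $\reportval_i = \infty$ is the natural ``truthful'' report, but an agent may also report a finite maximum bid $\reportval_i$ and/or a finite budget $\reportwealth_i$. I claim that for any target price $\price^\star \in [\val/2,\, \val]$ there is a pure Nash equilibrium of the metagame whose inner FPPE has per-unit price $\price_1 = \price^\star$. Concretely, consider the profile where agent $1$ reports $(\reportval_1,\reportwealth_1) = (\price^\star, \infty)$ and agent $2$ reports $(\reportval_2,\reportwealth_2) = (\price^\star, \infty)$. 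By the FPPE definition this yields price $\price^\star$ with the item split arbitrarily (per the tie-breaking rule $\allocselectionrule$). I then check that neither agent can profitably deviate: raising one's max bid above $\price^\star$ only raises the price one pays without changing the (already capped) total allocation available, which weakly decreases utility since $\moneycosti$ is the identity and the valuation is linear with $\val = \price^\star \cdot (\text{bang-per-buck})$ exactly at the margin; lowering one's max bid below $\price^\star$ cedes the whole item to the opponent. Using \Cref{lem:compute utility}, the utility at the candidate equilibrium is $\val \cdot \frac{\ctr_{i1}}{\price^\star}\cdot \paymenti - \paymenti = 0$ when $\price^\star = \val$, and more care is needed when $\price^\star < \val$; I would instead pin agent behavior using finite budgets (e.g., both agents report budget $\reportwealth_i = \price^\star/2$ and max bid $\infty$), so the FPPE splits the item equally at price $\price^\star$ and each agent's payment equals her declared budget. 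The key verification is that no deviation to a larger budget helps: a larger budget pushes the price up proportionally (as in the hyper-inflation example from the introduction), and since the marginal value per click is the constant $\val$ while the marginal unit now costs strictly more than before once the opponent's budget is fixed, the deviation is unprofitable. This gives a continuum of equilibrium prices, proving non-uniqueness even for linear agents.

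\textbf{Part 2: Non-existence for budgeted agents.} I revisit \Cref{example:budgeted agents} — two budgeted agents, two items, cross click-through rates $\ctr_{ij} = \tfrac12 + \tfrac12\indicator{i=j}$, unit values, and true budgets $\truewealth_1 = \truewealth_2 = \tfrac12$. I would argue no pure Nash equilibrium exists by a ``chasing'' argument: whatever budget $\reportwealth_2$ agent $2$ declares (with max bid $\infty$, say), agent $1$'s best response is to declare a budget slightly larger in order to grab a larger share of her favored item $1$ and spill less onto item $2$; but then agent $2$ strictly prefers to out-declare agent $1$ in turn. The formal content is: (i) compute, via the competitive-equilibrium interpretation of FPPE, the allocation and price as a function of $(\reportwealth_1,\reportwealth_2)$ in this symmetric $2\times 2$ instance; (ii) show the induced utility $\util_1(\reportwealth_1,\reportwealth_2)$ is strictly increasing in $\reportwealth_1$ on the relevant range up to the point where agent $1$ would exceed her true budget $\truewealth_1 = \tfrac12$ (beyond which utility is $-\infty$); (iii) observe that at any symmetric profile the marginal incentive to deviate upward is strictly positive for both, and at any asymmetric profile the lagging agent wants to jump ahead — so no profile is a mutual best response. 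I should also rule out equilibria involving finite max bids $\reportval_i < 1$, which is a finite case check since with $\val_i = 1$ a max bid below $1$ only weakly hurts. The main obstacle in this part is handling the boundary behavior cleanly: near $\reportwealth_i = \truewealth_i$ the budget-feasibility constraint of FPPE binds and the utility function has a kink, so I need to verify the upward-deviation incentive persists right up to (but not past) the true-budget boundary, and that parking exactly at one's true budget is still not an equilibrium because the opponent can overtake. I expect this boundary analysis — showing the ``race'' never terminates — to be the technical crux of the whole proposition.

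\textbf{Part 3: Assembling the statement.} Part 1 delivers ``resulting per-unit prices may not be unique, even for linear agents''; Part 2 delivers ``pure Nash equilibria might not exist, even for budgeted agents.'' Since linear agents are budgeted agents and budgeted agents are general agents, both phenomena lift to the general metagame, completing the proof. I would present Part 2 first (non-existence) and Part 1 second (non-uniqueness) to match the order of the clauses in the proposition statement, and keep each computation to the minimal $2$-agent FPPE arithmetic, citing \Cref{lem:FPPE uniqueness} and \Cref{lem:compute utility} rather than re-deriving FPPE properties.
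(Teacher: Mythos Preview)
Your overall plan (reuse \Cref{example:linear agents} for non-uniqueness and \Cref{example:budgeted agents} for non-existence) matches the paper, but both parts contain real errors.

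\textbf{Part 1.} Neither of your constructions yields the claimed continuum. For the symmetric-budget profile where both agents report $(\infty,\price^\star/2)$, the first-order condition for agent~$1$'s deviation to $(\infty,\reportwealth_1')$ is
\[
\val\cdot\frac{\price^\star/2}{(\reportwealth_1'+\price^\star/2)^2}-1=0,
\]
which at the candidate $\reportwealth_1'=\price^\star/2$ forces $\price^\star=\val/2$; for any other $\price^\star$ the profile is not an equilibrium. For the symmetric max-bid profile where both report $(\price^\star,\infty)$ with an even split, if $\price^\star<\val$ then agent~$1$ can deviate to max bid $\price^\star+\epsilon$, win the entire item at price $\price^\star+\epsilon$, and obtain utility $\val-\price^\star-\epsilon>(\val-\price^\star)/2$; so this is an equilibrium only at $\price^\star=\val$. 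The paper does not attempt a continuum: it simply exhibits two explicit equilibria (at prices $\val_2$ and $\frac{\val_1\val_2}{\val_1+\val_2}$, Claims~\ref{claim:linear efficient} and~\ref{claim:linear inefficient}) and verifies each directly. Two points suffice for non-uniqueness; you should do the same and drop the continuum claim.

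\textbf{Part 2.} The chasing argument has the direction of the profitable deviation backwards, and this is the main gap. In \Cref{example:budgeted agents}, when both agents report $(\reportval_i,\reportwealth_i)$ with large $\reportval_i$ and $\reportwealth_2<\reportwealth_1<2\reportwealth_2$, the inner FPPE assigns each agent her favored item with $\price_1=\reportwealth_1$, $\price_2=\reportwealth_2$, so agent~$1$'s utility is $1-\reportwealth_1$, which is \emph{decreasing} in her own budget. The incentive is to \emph{undercut}, not to out-declare; your step (ii) (``$\util_1$ strictly increasing in $\reportwealth_1$'') is false on the relevant range, and hence step (iii) (``at any symmetric profile the marginal incentive to deviate upward is strictly positive'') is false as well. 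A race-to-the-top capped at $\truewealth_i=\tfrac12$ would, under your logic, terminate in the symmetric profile $\reportwealth_1=\reportwealth_2=\tfrac12$, which you would then be unable to refute. The paper's proof is not a best-response cycle but an exhaustive case analysis on the induced allocation (Claims~\ref{claim:budget non favorate allocation}--\ref{claim:budget two allocation}): for every possible allocation pattern, it exhibits a deviation---typically a \emph{lower} reported budget that preserves the favored-item allocation at a lower price---that is strictly profitable. You would need to replicate that structure.
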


We interpret this proposition as follows. First, although the inner FPPE of the metagame always exists and its induced per-unit prices are unique (as per \Cref{lem:FPPE uniqueness}), incorporating the strategic behavior of the advertisers {in the meta-game}
complicates the allocation and payment outcome. {Second, due to the non-existence result, even budgeted agents may have incentive to misreport their types.}\footnote{{For this point, it is crucial that the budgeted agents are utility-maximizers. Indeed, in the meta-game for budgeted \emph{value}-maximizing agents, truthful reporting is a Nash equilibrium (this follows from \citealp{AMP-23}).}} \yfedit{Finally, despite the non-existence result, we show in \Cref{sec:single item pure nash} that a pure Nash equilibrium always exists for single-item instances and can be computed in polynomial time.}




{We start with two auxiliary lemmas}
on verifying the existence of profitable deviations in a pure Nash equilibrium. The first lemma suggests that 
for a given agent $i$, it suffices to consider deviations $(\reportval_i\primed,\reportwealth_i\primed)$ with a restriction that $\reportval_i\primed = \infty$, i.e., no maximum bid constraint for her autobidder. Loosely speaking, restricting to deviation with $\reportval_i\primed = \infty$ simplifies the analysis since the agent would exhaust her reported budget~$\reportwealth_i\primed$. Consequently, \Cref{lem:compute utility} ensures that $\reportwealth_i\primed$ along with the per-unit prices of the inner FPPE serve as sufficient statistics for computing her utility.
Its proof is deferred to \Cref{apx:reportinginfinitevalue}.
\begin{restatable}{lemma}{reportinginfinitevalue}
\label{lem:reporting infinite value}
    In the metagame, for every agent $i$ and every pure Nash equilibrium $\{(\reportval_i, \reportwealth_i), (\reportval_{-i}, \reportwealth_{-i})\}$, 
    \begin{align*}
        \util_i(\reportval_i, \reportwealth_i, \reportval_{-i}, \reportwealth_{-i})
        =
        \max_{\reportval_i\primed,\reportwealth_i\primed}
        \util_i(\reportval_i\primed,\reportwealth_i\primed, \reportval_{-i}, \reportwealth_{-i})
        =
        \max_{\reportwealth_i\primed}
        \util_i(\infty,\reportwealth_i\primed, \reportval_{-i}, \reportwealth_{-i}).
    \end{align*}
\end{restatable}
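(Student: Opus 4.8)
The plan is to prove Lemma~\ref{lem:reporting infinite value} by showing that whatever agent $i$ can achieve with an arbitrary message $(\reportval_i\primed, \reportwealth_i\primed)$, she can also achieve (weakly) with a message of the form $(\infty, \reportwealth_i\doubleprimed)$ for a suitably chosen budget $\reportwealth_i\doubleprimed$; the equilibrium equality then follows because the original equilibrium message is one particular deviation, so the equilibrium utility already equals the max over all deviations, which in turn equals the max over the restricted family. Only the middle equality $\max_{\reportval_i\primed,\reportwealth_i\primed} \util_i = \max_{\reportwealth_i\primed} \util_i(\infty, \cdot)$ has content; the rest is the definition of pure Nash equilibrium.

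The key step is a comparison of inner FPPEs. Fix the others' messages $(\reportval_{-i}, \reportwealth_{-i})$ and an arbitrary deviation $(\reportval_i\primed, \reportwealth_i\primed)$ for agent $i$. Let $(\price\primed, \alloc\primed, \payment\primed)$ be the inner FPPE under this profile and let $\paymenti\primed$ be $i$'s payment. I would set $\reportwealth_i\doubleprimed \triangleq \paymenti\primed$ and consider the profile where $i$ instead reports $(\infty, \reportwealth_i\doubleprimed)$. I claim that $(\price\primed, \alloc\primed, \payment\primed)$ is \emph{still} an FPPE under this new profile, when $i$'s reported budget is $\paymenti\primed$ and her reported max-bid constraint is removed. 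To see this I check the four FPPE properties from Definition~\ref{def:FPPE}, reported-agent by reported-agent: for agents $-i$ nothing changes, so their properties still hold; for agent $i$, the ``highest bang-per-buck'' property with value $+\infty$ (no max bid) reduces to $\allocij\primed > 0 \Rightarrow j \in \argmax_{j'} \ctr_{ij'}/\price_{j'}\primed$, which holds because it held with the scaled values $\reportval_i\primed \ctr_{ij'}$ and rescaling by the positive constant $\reportval_i\primed$ does not change the argmax (and the condition $\reportval_i\primed\ctrij \geq \pricej\primed$ is vacuous once the max bid is $\infty$, in the sense that the relevant highest-bang-per-buck ratio is now $\geq 1$ automatically wherever $\allocij\primed>0$; I will need to handle carefully the degenerate case where $\pricej\primed = 0$). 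Supply feasibility and payment calculation are unchanged. For budget feasibility: $\paymenti\primed \leq \reportwealth_i\doubleprimed = \paymenti\primed$ holds with equality, which is exactly what is required whenever the max bang-per-buck exceeds $1$; and if it does not exceed $1$ the equality is still permitted. Hence by Lemma~\ref{lem:FPPE uniqueness} the per-unit prices under $(\infty, \paymenti\primed, \reportval_{-i}, \reportwealth_{-i})$ coincide with $\price\primed$, and by Lemma~\ref{lem:compute utility} agent $i$'s utility depends only on her payment and these prices. Finally I observe that under $(\infty, \paymenti\primed, \cdot)$ agent $i$'s payment equals $\paymenti\primed$ exactly (either she exhausts the reported budget, or the market-clearing prices already leave her spending exactly $\paymenti\primed$), and $\paymenti\primed \leq \truewealthi$ since the deviation $(\reportval_i\primed, \reportwealth_i\primed)$ yielded finite utility (otherwise it is dominated and can be ignored). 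Therefore $\util_i(\infty, \paymenti\primed, \reportval_{-i}, \reportwealth_{-i}) = \util_i(\reportval_i\primed, \reportwealth_i\primed, \reportval_{-i}, \reportwealth_{-i})$, so the max over the restricted family is at least the max over all deviations; the reverse inequality is trivial.

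The main obstacle I anticipate is the bookkeeping around zero prices and the subtle interaction between the reported max-bid $\reportval_i\primed$ and the equilibrium allocation when it is binding. Specifically, when $\reportval_i\primed$ is small enough that agent $i$'s pacing multiplier equals $1$ (she is bidding her full reported value and still not spending her reported budget), removing the max-bid but capping the budget at $\paymenti\primed$ must be argued not to change anything; here the equality clause of budget feasibility does the work, but one must verify that the allocation $\alloc\primed$ is still a valid highest-bang-per-buck bundle for a bidder with no max bid and budget exactly $\paymenti\primed$, which requires that $\alloc\primed$ only places mass on items maximizing $\ctr_{ij}/\price_j\primed$ — true in the original FPPE but worth stating explicitly. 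I would also make sure the argument is robust to the tie-breaking rule $\allocselectionrule$: since Lemma~\ref{lem:FPPE uniqueness} pins down prices and utilities (not allocations), and Lemma~\ref{lem:compute utility} computes utility from payment and prices alone, the choice of $\allocselectionrule$ is immaterial, which I will note to keep the proof clean.
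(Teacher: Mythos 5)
Your proof is correct and takes essentially the same approach as the paper: both arguments set the replacement budget to $\reportwealth_i\doubleprimed = \paymenti\primed$, verify the four FPPE properties to show the per-unit prices are unchanged, and then invoke price uniqueness (Lemma~\ref{lem:FPPE uniqueness}) together with Lemma~\ref{lem:compute utility} to conclude the utility is preserved. The only cosmetic differences are that you phrase the construction for an arbitrary deviation rather than applying it directly to the equilibrium message, and that you handle the pathological infinite-payment case by observing such deviations are dominated, whereas the paper invokes a separate finiteness-of-prices lemma (its Lemma~\ref{lem:finite price}).
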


The second lemma suggests that, for a given agent $i$, it suffices 
to consider the tie-breaking rule of the inner FPPE that favors agent $i$.
Its proof is straightforward given \Cref{lem:reporting infinite value}, see \Cref{apx:favortiebreaking}.
\begin{restatable}{lemma}{favortiebreaking}
\label{lem:agent in favor tie-breaking}
    In the metagame with tie-breaking rule $\allocselectionrule$ for the inner FPPE, for every agent $i$ and every pure Nash equilibrium $\{(\reportval_i, \reportwealth_i), (\reportval_{-i}, \reportwealth_{-i})\}$, it satisfies that
    \begin{align*}
        \util_i^\allocselectionrule(\reportval_i, \reportwealth_i, \reportval_{-i}, \reportwealth_{-i})
        =
        \max_{\allocselectionrule'}
        \util_i^{\allocselectionrule'}(\reportval_i, \reportwealth_i, \reportval_{-i}, \reportwealth_{-i})
    \end{align*}
    where
    $\util_i^\allocselectionrule(\reportval_i, \reportwealth_i, \reportval_{-i}, \reportwealth_{-i})$
    and
    $\util_i^{\allocselectionrule'}(\reportval_i, \reportwealth_i, \reportval_{-i}, \reportwealth_{-i})$
    are agent $i$'s utility in the metagame
    when tie-breaking rules $\allocselectionrule$, $\allocselectionrule\primed$ are selected for the inner FPPE, respectively;
    and the maximization on the right-hand side is taken over all randomized and deterministic tie-breaking rules.
\end{restatable}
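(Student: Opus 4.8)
The plan is to establish the two inequalities separately. One direction, $\util_i^{\allocselectionrule}(\reportval_i, \reportwealth_i, \reportval_{-i}, \reportwealth_{-i}) \le \max_{\allocselectionrule'}\util_i^{\allocselectionrule'}(\reportval_i, \reportwealth_i, \reportval_{-i}, \reportwealth_{-i})$, is immediate because $\allocselectionrule$ is itself one of the tie-breaking rules over which the right-hand side maximizes. So all the content is in the reverse inequality: for an arbitrary (possibly randomized) tie-breaking rule $\allocselectionrule'$, I must show that agent $i$'s utility at the \emph{same} message profile, but with ties broken by $\allocselectionrule'$, is at most her equilibrium utility $\util_i^{\allocselectionrule}(\reportval_i, \reportwealth_i, \reportval_{-i}, \reportwealth_{-i})$.

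The crux is the following observation: whenever agent $i$ reports an infinite maximum bid, her metagame utility is independent of the tie-breaking rule. Indeed, if some $\ctrij > 0$ then the ``budget feasibility'' condition of FPPE (applied with reported value $\infty$ and finite per-unit prices) forces $\paymenti = \reportwealth_i\primed$ no matter how ties among equal-bang-per-buck items are resolved, and if $\ctrij = 0$ for all $j$ then agent $i$ wins nothing of value and has zero utility under every rule. By \Cref{lem:compute utility}, agent $i$'s utility is then a function of $\paymenti = \reportwealth_i\primed$ and of the inner-FPPE per-unit prices alone, and by \Cref{lem:FPPE uniqueness} those prices are unique and hence tie-break independent. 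Thus $\util_i^{\allocselectionrule}(\infty, \reportwealth_i\primed, \reportval_{-i}, \reportwealth_{-i}) = \util_i^{\allocselectionrule'}(\infty, \reportwealth_i\primed, \reportval_{-i}, \reportwealth_{-i})$ for every reported budget $\reportwealth_i\primed$ and every pair of tie-breaking rules.

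With this in hand, the plan is to finish as follows. Applying \Cref{lem:reporting infinite value} in the metagame with tie-breaking $\allocselectionrule$ (where the profile is a pure Nash equilibrium) gives $\util_i^{\allocselectionrule}(\reportval_i, \reportwealth_i, \reportval_{-i}, \reportwealth_{-i}) = \max_{\reportwealth_i\primed}\util_i^{\allocselectionrule}(\infty, \reportwealth_i\primed, \reportval_{-i}, \reportwealth_{-i})$. On the $\allocselectionrule'$ side, I would bound $\util_i^{\allocselectionrule'}(\reportval_i, \reportwealth_i, \reportval_{-i}, \reportwealth_{-i})$ by a concrete deviation: let $\hat{\payment}_i$ be agent $i$'s payment in the $\allocselectionrule'$-inner-FPPE at this profile, and observe that reporting $(\infty, \hat{\payment}_i)$ against the unchanged $(\reportval_{-i}, \reportwealth_{-i})$ induces the same per-unit prices and payment $\hat{\payment}_i$ — the old price vector together with the old allocations still satisfies all four FPPE conditions, since agent $i$'s winning set was already contained in her highest-$\ctrij/\pricej$ set (scaling a value by the positive scalar $\reportval_i$ does not change that set), supply and payment conditions are inherited, and budget feasibility holds with equality consistent with the infinite-bid condition. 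Hence, by \Cref{lem:compute utility}, $\util_i^{\allocselectionrule'}(\reportval_i, \reportwealth_i, \reportval_{-i}, \reportwealth_{-i}) = \util_i^{\allocselectionrule}(\infty, \hat{\payment}_i, \reportval_{-i}, \reportwealth_{-i}) \le \max_{\reportwealth_i\primed}\util_i^{\allocselectionrule}(\infty, \reportwealth_i\primed, \reportval_{-i}, \reportwealth_{-i}) = \util_i^{\allocselectionrule}(\reportval_i, \reportwealth_i, \reportval_{-i}, \reportwealth_{-i})$, and taking the supremum over $\allocselectionrule'$ yields the claim. (Alternatively, if the ``WLOG report an infinite maximum bid'' half of \Cref{lem:reporting infinite value} is read as holding for any fixed opponent profile, the $\allocselectionrule'$-side bound follows from it together with the tie-break invariance of the preceding paragraph, with no explicit deviation needed.)

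I expect the main obstacle to be the invariance step — rigorously arguing that an infinite maximum bid pins down agent $i$'s payment, and therefore, through \Cref{lem:compute utility} and price uniqueness, her utility, independently of the tie-breaking rule. This means unwinding the FPPE budget-feasibility condition in the limiting regime of an unbounded maximum bid and checking the degenerate cases (all $\ctrij = 0$, or all prices on agent $i$'s winning items equal to zero) where the expression in \Cref{lem:compute utility} must be interpreted with care. Everything after that is routine manipulation of the two preceding lemmas and the Nash condition.
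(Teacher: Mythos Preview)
Your proposal is correct and follows essentially the same approach as the paper: invoke \Cref{lem:reporting infinite value} on the $\allocselectionrule$ side (where the profile is a pure Nash equilibrium), use the identical deviation-to-$(\infty,\hat{\payment}_i)$ argument from the proof of \Cref{lem:reporting infinite value} on the $\allocselectionrule'$ side, and bridge the two via the observation that with an infinite reported value agent $i$'s utility depends only on her (exhausted) reported budget and the unique FPPE prices, hence is tie-break independent. The paper's proof is terser---it simply says ``using a similar argument as the one in \Cref{lem:reporting infinite value}'' for the $\allocselectionrule'$ bound---but the content is the same, and your added remarks on degenerate cases (all $\ctrij=0$, zero prices) are just extra care.
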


\subsection{\texorpdfstring{\Cref{example:linear agents}}{}: Non-Uniqueness of Pure Nash Equilibrium}
\label{sec:pure nash linear agents}


{We revisit the linear agents instance from \Cref{example:linear agents}.} We prove that there exists an efficient pure Nash equilibrium (\Cref{claim:linear efficient}),
and also an inefficient pure Nash equilibrium if $\val_2$ is close to $\val_1$ (\Cref{claim:linear inefficient}). In both equilibria, agents report their value per click (aka., maximum bid) truthfully, while strategically declaring their budgets.

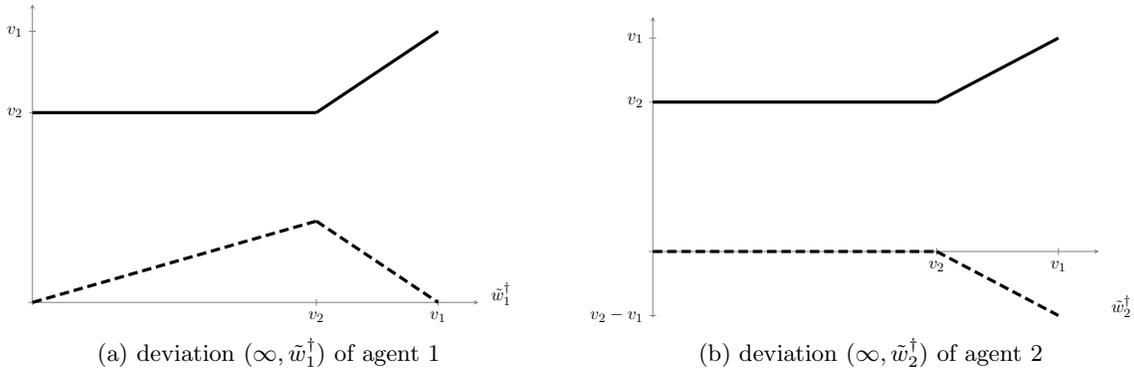
\begin{figure}[hbt]
\centering
\subfloat[deviation $(\infty,\reportwealth_1\primed)$ of agent 1]{
\begin{tikzpicture}[scale=0.6, transform shape]
\begin{axis}[
axis line style=gray,
axis lines=middle,
xlabel = {$\reportwealth_1\primed$},
xlabel style={at={(axis description cs:1.05,-0.02)}, anchor=south},
xtick={0, 0.7, 1},
ytick={0, 0.7, 1},
xticklabels={0, $\val_2$, $\val_1$},
yticklabels={0, $\val_2$, $\val_1$},
xmin=-0.01,xmax=1.1,ymin=-0.01,ymax=1.1,
width=0.7\textwidth,
height=0.5\textwidth]

\addplot[domain=0:0.7, line width=0.7mm] (x, {0.7});
\addplot[domain=0.7:1, line width=0.7mm] (x, {x});

\addplot[domain=0:0.7, dashed, dash pattern=on 6pt off 3pt on 6pt off 3pt, line width=0.7mm] (x, {x/0.7*1-x});
\addplot[domain=0.7:1, dashed, dash pattern=on 6pt off 3pt on 6pt off 3pt, line width=0.7mm] (x, {1-x});

\end{axis}



\end{tikzpicture}
\label{fig:linear agent 1 efficient deviation}
}
\quad
\subfloat[deviation $(\infty,\reportwealth_2\primed)$ of agent 2]{
\begin{tikzpicture}[scale=0.6, transform shape]
\begin{axis}[
axis line style=gray,
axis lines=middle,
xlabel = {$\reportwealth_2\primed$},
xlabel style={at={(axis description cs:1.05,-0.02)}, anchor=south},
xtick={0, 0.7, 1},
ytick={-0.3, 0, 0.7, 1},
xticklabels={0, $\val_2$, $\val_1$},
yticklabels={$\val_2 - \val_1$, 0, $\val_2$, $\val_1$},
xmin=-0.01,xmax=1.1,ymin=-0.31,ymax=1.1,
width=0.7\textwidth,
height=0.5\textwidth]

\addplot[domain=0:0.7, line width=0.7mm] (x, {0.7});
\addplot[domain=0.7:1, line width=0.7mm] (x, {x});

\addplot[domain=0:0.7, dashed, dash pattern=on 6pt off 3pt on 6pt off 3pt, line width=0.7mm] (x, {x/0.7*0.7-x});
\addplot[domain=0.7:1, dashed, dash pattern=on 6pt off 3pt on 6pt off 3pt, line width=0.7mm] (x, {0.7-x});

\end{axis}



\end{tikzpicture}
\label{fig:linear agent 2 efficient deviation}
}
\caption{Graphical illustration of each agent $i$'s deviation in \Cref{claim:linear efficient} for \Cref{example:linear agents}. The solid (resp., dashed) line is the per-unit price (resp., utility of agent $i$).}
\label{fig:linear efficient deviation}
\end{figure}

\begin{claim}
\label{claim:linear efficient}
In \Cref{example:linear agents}, a pure Nash equilibrium {of the metagame} is achieved when agent 1 reports $\reportval_1 = \val_1$ and $\reportwealth_1 = \val_2$, while agent 2 reports $\reportval_2 = \val_2$ and $\reportwealth_2 = \val_2$.
\end{claim}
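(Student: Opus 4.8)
The plan is to verify directly that the proposed message profile is a pure Nash equilibrium by checking that neither agent has a profitable deviation. By Lemma~\ref{lem:reporting infinite value}, it suffices to consider deviations of the form $(\infty, \reportwealth_i\primed)$ for each agent $i$, and by Lemma~\ref{lem:agent in favor tie-breaking} we may analyze the inner FPPE under whichever tie-breaking rule is most favorable to the deviating agent. First I would record the baseline outcome: under reports $(\reportval_1,\reportwealth_1) = (\val_1, \val_2)$ and $(\reportval_2,\reportwealth_2) = (\val_2,\val_2)$, agent $1$ has a strictly higher maximum bid and a budget exactly equal to $\val_2$; since $\val_1 \geq \val_2$, the inner FPPE assigns per-unit price $\price_1 = \val_1$ — wait, more carefully: agent $2$'s bid is capped at $\val_2$, so the market clears with agent $1$ winning the whole item at per-unit price $\val_2$ (paying exactly her reported budget). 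Thus at baseline agent $1$ gets allocation $\alloc_{11}=1$, payment $\val_2$, and utility $\val_1 \cdot 1 - \val_2 = \val_1 - \val_2$; agent $2$ gets nothing and utility $0$.

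Next I would analyze agent $1$'s deviation $(\infty, \reportwealth_1\primed)$, holding agent $2$'s report fixed at $(\val_2,\val_2)$. The key is the shape of the per-unit price and allocation as functions of $\reportwealth_1\primed$, exactly as depicted in Figure~\ref{fig:linear efficient deviation}(a). For $\reportwealth_1\primed \leq \val_2$: agent $1$, bidding without a max-bid cap, competes against agent $2$ whose budget is $\val_2$ and whose per-click bid is capped at $\val_2$; the market-clearing price stays at $\val_2$ (agent $2$'s budget is large enough to push it there), agent $1$ wins a $\reportwealth_1\primed/\val_2$ share of the item, and her utility is $\val_1 \cdot \reportwealth_1\primed/\val_2 - \reportwealth_1\primed = \reportwealth_1\primed(\val_1-\val_2)/\val_2$, which is increasing in $\reportwealth_1\primed$ and reaches $\val_1-\val_2$ at $\reportwealth_1\primed=\val_2$. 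For $\reportwealth_1\primed \geq \val_2$: agent $1$ now wins the entire item, the price rises to $\reportwealth_1\primed$ (she spends her whole budget, since her true value-per-click exceeds any price up to $\val_1$ only while... ), and her utility is $\val_1 - \reportwealth_1\primed$, which is \emph{decreasing}. Hence the utility is maximized at $\reportwealth_1\primed = \val_2$ with value $\val_1-\val_2$, matching the baseline, so agent $1$ has no profitable deviation. I would spell out why the price equals $\reportwealth_1\primed$ rather than $\val_2$ in this regime: once agent $1$ wins everything, supply feasibility with a positive price forces $\sum_i \alloc_{i1}=1$, agent $1$'s payment equals $\reportwealth_1\primed$ (budget binds whenever $\max_j \reportval_i\ctr_{ij}/\price_j>1$, i.e. whenever $\price_1 < \infty = \reportval_1\primed$), hence $\price_1 = \reportwealth_1\primed$, valid as long as $\reportwealth_1\primed \le \val_1$ (so that agent $1$'s highest-bang-per-buck condition $\reportval_1\primed\ctr_{11}\geq\price_1$ holds); for $\reportwealth_1\primed > \val_1$ the price caps at $\val_1$ and utility is $\val_1-\val_1 = 0 \le \val_1 - \val_2$, still not profitable.

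Then I would do the symmetric analysis for agent $2$'s deviation $(\infty, \reportwealth_2\primed)$, holding agent $1$'s report fixed at $(\val_1,\val_2)$, following Figure~\ref{fig:linear efficient deviation}(b). For $\reportwealth_2\primed \leq \val_2$: agent $2$ competes against agent $1$ who has budget $\val_2$ and per-click bid $\val_1$; the price stays at $\val_2$ (agent $1$ is willing to pay up to $\val_1>\val_2$ per click and has budget $\val_2$), agent $2$ wins a $\reportwealth_2\primed/\val_2$ share, pays $\reportwealth_2\primed$, and has utility $\val_2 \cdot \reportwealth_2\primed/\val_2 - \reportwealth_2\primed = 0$. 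For $\reportwealth_2\primed \geq \val_2$: to win more than the share agent $1$'s budget leaves, agent $2$ must bid above $\val_2$ per click — but her maximum bang-per-buck would then require $\reportval_2\primed\ctr_{21} = \infty \geq \price_1$, which holds, yet agent $1$ is still in the market bidding $\val_1 > \price_1$ whenever $\price_1 < \val_1$; so the price is driven up to $\reportwealth_2\primed$ (agent $2$ wins the whole item only if her budget exceeds what agent $1$ will spend), and her utility is $\val_2 - \reportwealth_2\primed < 0$. Either way agent $2$'s best deviation yields utility $0$, equal to the baseline. The main obstacle I anticipate is the careful bookkeeping of the inner FPPE in the boundary regime where the deviating agent's budget is near $\val_2$ and the competitor's budget constraint is exactly what pins the price — one must invoke the FPPE properties (supply feasibility forcing full allocation at positive price, budget-bindingness when the bang-per-buck exceeds $1$, and highest-bang-per-buck constraining the price from above by $\reportval_i\ctr_{i1}$) precisely, and confirm these piecewise formulas are consistent with Lemma~\ref{lem:FPPE uniqueness}'s uniqueness of price and utility. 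Once the two piecewise utility curves are established, both peaking exactly at the baseline value, the equilibrium claim follows immediately.
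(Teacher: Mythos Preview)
Your approach mirrors the paper's: reduce to deviations $(\infty,\reportwealth_i\primed)$ via Lemma~\ref{lem:reporting infinite value}, compute the inner FPPE as a piecewise function of the deviating budget, and verify the utility curve peaks at the baseline. Your analysis of agent~1's deviation is correct and matches the paper.

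Your analysis of agent~2's deviation, however, has a genuine error in the price curve. You assert that for $\reportwealth_2\primed \le \val_2$ the price stays at $\val_2$ and agent~2 receives a $\reportwealth_2\primed/\val_2$ share. But at price $\val_2$, agent~1's bang-per-buck is $\val_1/\val_2 > 1$, so the budget-feasibility clause of FPPE forces agent~1 to spend her \emph{entire} reported budget $\val_2$ --- i.e., agent~1 demands the whole item, leaving no room for agent~2; this is not a valid FPPE. The correct price curve (for $\val_1>\val_2$) has three regimes: $\price_1 = \val_2 + \reportwealth_2\primed$ for $\reportwealth_2\primed \in [0,\val_1-\val_2]$ (both agents exhaust budgets); $\price_1 = \val_1$ for $\reportwealth_2\primed \in [\val_1-\val_2,\val_1]$ (agent~1's bang-per-buck hits~$1$, so she need not exhaust); and $\price_1 = \reportwealth_2\primed$ for $\reportwealth_2\primed \ge \val_1$ (agent~1 priced out). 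Fortunately agent~2's utility in these regimes is $-(\reportwealth_2\primed)^2/(\val_2+\reportwealth_2\primed)$, $\reportwealth_2\primed(\val_2-\val_1)/\val_1$, and $\val_2-\reportwealth_2\primed$, all nonpositive, so your conclusion that agent~2 cannot profitably deviate survives. (Curiously, the paper's Figure~\ref{fig:linear agent 2 efficient deviation} also depicts your incorrect curve, though its text omits the computation.) A smaller slip: your remark that for agent~1 ``the price caps at $\val_1$'' once $\reportwealth_1\primed > \val_1$ is also off --- since she reports value $\infty$, the price is $\reportwealth_1\primed$ for all $\reportwealth_1\primed \ge \val_2$; but her utility $\val_1 - \reportwealth_1\primed$ is still nonpositive there, so this is harmless.
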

\begin{proof}
In the following argument, we assume $\val_1 > \val_2$. However, the same argument can be applied in the case of $\val_1 = \val_2$ due to \Cref{lem:agent in favor tie-breaking}. To save space, we will omit the details of this case.

Given message profile $((\reportval_1 = \val_1, \reportwealth_1 = \val_2), (\reportval_2 = \val_2, \reportwealth_2 = \val_2))$, the inner FPPE allocates the entire item to agent 1 with per-unit price $\val_2$, i.e., $\alloc_{11} = 1$, $\alloc_{21} = 0$ and $\price_1 = \val_2$.

We verify the non-existence of profitable deviation. Invoking \Cref{lem:reporting infinite value}, it is sufficient to consider deviation $(\reportval_i\primed,\reportwealth_i\primed)$ with $\reportval_i\primed = \infty$ for each agent $i$. In such a deviation, the ``budgeted feasibility'' property of FPPE ensures that agent $i$ exhausts her budget and pays exactly $\reportwealth_i\primed$. We now proceed to analyze each agent $i$ individually.
For agent $1$, it can be verified that the per-unit price $\price_1\primed$ and allocation $\alloc\primed_{11}$ of inner FPPE under her deviation $(\infty, \reportwealth_1\primed)$ are
\begin{align*}
    \price_1\primed =
    \begin{cases}
       \val_2  &  \text{ if } \reportwealth_1\primed \leq \val_2\\
       \reportwealth_1\primed  & \text{ if } \reportwealth_2\primed \geq \val_2
    \end{cases},
    \qquad
    \alloc\primed_{11} =
    \frac{\reportwealth_1\primed}{\price_1\primed}
    =
    \begin{cases}
    \frac{\reportwealth_1\primed}{\val_2}     &  \text{ if } \reportwealth_1\primed \leq \val_2\\
    1     & \text{ if } \reportwealth_2\primed \geq \val_2
    \end{cases}
\end{align*}
and her utility is maximized at $\reportwealth_1\primed = \val_2$, which results in the same utility as she obtains in equilibrium. See \Cref{fig:linear agent 1 efficient deviation} for a graphical illustration.
To avoid repetition, we omit a similar argument for agent 2.  See \Cref{fig:linear agent 2 efficient deviation} for a graphical illustration.
\end{proof}

\begin{figure}[hbt]
\centering
\subfloat[deviation $(\infty,\reportwealth_1\primed)$ of agent 1]{
\begin{tikzpicture}[scale=0.55, transform shape]
\begin{axis}[
axis line style=gray,
axis lines=middle,
xlabel = {$\reportwealth_1\primed$},
xlabel style={at={(axis description cs:1.05,-0.02)}, anchor=south},
xtick={0, 0.24221453, 0.53045, 0.7, 1},
ytick={0, 0.3460210084, 0.7, 1},
xticklabels={0, $\gamma\val_1$, $\val_2-\gamma\val_2$, $\val_2$, $\val_1$},
yticklabels={0, $\frac{\reportwealth_1}{\reportwealth_1 + \reportwealth_2}\val_1 - \reportwealth_1$, $\val_2$, $\val_1$},
xmin=-0.01,xmax=1.1,ymin=-0.01,ymax=1.1,
width=0.7\textwidth,
height=0.5\textwidth]

\addplot[domain=0:0.53045, line width=0.7mm] (x, {0.16955 + x});
\addplot[domain=0.53045:0.7, line width=0.7mm] (x, {0.7});
\addplot[domain=0.7:1, line width=0.7mm] (x, {x});

\addplot[domain=0:0.53045, dashed, dash pattern=on 6pt off 3pt on 6pt off 3pt, line width=0.7mm] (x, {x/(0.16955 + x)*1-x});
\addplot[domain=0.53045:0.7, dashed, dash pattern=on 6pt off 3pt on 6pt off 3pt, line width=0.7mm] (x, {x/0.7*1-x});
\addplot[domain=0.7:1, dashed, dash pattern=on 6pt off 3pt on 6pt off 3pt, line width=0.7mm] (x, {x/x*1-x});

\addplot[domain=0:0.24221453, dotted, gray] (x, {0.3460210084});
\addplot[domain=0.24221453:1, dotted, gray] (x, {0.3460210084});

\end{axis}



\end{tikzpicture}
\label{fig:linear agent 1 inefficient deviation}
}
\quad
\subfloat[deviation $(\infty,\reportwealth_2\primed)$ of agent 2]{
\begin{tikzpicture}[scale=0.55, transform shape]
\begin{axis}[
axis line style=gray,
axis lines=middle,
xlabel = {$\reportwealth_2\primed$},
xlabel style={at={(axis description cs:1.05,-0.02)}, anchor=south},
xtick={0, 0.16955017301, 0.757785467, 1},
ytick={-0.3, 0, 0.1186851211, 0.7, 1},
xticklabels={0, $\gamma\val_2$, $\val_1-\gamma\val_1$, $\val_1$},
yticklabels={$\val_2 - \val_1$, 00, $\frac{\reportwealth_2}{\reportwealth_1 + \reportwealth_2}\val_2 - \reportwealth_2$, $\val_2$, $\val_1$},
xmin=-0.01,xmax=1.1,ymin=-0.31,ymax=1.1,
width=0.7\textwidth,
height=0.5\textwidth]

\addplot[domain=0:0.757785467, line width=0.7mm] (x, {0.24221453 + x});
\addplot[domain=0.757785467:1, line width=0.7mm] (x, {1});

\addplot[domain=0:0.757785467, dashed, dash pattern=on 6pt off 3pt on 6pt off 3pt, line width=0.7mm] (x, {x/(0.24221453 + x)*0.7-x});
\addplot[domain=0.757785467:1, dashed, dash pattern=on 6pt off 3pt on 6pt off 3pt, line width=0.7mm] (x, {x/1*0.7-x});

\addplot[domain=0:1, dotted, gray] (x, {0.1186851211});

\end{axis}



\end{tikzpicture}
\label{fig:linear agent 2 inefficient deviation}
}
\caption{Graphical illustration of each agent $i$'s deviation in \Cref{claim:linear inefficient} for \Cref{example:linear agents}. The solid (dashed) line is the per-unit price (utility of agent $i$).}
\label{fig:linear inefficient deviation}
\end{figure}
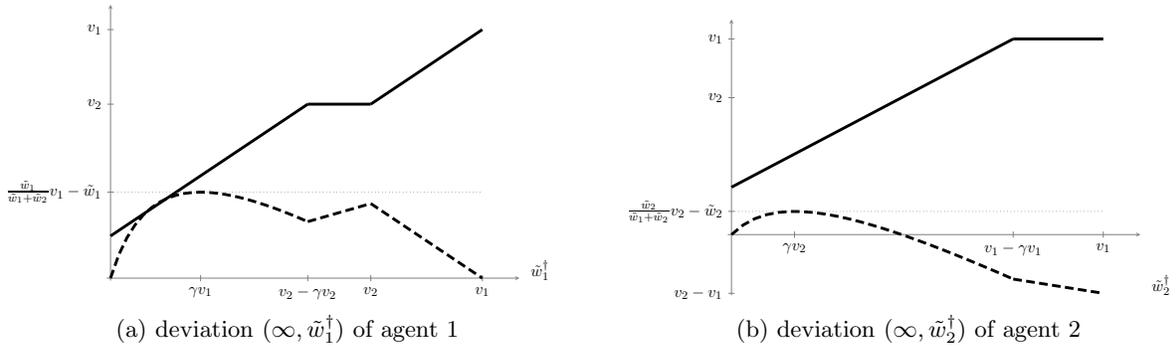

\begin{claim}
\label{claim:linear inefficient}
\label{claim:quasi linear inefficient equilibrium}
{Consider the metagame in \Cref{example:linear agents}
with} $\frac{\sqrt{5} - 1}{2}\val_1 \leq \val_2 \leq \val_1$. Let $\gamma = \frac{\val_1\val_2}{(\val_1 + \val_2)^2}$. Then 
a pure Nash equilibrium is achieved when agent 1 reports $\reportval_1 = \val_1$ and $\reportwealth_1 = \gamma \val_1$, while agent 2 reports $\reportval_2 = \val_2$ and $\reportwealth_2 = \gamma \val_2$.
\end{claim}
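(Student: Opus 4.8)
The plan is to follow the template of the proof of \Cref{claim:linear efficient}: first pin down the inner FPPE at the candidate message profile, then use \Cref{lem:reporting infinite value} to reduce each agent's deviations to the one-parameter family $(\infty,\reportwealth_i\primed)$, and finally optimize each agent's payoff over $\reportwealth_i\primed$, reading off exactly where the golden-ratio hypothesis is needed. As in \Cref{claim:linear efficient}, I would argue the case $\val_1>\val_2$ and defer $\val_1=\val_2$ to \Cref{lem:agent in favor tie-breaking}. For the candidate profile $((\val_1,\gamma\val_1),(\val_2,\gamma\val_2))$, the natural FPPE candidate has both reported budgets exhausted and the item fully sold, so $\price_1=\gamma\val_1+\gamma\val_2=\gamma(\val_1+\val_2)$, $\alloc_{i1}=\gamma\val_i/\price_1=\val_i/(\val_1+\val_2)$, and $\payment_i=\gamma\val_i$. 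The only nontrivial FPPE property to check is that both budgets bind, which holds since $\val_i/\price_1=\val_i(\val_1+\val_2)/(\val_1\val_2)>1$ using $\val_1+\val_2>\val_{3-i}$; then \Cref{lem:FPPE uniqueness} makes this the FPPE. Substituting $\gamma=\val_1\val_2/(\val_1+\val_2)^2$, the equilibrium utilities come out to $\util_i=\val_i^2/(\val_1+\val_2)-\gamma\val_i=\val_i^3/(\val_1+\val_2)^2$.

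Next, by \Cref{lem:reporting infinite value} it suffices to rule out deviations $(\infty,\reportwealth_i\primed)$, for which the deviator exhausts its reported budget whenever that budget is below the cost of the whole item. I would then trace the inner FPPE price as a function of $\reportwealth_i\primed$ with the opponent's report fixed; it is piecewise with a single kink at the budget level where the opponent's pacing bang-per-buck drops to $1$ (this is the picture in \Cref{fig:linear inefficient deviation}). Concretely, for agent $1$ deviating against $(\val_2,\gamma\val_2)$: for $\reportwealth_1\primed\le\val_2(1-\gamma)$ the opponent stays budget-constrained, $\price_1=\reportwealth_1\primed+\gamma\val_2$ and $\alloc_{11}=\reportwealth_1\primed/\price_1$, so $\util_1=\val_1\reportwealth_1\primed/(\reportwealth_1\primed+\gamma\val_2)-\reportwealth_1\primed$; for $\val_2(1-\gamma)\le\reportwealth_1\primed\le\val_2$ the price is frozen at $\price_1=\val_2$ and $\alloc_{11}=\reportwealth_1\primed/\val_2$, so $\util_1=\reportwealth_1\primed(\val_1-\val_2)/\val_2$; and for $\reportwealth_1\primed\ge\val_2$ agent $1$ takes the whole item at price $\price_1=\reportwealth_1\primed$, so $\util_1=\val_1-\reportwealth_1\primed$. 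In each regime I would verify the four FPPE properties so that \Cref{lem:FPPE uniqueness} justifies the stated price and allocation.

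Then I would optimize. On the first regime $\util_1$ is concave in $\reportwealth_1\primed$, and the first-order condition $(\reportwealth_1\primed+\gamma\val_2)^2=\val_1\gamma\val_2$ simplifies, after substituting $\gamma$, to the maximizer $\reportwealth_1\primed=\gamma\val_1$, which lies inside the interval $[0,\val_2(1-\gamma)]$ and yields utility exactly $\val_1^3/(\val_1+\val_2)^2$, i.e.\ the equilibrium value. On the remaining regimes $\util_1$ increases then decreases with maximum $\val_1-\val_2$ at $\reportwealth_1\primed=\val_2$. Hence agent $1$ has no profitable deviation if and only if $\val_1^3/(\val_1+\val_2)^2\ge\val_1-\val_2$; writing $\rho=\val_2/\val_1$, this becomes $1\ge(1-\rho)(1+\rho)^2$, equivalently $\rho^2+\rho-1\ge0$, i.e.\ $\rho\ge(\sqrt{5}-1)/2$ --- precisely the hypothesis $\val_2\ge\tfrac{\sqrt5-1}{2}\val_1$. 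The symmetric computation for agent $2$ (replace $\val_2,\gamma\val_2$ by $\val_1,\gamma\val_1$) shows its best deviation attains $\val_2^3/(\val_1+\val_2)^2$ at $\reportwealth_2\primed=\gamma\val_2$, while every other budget gives at most $(1-\gamma)(\val_2-\val_1)\le0$ (using $\val_2\le\val_1$); so agent $2$ never gains and imposes no further restriction.

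I expect the main obstacle to be the deviation case analysis --- correctly reading off the inner FPPE price for every value of the deviation budget, in particular the transition where the non-deviating agent ceases to be budget-constrained and the boundary regime where the deviator would win the entire item (so it stops fully exhausting its reported budget), together with checking FPPE feasibility in each regime. The remaining algebra (that the first-order condition collapses to $\reportwealth_1\primed=\gamma\val_1$, and that the no-deviation inequality is exactly the golden-ratio threshold) is routine but must be done carefully.
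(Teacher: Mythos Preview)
Your proposal is correct and follows essentially the same approach as the paper's proof: identify the inner FPPE at the candidate profile, reduce to deviations $(\infty,\reportwealth_i\primed)$ via \Cref{lem:reporting infinite value}, split $\reportwealth_i\primed$ into the same three price regimes, and compare the two local maximizers $\reportwealth_1\primed=\gamma\val_1$ and $\reportwealth_1\primed=\val_2$. Your write-up is in fact more explicit than the paper's, which just asserts the two local maximizers and defers the golden-ratio comparison to the hypothesis; you carry out the algebra (including the reduction to $\rho^2+\rho-1\ge 0$) and also sketch agent~2's case, which the paper omits ``to avoid repetition.''
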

\begin{proof}
In the following argument, we assume $\val_1 > \val_2$. However, the same argument can be applied in the case of $\val_1 = \val_2$ due to \Cref{lem:agent in favor tie-breaking}. To save space, we will omit the details of this case.

Given message profile $((\reportval_1 = \val_1, \reportwealth_1 = \gamma\val_1), (\reportval_2 = \val_2, \reportwealth_2 = \gamma\val_2))$, the per-unit price $\price_1$ of the inner FPPE is $\price_1 = \reportwealth_1 + \reportwealth_2 = \gamma (\val_1 + \val_2) \leq \val_2 < \val_1$. Moreover, the item is allocated to both agents in proportion to their respective values, i.e., $\alloc_{i1} = \sfrac{\val_i}{(\val_1 + \val_2)}$.

We verify the non-existence of profitable deviation. Invoking \Cref{lem:reporting infinite value}, it is sufficient to consider deviation $(\reportval_i\primed,\reportwealth_i\primed)$ with $\reportval_i\primed = \infty$ for each agent $i$. In such a deviation, the ``budgeted feasibility'' property of FPPE ensures that agent $i$ exhausts her budget and pays exactly $\reportwealth_i\primed$. We now proceed to analyze each agent $i$ individually.
For agent $1$, it can be verified that the per-unit price $\price_1\primed$ and allocation $\alloc\primed_{11}$ of inner FPPE under her deviation $(\infty, \reportwealth_1\primed)$ are
\begin{align*}
    \price_1\primed =
    \begin{cases}
       \reportwealth_1\primed + \reportwealth_2  &  \text{ if } \reportwealth_1\primed \leq \val_2 - \reportwealth_2\\
       \val_2  & \text{ if } \val_2 - \reportwealth_2 \leq \reportwealth_2\primed \leq \val_2\\
       \reportwealth_1\primed & \text{ if } \reportwealth_2\primed \geq \val_2
    \end{cases},
    \qquad
    \alloc\primed_{11} =
    \frac{\reportwealth_1\primed}{\price_1\primed}
    =
    \begin{cases}
       \frac{\reportwealth_1\primed}{\reportwealth_1\primed + \reportwealth_2}  &  \text{ if } \reportwealth_1\primed \leq \val_2 - \reportwealth_2\\
       \frac{\reportwealth_1\primed}{\val_2}  & \text{ if } \val_2 - \reportwealth_2 \leq \reportwealth_2\primed \leq \val_2\\
       1 & \text{ if } \reportwealth_2\primed \geq \val_2
    \end{cases}
\end{align*}
By considering the first-order condition, we observe that agent $1$'s utility, under the mentioned deviation, has two local maximizers: $\reportwealth_1\primed = \gamma\val_1$ and $\reportwealth_1\primed = \val_2$. However, according to the claim assumption that $\sfrac{(\sqrt{5} - 1)\val_1}{2}\leq \val_2$, her utility is maximized at $\reportwealth_1\primed = \gamma\val_1$, which coincides with her utility in the equilibrium. See \Cref{fig:linear agent 1 inefficient deviation} for a graphical illustration.
To avoid repetition, we omit a similar argument for agent 2. See \Cref{fig:linear agent 2 inefficient deviation} for a graphical illustration.
\end{proof}

\begin{remark}
    At the equilibrium of the metagame, each agent $i$ faces a Pareto curve (see \Cref{fig:linear efficient deviation,fig:linear inefficient deviation}) that describes how much value they receive as their reported budget $\reportwealth_i$ increases.  Increasing her reported budget $\reportwealth_i$ causes more impressions and thus clicks to be won, but at a potentially decreasing bang-per-buck. In particular, as we mentioned in \Cref{sec:prelim}, FPPE can be viewed as a competitive equilibrium, i.e., market-clearing outcome that assigns a price to each impression.  Since increasing budget has downstream effects on how other autobidders will behave (even keeping the reports of other advertisers fixed), increasing one's budget can cause prices to increase.
\end{remark}

\subsection{\texorpdfstring{\Cref{example:budgeted agents}}{}: Non-Existence of Pure Nash Equilibrium}
\label{sec:pure nash budgeted agents}

We revisit the budgeted agents instance from \Cref{example:budgeted agents} and prove non-existence of pure Nash equilibrium.

\begin{claim}
\label{claim:budget}
{In the metagame from \Cref{example:budgeted agents}, pure Nash equilibrium does not exist. Consequently, both agents have incentive to misreport their true types.}
\end{claim}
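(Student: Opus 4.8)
The plan is to argue by contradiction. Suppose $\big((\reportval_1,\reportwealth_1),(\reportval_2,\reportwealth_2)\big)$ is a pure Nash equilibrium of the metagame from \Cref{example:budgeted agents}, and let $(\price,\alloc,\payment)$ be the inner FPPE it induces, with agent utilities $\util_1,\util_2$. Every agent here is quasi-linear with hard budget $\tfrac12$ and identity valuation, so $\util_i=\sum_j\ctr_{ij}\alloc_{ij}-\paymenti$ whenever $\paymenti\le\tfrac12$. By \Cref{lem:reporting infinite value}, to rule out a profitable deviation for agent $i$ it suffices to consider deviations of the form $(\infty,\reportwealth_i\primed)$; under such a deviation agent $i$ always exhausts her declared budget, so by \Cref{lem:compute utility} she obtains utility $\big(\max_j\ctr_{ij}/\price_j\primed-1\big)\reportwealth_i\primed$, where $\price\primed$ is the FPPE price vector induced by the deviation. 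By \Cref{lem:agent in favor tie-breaking} we may break ties in favor of the deviating agent throughout.

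The first step is to compute, for each agent $i$ and each fixed opponent report $(\reportval_{-i},\reportwealth_{-i})$, the inner FPPE prices and allocation as a function of the deviating budget $\reportwealth_i\primed$, and hence the best-response value $g_i:=\sup_{\reportwealth_i\primed}\util_i(\infty,\reportwealth_i\primed,\reportval_{-i},\reportwealth_{-i})$. Since there are only two items with $\ctr_{ii}=1$ and $\ctr_{ij}=\tfrac12$ for $j\neq i$, the FPPE falls into a handful of regimes indexed by (a) whether the opponent's declared budget binds, (b) whether the opponent's autobidder puts highest bang-per-buck on its own favorite item or on the (possibly cheaper) other item, and (c) whether agent $i$ buys one item or both; in each regime the prices are piecewise linear in $\reportwealth_i\primed$, so $g_i$ is explicit. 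The qualitative fact I expect to extract is a ``raiding'' phenomenon: whenever the opponent's report makes the item prices small, agent $i$ can declare a vanishingly small budget $\reportwealth_i\primed\to 0$ and still capture essentially all of her favorite item (value $\to 1$, or $\to\tfrac32$ if she can also sweep up the other item) for essentially no payment, so $g_i\ge 1-o(1)$ there — intuitively, an agent with an uncapped bid $\reportval_i=\infty$ can always out-compete the opponent on whichever item she wants, since the opponent's bid on item $i$ is only half its bid on item $-i$ and hence bounded by the current prices.

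The second step is a short case split, using that a pure Nash equilibrium requires $\util_1=g_1$ and $\util_2=g_2$ simultaneously. \emph{Case 1: each agent receives exactly her favorite item} (all of item $i$, none of item $-i$). Then $\util_i=1-\price_i$; and the equilibrium conditions already imply $\price_i\in[\price_{-i}/2,\,2\price_{-i}]$, so agent $i$ may shave her declared budget down to $\reportwealth_i\primed=\price_{-i}/2$ while keeping this allocation consistent (both autobidders still strictly prefer their own items), giving $g_i\ge 1-\price_{-i}/2$. Hence $\price_1\le\price_2/2$ and $\price_2\le\price_1/2$, forcing $\price_1=\price_2=0$; but FPPE budget-feasibility then forces $\reportwealth_1=\reportwealth_2=0$, so the allocation is decided by the tie-break and some agent $k$ receives total value at most $\tfrac34$. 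Against a zero-budget opponent, $k$ can deviate to $(\infty,\epsilon)$, sweep both items for total payment $\epsilon$ and value $\tfrac32$, and strictly improve --- contradiction. \emph{Case 2: otherwise}, some agent's allocation of her own favorite item is strictly below $1$ (this also subsumes an agent holding part of the opponent's favorite item and the degenerate zero-budget/zero-price configurations). That agent is receiving strictly less value than she would by capturing her favorite item outright; by the raiding deviation of Step~1 she can capture (essentially) all of it at a payment bounded by the opponent's competition, and the quantitative comparison supplied by the regime analysis shows this strictly increases her utility --- again a contradiction. This contradicts the existence of a pure Nash equilibrium; the final assertion of the claim then follows at once, since truthful reporting $(\reportval_i,\reportwealth_i)=(1,\tfrac12)$ is one particular message profile and therefore cannot be an equilibrium, so by symmetry each of the two agents has a profitable deviation from it.

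The main obstacle is the bookkeeping in Step~1: solving the inner FPPE under a unilateral deviation when the opponent holds an arbitrary two-parameter report $(\reportval_{-i},\reportwealth_{-i})$, in particular handling the boundary behaviors correctly --- the opponent's pacing multiplier hitting $0$ or $1$, its budget going from slack to binding, prices hitting $0$, and the formally indeterminate $0\cdot\infty$ ``bids'' produced by a report $\reportval=\infty$ --- and then checking across all regimes that no choice of the four parameters makes both best-response identities hold at once. The heterogeneity of the click-through rates is exactly what makes the claim non-trivial: it is the ability of an uncapped bidder to cheaply raid whichever item the opponent values most that destabilizes every candidate equilibrium.
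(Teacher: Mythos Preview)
Your allocation-based case split matches the paper's, and your Case~1 is essentially \Cref{claim:budget single allocation}: undercut to $\reportwealth_i\primed=\price_{-i}/2$, then handle the zero-price residue by a small-$\epsilon$ raid. Two minor slips there: zero prices only force $\reportval_i=0$ \emph{or} $\reportwealth_i=0$ for each $i$, not $\reportwealth_i=0$; and under the Case~1 allocation each agent's value is exactly $1$, not ``at most $\tfrac34$''. Neither breaks the argument, since your $(\infty,\epsilon)$ deviation still yields utility $\tfrac32-\epsilon>1$.

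The genuine gap is Case~2. You fold together two structurally different configurations: (a) both agents hold some of the other's favorite ($\alloc_{12},\alloc_{21}>0$) and (b) one agent holds both items ($\alloc_{ii}=1$, $\alloc_{i,-i}>0$). Sub-case~(a) is easy --- highest-bang-per-buck forces $\price_1\le\price_2/2$ and $\price_2\le\price_1/2$, hence zero prices (the paper's \Cref{claim:budget non favorate allocation}). Sub-case~(b) is where the content is, and your ``raiding'' heuristic does not go through as stated: the agent with $\alloc_{22}<1$ cannot capture item~$2$ at the current price, because raising her budget raises prices and shifts agent~$1$'s pacing, and the post-deviation price of item~$2$ depends on the full two-parameter opponent report $(\reportval_1,\reportwealth_1)$ in ways you explicitly defer to an unexecuted ``regime analysis.'' The paper sidesteps this entirely (\Cref{claim:budget two allocation}): it splits on whether agent~$1$ exhausts her \emph{true} budget $\tfrac12$, then uses \Cref{lem:poa pure allocation price relation} for both agents (when neither is budget-bound) or a marginal-$\epsilon$ budget increase for agent~$2$ (when agent~$1$ is bound) to lower-bound $\price_2$, and in each branch this bound contradicts $\payment_1=\price_2(3-\alloc_{22})\le\tfrac12$. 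That argument never solves a post-deviation FPPE against an arbitrary opponent message --- which is exactly the step you flag as the main obstacle and leave open.
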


{To prove this claim,} we enumerate all feasible allocations and argue that each of them cannot be induced by a pure Nash equilibrium. We distinguish three cases with different deviation strategy for each, see 
\Cref{claim:budget non favorate allocation,claim:budget single allocation,claim:budget two allocation}.
In the first case, we argue that allocations where each agent receives a positive fraction of her less favored item cannot be induced by a pure Nash equilibrium.
\begin{claim}
    \label{claim:budget non favorate allocation}
    In \Cref{example:budgeted agents}, there exists no pure Nash equilibrium whose induced allocation $\alloc$ satisfies $\alloc_{12} > 0$ and $\alloc_{21} > 0$.
\end{claim}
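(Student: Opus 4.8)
The plan is to suppose, for contradiction, that some pure Nash equilibrium $\{(\reportval_1,\reportwealth_1),(\reportval_2,\reportwealth_2)\}$ induces an inner FPPE with per-unit prices $\price=(\price_1,\price_2)$ and allocation $\alloc$ satisfying $\alloc_{12}>0$ and $\alloc_{21}>0$, and then to exhibit a profitable deviation for agent~$1$. Recall that in \Cref{example:budgeted agents} we have $\ctr_{ij}=\tfrac12+\tfrac12\indicator{i=j}$, $\val_1=\val_2=1$, $\truewealth_1=\truewealth_2=\tfrac12$, and (being budgeted agents) $\valuefunc_i$ and $\moneycost_i$ are the identity. The argument has two parts: (i) show that $\price_1=\price_2=0$, and (ii) use this to build the deviation.

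\textbf{Step 1: the prices vanish.} This follows from the ``highest bang-per-buck'' property of FPPE. Since $\alloc_{12}>0$, item $2$ lies in agent~$1$'s best-bang-per-buck set and $\reportval_1\ctr_{12}\ge\price_2$; because $\ctr_{12}=\tfrac12<1=\ctr_{11}$, this is possible only if $\price_2=0$ or $\price_1\ge 2\price_2$. Symmetrically, $\alloc_{21}>0$ forces $\price_1=0$ or $\price_2\ge 2\price_1$. These are jointly consistent only if $\price_1\price_2=0$; and once one price is zero, the corresponding agent's argmax condition (a finite positive value-density cannot dominate the infinite density of a zero-priced item) forces the other price to be zero as well. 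A short case check also covers the degenerate reports $\reportval_1=0$ or $\reportval_2=0$, where $\reportval_i\ctr_{ij}\ge\price_j$ directly gives a zero price. In all cases $\price_1=\price_2=0$.

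\textbf{Step 2: consequences and the deviation.} With zero prices the payments are $\payment_i=\sum_j\alloc_{ij}\price_j=0$, so agent~$1$'s true equilibrium utility is $\util_1=\valuefunc_1(\ctr_{11}\alloc_{11}+\ctr_{12}\alloc_{12})=\alloc_{11}+\tfrac12\alloc_{12}\le(1-\alloc_{21})+\tfrac12<\tfrac32$, using supply feasibility $\alloc_{11}+\alloc_{21}\le 1$ and $\alloc_{12}\le 1$; in fact $\util_1\le \tfrac32-\alloc_{21}$. Moreover, from $\price_1=\price_2=0$ and budget feasibility, agent~$2$ must have reported $\reportwealth_2=0$ (when $\reportval_2>0$) or else $\reportval_2=0$; in either case, in \emph{any} FPPE where some item carries a positive price, agent~$2$ can be allocated none of it (a positive allocation would violate her zero budget, or the requirement $\reportval_2\ctr_{2j}\ge\price_j$). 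Now let agent~$1$ deviate to $(\reportval_1\primed=N,\reportwealth_1\primed=\delta)$ with $\delta:=\alloc_{21}/2\in(0,\truewealth_1]$ and $N$ large. I will verify that the tuple with $\alloc_{1j}\primed=1$, $\alloc_{2j}\primed=0$, pacing multiplier $\pacescalar_1=\tfrac{2\delta}{3N}<1$, and prices $\price_1\primed=\tfrac{2\delta}{3}$, $\price_2\primed=\tfrac{\delta}{3}$ satisfies all four FPPE properties (agent~$1$'s bid on item $j$ is exactly $\price_j\primed$, her total payment is $\price_1\primed+\price_2\primed=\delta=\reportwealth_1\primed$, and agent~$2$ is forced out by the previous sentence); since FPPE prices and utilities are unique (\Cref{lem:FPPE uniqueness}), this is the relevant outcome. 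Hence agent~$1$'s deviating utility is $\util_1\primed=\valuefunc_1(\ctr_{11}+\ctr_{12})-\moneycost_1(\delta)=\tfrac32-\delta=\tfrac32-\alloc_{21}/2>\tfrac32-\alloc_{21}\ge\util_1$ (using $\alloc_{21}>0$), contradicting the equilibrium assumption. (One may alternatively take $\reportval_1\primed=\infty$, which \Cref{lem:reporting infinite value} shows is without loss for deviations; working with a large finite $N$ only avoids writing the pacing multiplier in an indeterminate form.)

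\textbf{Expected main obstacle.} The delicate point is Step~1's reliance on the bang-per-buck conditions when some price is zero: the value-densities $\reportval_i\ctr_{ij}/\price_j$ must be read in the extended reals, and one must handle separately the degenerate reports $\reportval_i=0$. A secondary (but routine) concern is confirming that the proposed deviated FPPE in Step~2 genuinely satisfies \Cref{def:FPPE} — in particular that agent~$2$ receives nothing because of budget/price feasibility rather than merely through a favorable tie-break (so that no appeal to \Cref{lem:agent in favor tie-breaking} is needed here).
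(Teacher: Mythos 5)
Your argument is correct and follows essentially the same route as the paper's proof: establish from the highest-bang-per-buck condition that $\price_1=\price_2=0$, deduce that the non-deviating agent must have reported $\reportval=0$ or $\reportwealth=0$, and then let agent $1$ deviate to a small positive budget to seize both items nearly for free. The only substantive differences are cosmetic: the paper combines $\price_2 \le \tfrac12 \price_1$ and $\price_1 \le \tfrac12 \price_2$ directly, deviates via $(\epsilon,\epsilon)$ for any sufficiently small $\epsilon$, and observes both agents have such a deviation, whereas you pin down the deviating budget as $\delta=\alloc_{21}/2$ and explicitly verify the resulting FPPE tuple; you are also more scrupulous than the paper about the degenerate case $\reportval_i=0$ when reading the bang-per-buck ratios, which is a mild improvement in rigor rather than a different idea.
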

\begin{proof}
    We prove this by contradiction. Suppose there exists a pure Nash equilibrium as desired. Let $\price_1, \price_2$ be the per-unit price of the inner FPPE. Since $\alloc_{12} > 0$ in FPPE, the ``highest bang-per-buck'' property implies $\frac{\ctr_{12}}{\price_2} \geq \frac{\ctr_{11}}{\price_1}$ and thus $\price_2 \leq \frac{1}{2}\price_1$.
    Similarly, $\alloc_{21} > 0$ in FPPE implies $\price_1 \leq \frac{1}{2}\price_2$.
    Thus, the inner FPPE has zero per-unit prices, i.e., $\price_{1} = \price_{2} = 0$,
    which can only be achieved from message profile $\{(\reportval_i, \reportwealth_i)\}$ where $\reportval_i = 0$ or $\reportwealth_i = 0$ for each agent $i$.
    It is straightforward to verify that $(\reportval_i\primed = \epsilon, \reportwealth_i\primed = \epsilon)$ is a profitable deviation for each agent $i$ with sufficiently small $\epsilon$, which leads to a contradiction.
\end{proof}

In the second case, we argue that the allocation where each agent receives her favored item cannot be induced by a pure Nash equilibrium. Note that this is the allocation induced by FPPE if both agents report their types truthfully.
\begin{claim}
    \label{claim:budget single allocation}
    In \Cref{example:budgeted agents}, there exists no pure Nash equilibrium whose induced allocation $\alloc$ satisfies $\alloc_{11} = \alloc_{22} = 1$.
\end{claim}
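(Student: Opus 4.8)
The plan is to argue by contradiction. Suppose some message profile is a pure Nash equilibrium whose inner FPPE induces $\alloc_{11}=\alloc_{22}=1$ (hence also $\alloc_{12}=\alloc_{21}=0$, since $\sum_i\alloc_{ij}\le1$), and let $\price_1,\price_2$ be the inner‑FPPE per‑unit prices. Recall that in \Cref{example:budgeted agents} each agent is a budgeted agent with $\val_i=1$, $\truewealthi=\tfrac12$, and identity money cost, and that $\ctr_{ii}=1$ while $\ctr_{i,3-i}=\tfrac12$. Thus agent $i$ receives exactly one click and pays $\alloc_{ii}\price_i=\price_i$; finiteness of her equilibrium utility forces $\price_i\le\tfrac12$, and her equilibrium utility is $\util_i=1-\price_i$. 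The ``highest‑bang‑per‑buck'' property applied to $\alloc_{ii}>0$ gives $\tfrac{\ctr_{ii}}{\price_i}\ge\tfrac{\ctr_{i,3-i}}{\price_{3-i}}$, i.e.\ $\price_i\le 2\price_{3-i}$, for $i=1,2$; and combining highest‑bang‑per‑buck and budget‑feasibility for agent $i$ (who wins all of item $i$ and none of item $3-i$) yields the useful identity $\price_i=\min\{\reportval_i,\reportwealth_i\}$.

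\textbf{Degenerate case $\price_1=\price_2=0$.} As in the proof of \Cref{claim:budget non favorate allocation}, zero prices force $\reportval_i=0$ or $\reportwealth_i=0$ for each $i$, so agent $2$ wins nothing at any positive price. Then agent $1$'s deviation to $(\reportval_1\primed,\reportwealth_1\primed)=(\infty,\epsilon)$ for small $\epsilon>0$ lets her capture both items entirely while paying exactly $\epsilon$ (using a tie‑breaking rule in her favor, which is w.l.o.g.\ by \Cref{lem:agent in favor tie-breaking}); she then receives $\ctr_{11}+\ctr_{12}=\tfrac32$ clicks, for utility $\tfrac32-\epsilon>1=\util_1$ — contradiction.

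\textbf{Main case $\price_1,\price_2>0$.} From $\price_i\le2\price_{3-i}$ we cannot have both $\price_1\ge2\price_2$ and $\price_2\ge2\price_1$ (that would force $\price_1=\price_2=0$), so at least one of these inequalities is strict; by the symmetry of the instance under swapping the two agents and the two items, assume $\price_2<2\price_1$. I claim agent $1$ profitably deviates to $(\reportval_1\primed,\reportwealth_1\primed)=(\infty,\,\tfrac12\price_2+\eta)$ for a sufficiently small $\eta\in(0,\,\price_1-\tfrac12\price_2)$. The crux is to pin down the inner‑FPPE prices under this deviation: I verify that the tuple in which agent $1$ receives all of item $1$ at price $\tfrac12\price_2+\eta$ and agent $2$ receives all of item $2$ at price $\price_2=\min\{\reportval_2,\reportwealth_2\}$ satisfies all four FPPE conditions, so by \Cref{lem:FPPE uniqueness} these are the prices. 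Two points require care: (i) at these prices item $1$ is strictly the cheaper item per click for agent $1$ (since $\tfrac{1}{\tfrac12\price_2+\eta}>\tfrac{1/2}{\price_2}$ for small $\eta$), so her infinite‑max‑bid demand concentrates on item $1$ and clears it at exactly her declared budget; and (ii) agent $2$ does not spill over onto the cheapened item $1$ — using $\price_2=\min\{\reportval_2,\reportwealth_2\}$, either her budget is already exhausted on her unit of item $2$, or $\reportval_2\ctr_{21}=\tfrac12\price_2<\tfrac12\price_2+\eta=\price_1\primed$, so item $1$ is neither affordable nor profitable for her. Given these prices, \Cref{lem:compute utility} (together with budget exhaustion, since $\reportval_1\primed=\infty$) yields agent $1$'s deviation utility $\reportwealth_1\primed\big(\tfrac{1}{\reportwealth_1\primed}-1\big)=1-\tfrac12\price_2-\eta$, which exceeds $\util_1=1-\price_1$ by the choice of $\eta$ (and $\reportwealth_1\primed<\price_1\le\tfrac12=\truewealth_1$, so the utility is finite) — contradicting equilibrium.

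\textbf{Main obstacle.} The delicate part is the FPPE analysis under agent $1$'s deviation in the main case — specifically, ruling out that her infinite‑max‑bid inflates the price of item $1$ beyond her declared budget, and that agent $2$ is partially displaced onto item $1$. Both are controlled by the identity $\price_i=\min\{\reportval_i,\reportwealth_i\}$ combined with the bang‑per‑buck inequalities; once the post‑deviation prices are fixed, the conclusion follows routinely via \Cref{lem:reporting infinite value} and \Cref{lem:compute utility}.
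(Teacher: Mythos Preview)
Your proof is correct and follows essentially the same approach as the paper: have the agent facing the higher item price deviate to a budget near $\tfrac12 p_2$, which drops the price of her own item while she still wins it, yielding a strict utility improvement. Your treatment is more careful than the paper's—you explicitly derive the identity $p_i=\min\{\reportval_i,\reportwealth_i\}$, add the perturbation $\eta>0$ so that agent~$2$'s bang-per-buck strictly favors item~$2$ after the deviation, and handle the degenerate case $p_1=p_2=0$ with a separate argument—whereas the paper simply asserts the post-deviation prices $p_1'=\tfrac12 p_2$, $p_2'=p_2$ and defers the zero-price case to the proof of \Cref{claim:budget non favorate allocation}.
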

\begin{proof}
    We prove this by contradiction. Suppose there exists a pure Nash equilibrium as desired. Let $\price_1, \price_2$ be the per-unit price of the inner FPPE.
    Without loss of generality,\footnote{Due to the symmetric of the instance, the same argument can be applied to $\price_1 \leq \price_2$ and $\price_2 > 0$ as well. The remaining case of $\price_1 = \price_2 = 0$ is already covered in the proof of \Cref{claim:budget non favorate allocation}.} we assume $\price_1 \geq \price_2$ and $\price_1 > 0$.
    The utility of agent 1 is $\util_1 = 1 - \price_1$.
    Consider the following profitable deviation $(\reportval_i\primed = \infty, \reportwealth_i\primed = \frac{1}{2}\price_2)$.
    It can be verified that the per-unit price and allocation
    of inner FPPE under such a deviation are
    $\price_1\primed = \frac{1}{2}\price_2$ and $\price_2\primed = \price_2$,
    and $\alloc_{11} = \alloc_{22} = 1$.
    Consequently, agent 1's utility under such a deviation is $\util_1\primed = 1 - \frac{1}{2}\price_2 \geq \util_1$, which leads to a contradiction.
\end{proof}

In the final case, we argue that allocations where one agent $i$ receives her favored item and a positive fraction of less favored item cannot be induced by a pure Nash equilibrium. The detailed formal proof for this case is complex and will be deferred to \Cref{apx:budgettwoalloc}. At a high-level, we utilize the relation  $\price_i = 2 \price_{1-i}$ on per-unit prices of inner FPPE due to the ``highest bang-per-buck'' property, and $\alloc_{ii} = 1, \alloc_{i, 1-i} > 0$. We then argue that depending on the magnitude of $\price_i$, either agent has a profitable deviation.
\begin{restatable}{claim}{budgettwoalloc}
    \label{claim:budget two allocation}
    In \Cref{example:budgeted agents}, there exists no pure Nash equilibrium whose induced allocation $\alloc$ satisfies $\alloc_{ii} = 1$ and $\alloc_{i,1-i} > 0$ for some agent $i$.
\end{restatable}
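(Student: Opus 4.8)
The plan is to argue by contradiction. Suppose some message profile is a pure Nash equilibrium whose inner FPPE $(\price,\alloc,\payment)$ satisfies $\alloc_{ii}=1$ and $\alloc_{i,1-i}>0$ for some agent~$i$; after relabeling, take $i=1$, and write $q\triangleq\alloc_{12}\in(0,1]$ and $p\triangleq\price_1$. (If $p=0$ then all prices are zero, and as in the proof of \Cref{claim:budget non favorate allocation} some agent profitably deviates to a small positive report, so assume $p>0$.) The first step is structural. Since $\alloc_{11}>0$ and $\alloc_{12}>0$, the highest-bang-per-buck property gives $\ctr_{11}/\price_1=\ctr_{12}/\price_2$, hence $\price_2=p/2$; supply feasibility together with $\alloc_{11}=1$ gives $\alloc_{21}=0$, so agent~$2$ receives only item~$2$, of mass $1-q$. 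Agent~$1$'s bid on item~$1$ is her paced value $\pacescalar_1\reportval_1$ and equals $\price_1$, so $p=\pacescalar_1\reportval_1\le\reportval_1$; her payment is $\payment_1=\price_1\alloc_{11}+\price_2\alloc_{12}=p(1+q/2)$, which must satisfy $\payment_1\le\min\{\truewealth_1,\reportwealth_1\}$ (else her utility is $-\infty$), and since $\truewealth_1=\tfrac12$ this yields $p(1+q/2)\le\tfrac12$ (so $p\le\tfrac12$) and $\reportwealth_1\ge p(1+q/2)$. Finally, from agent~$2$'s true linear type ($\val_2=1$, $\truewealth_2=\tfrac12$), her equilibrium utility is $\util_2=(1-q)\bigl(1-\tfrac{p}{2}\bigr)$.

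Next I would construct a profitable deviation for agent~$2$. By \Cref{lem:reporting infinite value} it suffices to consider deviations of the form $(\infty,\reportwealth_2\primed)$, and by \Cref{lem:agent in favor tie-breaking} I may assume ties are broken in agent~$2$'s favor. The deviation I would use is $\reportwealth_2\primed=\tfrac12\min\{\reportval_1,\reportwealth_1\}$. The key sub-claim is that in the resulting inner FPPE agent~$2$ wins all of item~$2$ at per-unit price $\reportwealth_2\primed$: declaring value $\infty$ makes agent~$2$ bid most aggressively on her favorite item~$2$, so agent~$1$ is confined to item~$1$ alone, where her bid rises to exactly $\min\{\reportval_1,\reportwealth_1\}$ --- capped by $\reportval_1$ once her pacing multiplier reaches $1$, and otherwise by $\reportwealth_1$, since buying all of item~$1$ at her own paced bid exactly exhausts her budget. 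Thus agent~$1$'s induced bid on item~$2$ is $\tfrac12\min\{\reportval_1,\reportwealth_1\}=\reportwealth_2\primed$, which agent~$2$ matches and wins. Since $\reportwealth_2\primed\le\tfrac14<\tfrac12=\truewealth_2$, her true budget is not violated, and her deviation utility is $\util_2\primed=\val_2\ctr_{22}-\reportwealth_2\primed=1-\tfrac12\min\{\reportval_1,\reportwealth_1\}$.

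The last step is to check $\util_2\primed>\util_2$ by a short case split on the cap. If $\reportval_1>p$ then agent~$1$'s budget binds in the original equilibrium, so $\reportwealth_1=\payment_1=p(1+q/2)$ and $\min\{\reportval_1,\reportwealth_1\}\le p(1+q/2)$; hence
\[
\util_2\primed-\util_2\;\ge\;\Bigl(1-\tfrac12 p(1+\tfrac q2)\Bigr)-(1-q)\Bigl(1-\tfrac p2\Bigr)\;=\;q\Bigl(1-\tfrac{3p}{4}\Bigr)\;>\;0,
\]
using $p\le\tfrac12$. Otherwise $\reportval_1=p$, and since $\reportwealth_1\ge p(1+q/2)>p$ we get $\min\{\reportval_1,\reportwealth_1\}=p$, so $\util_2\primed-\util_2=(1-\tfrac p2)-(1-q)(1-\tfrac p2)=q(1-\tfrac p2)>0$. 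In either case agent~$2$ strictly profits, contradicting the equilibrium.

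The hard part will be the FPPE recomputation in the second step: rigorously establishing that after agent~$2$'s deviation agent~$1$ is confined to item~$1$ and that the new per-unit price of item~$2$ equals $\tfrac12\min\{\reportval_1,\reportwealth_1\}$ requires separately treating $\reportval_1\le\reportwealth_1$ and $\reportval_1>\reportwealth_1$ and verifying the budget-feasibility and bang-per-buck conditions of \Cref{def:FPPE} for both agents in the deviated market (including that agent~$1$ has no incentive to keep a sliver of item~$2$, which follows because her induced bid there exactly ties agent~$2$'s while agent~$2$ gets favorable tie-breaking). Everything else --- the structural facts and the final inequalities --- is routine.
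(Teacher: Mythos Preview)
Your argument is correct and takes a genuinely different route from the paper's. The paper splits on whether agent~$1$ exhausts her \emph{true} budget $\truewealth_1$. When she does not, it invokes \Cref{lem:poa pure allocation price relation} for both agents to force $\price_1\ge \tfrac{4}{5}$, contradicting $\payment_1<\tfrac{1}{2}$. When she does, it considers an infinitesimal increase $\epsilon$ in agent~$2$'s reported budget, observes that the new prices move by $(\tfrac{2}{3}\epsilon,\tfrac{1}{3}\epsilon)$ (the bang-per-buck relation $\price_1=2\price_2$ must persist), and lets $\epsilon\to 0$ to obtain $\price_2\ge 1-\tfrac{1}{3}\alloc_{22}$, which is incompatible with $\truewealth_1=(3-\alloc_{22})\price_2$. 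You instead exhibit one finite deviation $\reportwealth_2\primed=\tfrac12\min\{\reportval_1,\reportwealth_1\}$ that works uniformly; your case split on $\reportval_1>p$ versus $\reportval_1=p$ (whether agent~$1$'s reported budget or max bid binds in the original inner FPPE) replaces the paper's split on true-budget exhaustion. Two side notes: the FPPE recomputation you flag as ``hard'' is lighter than you suggest, since by \Cref{lem:FPPE uniqueness} it suffices to verify that your proposed prices and allocation satisfy \Cref{def:FPPE}; and your appeal to \Cref{lem:agent in favor tie-breaking} is unnecessary, because at the deviated prices $\ctr_{22}/\price_2'>\ctr_{21}/\price_1'$ strictly, so agent~$2$ buys only item~$2$ and budget exhaustion forces $\alloc_{22}'=1$ under any tie-breaking rule. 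Your route is self-contained and avoids both \Cref{lem:poa pure allocation price relation} and the marginal price-sensitivity computation; the paper's route, by contrast, exercises its main technical lemma in a concrete case.
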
 


\section{Main Result for Pure Nash Equilibrium}
\label{sec:poa pure}
{In this section, we analyze pure equilibria of the metagame, and present a tight bound on the price of anarchy.}
\begin{theorem}
\label{thm:poa pure}
    In the metagame,
    the price of anarchy under pure Nash equilibrium is $\purePoA = 2$.
\end{theorem}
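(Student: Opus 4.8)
The plan is to establish both directions of $\purePoA=2$: the bound $\purePoA\le 2$ at every pure Nash equilibrium, and a family of instances matching it. For the upper bound I would fix a pure Nash equilibrium, let $(\price,\alloc,\payment)$ be the inner FPPE it induces, and let $\optalloc$ be a liquid-welfare-optimal allocation, so $\OPT=\sum_i\wtp_i(\optalloc_i)$. The heart of the argument is a per-agent inequality: for every agent $i$,
\[
  \wtp_i(\optalloc_i)\ \le\ \wtp_i(\alloc_i)\ +\ \sum_{j\in[m]}\pricej\,\optallocij .
\]
Granting this, I would sum over $i$ and argue
\[
  \OPT\ \le\ \twelfare(\alloc)\ +\ \sum_{j}\pricej\sum_i\optallocij\ \le\ \twelfare(\alloc)\ +\ \sum_{j}\pricej\ =\ \twelfare(\alloc)\ +\ \sum_i\paymenti\ \le\ 2\,\twelfare(\alloc),
\]
where the second inequality uses feasibility of $\optalloc$ ($\sum_i\optallocij\le 1$); the middle equality uses the supply-feasibility property of FPPE, so that every item with $\pricej>0$ is fully sold and hence $\sum_j\pricej=\sum_j\pricej\sum_i\allocij=\sum_i\paymenti$; and the last inequality uses that each agent can guarantee herself non-negative utility by reporting $(0,0)$, so at equilibrium $\moneycosti(\paymenti)\le\valuefunctioni(\sum_j\ctrij\allocij)$ and $\paymenti\le\truewealthi$, whence $\paymenti\le\wtp_i(\alloc_i)$.

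To prove the per-agent inequality I would fix an agent $i$ with $\wtp_i(\optalloc_i)>\wtp_i(\alloc_i)$ (otherwise it is trivial) and examine the deviation in which she removes her maximum-bid constraint. By \Cref{lem:reporting infinite value} it suffices to consider deviations $(\reportval_i\primed=\infty,\reportwealth_i\primed=B)$ with $B\le\truewealthi$; by \Cref{lem:compute utility} such a deviation has agent $i$ pay exactly $B$ and win $\big(\max_{j}\ctrij/\pricej\primed(B)\big)\,B$ clicks, where the FPPE prices $\pricej\primed(B)$ are monotone nondecreasing in $B$. Reading the FPPE as a competitive equilibrium, agent $i$ is ``buying'' a bundle at those prices, and the key point — which I would isolate as \Cref{lem:poa pure allocation price relation} — is to bound how far $\pricej\primed(B)$ can rise above $\pricej$: that lemma controls the equilibrium rate of substitution between $\valuefunctioni$ and $\moneycosti$ in terms of the prices $\price$ and agent $i$'s shares $\{\allocij\}$, with the bound degrading only as some $\allocij\to 1$. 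I would then split into two cases. If reaching $\optalloc_i$ does not require agent $i$ to corner any impression, then a budget $B$ slightly above $\sum_j\pricej\optallocij$ still secures at least $\sum_j\ctrij\optallocij$ clicks at the deviated prices; by concavity of $\valuefunctioni$ and convexity of $\moneycosti$, a violation of the per-agent inequality would make this deviation strictly profitable, contradicting the equilibrium. If instead $\optalloc_i$ would require cornering some impression $j$, then agent $i$'s equilibrium share of $j$ is already close to $1$ and, by the same lemma, $\pricej$ is already large, so a short computation shows $\sum_j\pricej\optallocij$ alone already dominates $\wtp_i(\optalloc_i)-\wtp_i(\alloc_i)$. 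I expect the main obstacle to be exactly \Cref{lem:poa pure allocation price relation}: pinning down how sensitive the FPPE prices are to a single agent's declared budget, and cleanly separating the ``low-price'' and ``cornering'' regimes.

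For the matching lower bound I would take a single impression ($m=1$, $\ctr_{11}=\ctr_{21}=1$) with two budgeted agents: agent $1$ with value per click $\val_1=1$ and no budget ($\truewealth_1=\infty$), and agent $2$ with value per click $\val_2=N$ and budget $\truewealth_2=1$, for a large parameter $N$. I claim that the profile in which both agents report truthfully is a pure Nash equilibrium. In the induced FPPE, agent $1$'s infinite budget keeps $\price_1\ge 1$ and agent $2$'s high value forces her to exhaust her unit budget, so $\alloc_{21}=1/\price_1\le 1$; together with supply feasibility this pins down $\price_1=1$, $\alloc_{21}=1$, $\alloc_{11}=0$. Under any deviation by agent $1$, agent $2$ still exhausts her unit budget, so the resulting price $\price_1\primed$ is $\ge 1$ and a short case analysis bounds agent $1$'s surplus by $\max\{2-\price_1\primed-1/\price_1\primed,\ 1-\price_1\primed\}\le 0$, which she already attains. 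Under any deviation by agent $2$, agent $1$'s report still forces $\price_1\primed\ge 1$, so agent $2$'s payment is at least her allocation and her utility is at most $(N-1)\,\alloc_{21}\primed\le N-1$, which she also already attains. Thus the profile is a pure Nash equilibrium, with liquid welfare $\wtp_1(0)+\wtp_2(1)=0+\min\{1,N\}=1$; meanwhile the allocation $\optalloc_{11}=1-\tfrac1N$, $\optalloc_{21}=\tfrac1N$ has liquid welfare $(1-\tfrac1N)+\min\{1,N\cdot\tfrac1N\}=2-\tfrac1N$. Hence $\purePoA\ge 2-\tfrac1N$ for every $N$, so $\purePoA\ge 2$, which together with the upper bound gives $\purePoA=2$.
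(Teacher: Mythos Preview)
Your scaffolding is right and matches the paper: establish the per-agent bound $\wtp_i(\optalloc_i)\le\wtp_i(\alloc_i)+\sum_j\pricej\optallocij$, sum, and use $\sum_j\pricej=\sum_i\paymenti\le\twelfare(\alloc)$; your lower-bound instance is exactly the paper's \Cref{example:pure lower bound} with the two agents relabeled. The gap is in how you propose to prove the per-agent inequality. Your ``global'' deviation---set $B\approx\sum_j\pricej\optallocij$ and claim this secures at least $\sum_j\ctrij\optallocij$ clicks---does not hold: after increasing the budget the FPPE prices $\pricej\primed$ rise, so agent $i$ now collects $B/\min_j(\pricej\primed/\ctrij)$ clicks, which is generically \emph{less} than $\sum_j\ctrij\optallocij$. (This is precisely the technique the paper uses for the \emph{mixed} case in \Cref{thm:poa mixed}, and there it only yields a factor $4$.) Your proposed case split is also mis-aimed: the $(1-\allocij)$ degradation in \Cref{lem:poa pure allocation price relation} depends on the \emph{equilibrium} shares $\allocij$, not on the optimal $\optallocij$, so ``$\optalloc_i$ requires cornering'' says nothing about whether the lemma's bound is loose.

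The paper avoids any global deviation and any case split on cornering. It first partitions agents: those who exhaust their true budget ($\wtp_i(\optalloc_i)\le\truewealthi=\wtp_i(\alloc_i)$), those with fewer optimal clicks than equilibrium clicks (monotonicity), and the remainder. For the remainder it uses only the \emph{marginal} inequality of \Cref{lem:poa pure allocation price relation}---obtained from an infinitesimal budget increase---and then chains convexity of $\moneycosti$ and concavity of $\valuefunctioni$ to get
\[
\wtp_i(\optalloc_i)-\wtp_i(\alloc_i)\ \le\ \Big(\sum_{j}\ctrij(\optallocij-\allocij)\Big)\cdot\frac{\valuefuncderivativei}{\moneycostderivativei}\ \le\ \Big(\sum_{j}\ctrij(\optallocij-\allocij)\Big)\cdot\frac{\sum_{j\in\purchaseseti}\pricej}{\sum_{j\in\purchaseseti}(1-\allocij)\ctrij}.
\]
The $(1-\allocij)$ factor is then absorbed in one algebraic step, $\frac{\sum_j\ctrij(\optallocij-\allocij)}{\sum_{j\in\purchaseseti}(1-\allocij)\ctrij}\le\frac{\sum_j\ctrij\optallocij}{\sum_{j\in\purchaseseti}\ctrij}$, and the highest-bang-per-buck relation $\pricej=\Delta_i\ctrij$ on $\purchaseseti$ turns the result into $\sum_j\pricej\optallocij$. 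That algebraic cancellation is what replaces your low-price/cornering dichotomy and is the step you are missing.
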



\begin{example}[Lower bound of PoA under pure Nash equilibrium]
\label{example:pure lower bound}
    Consider a scenario with two agents (one linear agent and one budgeted agent) and one item. Let us assume that the click-through rates are the same for both agents, i.e., $\ctr_{11} = \ctr_{21} = 1$. Agent $1$ has a budget utility model with type $\val_1 = K$ and $\truewealth_1 = 1$; while agent $2$ has a linear utility model with type $\val_2 = 1$. Here we assume $K$ is a sufficiently large constant.

    The optimal liquid welfare is $2 - \sfrac{1}{K}$. This is achieved through an allocation where budgeted agent $1$ receives a $\frac{1}{K}$-fraction of the item, and linear agent $2$ receives a $\frac{K-1}{K}$-fraction of the item.
    By employing a similar argument to the one presented in \Cref{claim:linear efficient}, we can be verified that a pure Nash equilibrium is achieved when both agents report their types truthfully: $\reportval_1 = \val_1 = K$, $\reportwealth_1 = \truewealth_1 = 1$, $\reportval_2 = \val_2 = 1$, and $\reportwealth_2 = \infty$. In this equilibrium, the per-unit price of the inner FPPE is $\price_1 = 1$, and agent $1$ receives the entire item. Consequently, the achieved liquid welfare is 1. Letting $K$ approach infinity, the lower bound of PoA under pure Nash equilibrium is obtained as desired.
\end{example}

{In the rest of this section we prove the upper bound in \Cref{thm:poa pure}: $\purePoA \leq 2$. First,} we introduce a characterization of the allocation for each agent and per-unit prices of the inner FPPE.
\begin{lemma}
\label{lem:poa pure allocation price relation}
    In the metagame, for every pure Nash equilibrium, suppose $\price, \alloc, \payment$
    are the per-unit prices, allocation and payment of the induced FPPE.
    For every agent $i$,
    \yfedit{
    let $\purchaseseti \triangleq \{j\in[m]: \alloc_{ij} > 0\}$ be the subset of items for which agent $i$ receives a strictly positive fraction.
    Suppose agent $i$ does not exhaust her true budget $\truewealth_i$,
    i.e., $\payment_i < \truewealth_i$, then
\begin{align*}
\arraycolsep=1.4pt\def\arraystretch{2.2}
    \frac
        {\valuefuncderivative_i
        \left(\sum_{j\in[m]}\ctrij\allocij\right)}
    {\moneycostderivative_i
        \left(\payment_i\right)}
        \leq
        \left\{
        \begin{array}{ll}
         \frac{\sum_{j\in \purchaseseti}\price_j}{\sum_{j\in \purchaseseti}(1-\allocij)\ctrij}
     \qquad \qquad        &  \text{if $\purchaseseti \not= \emptyset$}\\
         \min_{j\in[m]}\frac{\price_j}{\ctrij}    & \text{otherwise (i.e., $\purchaseseti = \emptyset$)}
        \end{array}
     \right.
\end{align*}
}where $\valuefuncderivative_i$ and $\moneycostderivative_i$ are the derivative of valuation function $\valuefunc_i$ and money cost function $\moneycost_i$ defined in \Cref{sec:prelim}, respectively.
\end{lemma}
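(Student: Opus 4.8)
The plan is to read the inequality off the no-profitable-deviation condition, using an infinitesimal increase of agent $i$'s declared budget. Fix a pure Nash equilibrium with inner FPPE $(\price,\alloc,\payment)$ and an agent $i$ with $\payment_i<\truewealth_i$; write $q_i:=\sum_{j}\ctrij\allocij$ for her number of clicks, so that by \Cref{lem:compute utility} her equilibrium utility is $\util_i=\valuefunctioni(q_i)-\moneycosti(\payment_i)$. When $\purchaseseti\neq\emptyset$, the highest-bang-per-buck property of FPPE forces every $j\in\purchaseseti$ to share the same maximal per-click value $\ctrij/\pricej=\max_{j'}\ctrij'/\pricej'=:\lambda$, hence $q_i=\lambda\payment_i$ and $\sum_{j\in\purchaseseti}\pricej=\tfrac1\lambda\sum_{j\in\purchaseseti}\ctrij=:T/\lambda$. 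Thus the right-hand side of the claimed bound equals $A/B$, where $A:=\sum_{j\in\purchaseseti}\pricej=T/\lambda$ and $B:=\sum_{j\in\purchaseseti}(1-\allocij)\ctrij=T-q_i$; this reformulation is what I will target.

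Now consider the deviation where agent $i$ reports $(\reportval_i\primed=\infty,\ \reportwealth_i\primed=\payment_i+\eta)$ for small $\eta>0$, which is feasible since $\payment_i<\truewealth_i$. Declaring an infinite value, agent $i$ exhausts her declared budget (FPPE budget feasibility), so she pays exactly $\payment_i+\eta$, and by \Cref{lem:compute utility} — invoking \Cref{lem:agent in favor tie-breaking} to fix a tie-breaking rule in her favor — she wins $h(\eta):=(\payment_i+\eta)\cdot\max_j\tfrac{\ctrij}{\pricej\primed}$ clicks, where $\price\primed$ are the resulting FPPE prices. The crux is the estimate
\[
 h(\eta)\ \geq\ q_i+\tfrac{B}{A}\,\eta-o(\eta).
\]
Granting it, the Nash condition $\valuefunctioni(q_i)-\moneycosti(\payment_i)\geq \valuefunctioni(h(\eta))-\moneycosti(\payment_i+\eta)$ rearranges to $\valuefunctioni(h(\eta))-\valuefunctioni(q_i)\leq \moneycosti(\payment_i+\eta)-\moneycosti(\payment_i)$; bounding the left side below using that $\valuefunctioni$ is increasing and concave (so $\valuefuncderivative_i$ is continuous and nonincreasing) and the right side above using that $\moneycosti$ is convex, then dividing by $\eta$ and letting $\eta\to0$, yields $\tfrac{B}{A}\,\valuefuncderivative_i(q_i)\leq \moneycostderivative_i(\payment_i)$, which is exactly the claimed inequality.

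The main obstacle is the estimate $h(\eta)\geq q_i+\tfrac{B}{A}\eta-o(\eta)$, a quantitative price-sensitivity statement for FPPE, which has two pieces. First, $h(0)=q_i$: the equilibrium prices $\price$ together with agent $i$'s equilibrium allocation $\alloci$ (supported on $\purchaseseti$, hence on maximal-bang-per-buck items) still satisfy all FPPE properties when agent $i$ switches to $(\infty,\payment_i)$ with all other reports fixed, so by the uniqueness of FPPE prices (\Cref{lem:FPPE uniqueness}) the prices are unchanged and she again wins $\lambda\payment_i=q_i$ clicks. Second, writing $h(\eta)=(\payment_i+\eta)/\mu(\eta)$ with $\mu(\eta):=\min_j \pricej\primed/\ctrij$ and $\mu(0)=1/\lambda$, the estimate is equivalent to $\mu'(0^+)\leq 1/T=1/\sum_{j\in\purchaseseti}\ctrij$: a marginal dollar of agent $i$'s declared budget raises her minimum price-per-click by at most $1/T$. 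This is the quantitative form of the observation in the introduction that price hyper-inflation occurs only when one agent already wins a huge share of an impression. To prove it I would use the competitive-equilibrium reading of FPPE: the extra $\eta$ can be absorbed only by outbidding the other autobidders on (part of) $\purchaseseti$, and since the displaced autobidders, by highest-bang-per-buck, only rebid onto items of equal per-click value to them, the worst case is that the whole $\eta$ flows into $\sum_{j\in\purchaseseti}\pricej$; as those items carry $T$ clicks in total, the per-click price rises by at most $\eta/T+o(\eta)$. Making this accounting rigorous — pinning down which part of $\purchaseseti$ inflates, controlling the cascade of rebids, and handling degenerate ties — is the technical heart of the lemma.

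Finally, the case $\purchaseseti=\emptyset$ is the degenerate version of the same argument: here $q_i=\payment_i=0$ and $\util_i=\valuefunctioni(0)=0$, and the deviation $(\infty,\eta)$ gives $h(\eta)\geq \eta\cdot\max_j\tfrac{\ctrij}{\pricej}-o(\eta)$, since an infinitesimal budget perturbs the FPPE prices by only $O(\eta)$ (no impression is substantially reallocated). The Nash condition then yields, after the same limiting argument, $\max_j\tfrac{\ctrij}{\pricej}\cdot\valuefuncderivative_i(0)\leq\moneycostderivative_i(0)$, i.e.\ $\valuefuncderivative_i(0)/\moneycostderivative_i(0)\leq \min_j \pricej/\ctrij$, as required.
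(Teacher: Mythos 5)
Your skeleton is exactly the paper's: fix the equilibrium, perturb agent $i$'s declared budget by $\eta$ with an infinite declared value, read the inequality off the first-order Nash condition as $\eta\to 0$, and handle $\purchaseseti=\emptyset$ by a separate small-budget contradiction. Your algebraic reduction is also correct: with $A=\sum_{j\in\purchaseseti}\pricej$, $B=\sum_{j\in\purchaseseti}(1-\allocij)\ctrij$, the whole lemma does come down to the estimate $h(\eta)\geq q_i+\tfrac{B}{A}\eta-o(\eta)$.

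The genuine gap is that this estimate — which you yourself flag as ``the technical heart of the lemma'' — is only asserted, via an informal accounting of where the extra $\eta$ of spend can go. The route you sketch (tracking which items in $\purchaseseti$ inflate and ``controlling the cascade of rebids'') is not carried out and would be delicate, since a budget increase by one agent can trigger global price changes. The paper closes this gap with two ingredients that avoid any per-item tracking. First, \Cref{lem:fppe increasing budget}: increasing one agent's budget weakly increases every FPPE price and increases the total price sum by at most the budget increment; this is proven not by following bidding dynamics but by the BFPM Pareto-dominance machinery of \cite{CKPSSSW-22} (the old equilibrium remains a budget-feasible pacing profile, so the new FPPE dominates it, and a revenue comparison against the instance with agent $i$ removed gives the increment bound). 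Second, the mediant inequality $\max_{j\in[m]}\ctrij/\pricej'\geq\big(\sum_{j\in\purchaseseti}\ctrij\big)/\big(\sum_{j\in\purchaseseti}\pricej'\big)$, which converts the aggregate bound $\sum_{j\in\purchaseseti}\pricej'\leq A+\eta$ directly into $h(\eta)\geq(t_i+\eta)\tfrac{T}{A+\eta}=q_i+\tfrac{B}{A}\eta-o(\eta)$ with $T=\sum_{j\in\purchaseseti}\ctrij$, regardless of how the price increase distributes across items. Your claimed bound $\mu'(0^+)\leq 1/T$ does follow from these two facts, but not from the argument you give; to complete the proof you would need to establish the aggregate revenue-sensitivity lemma (or an equivalent), which is the one nontrivial structural fact about FPPE that the whole analysis rests on. The $\purchaseseti=\emptyset$ case has the same (smaller) gap: ``an infinitesimal budget perturbs the FPPE prices by only $O(\eta)$'' is again exactly \Cref{lem:fppe increasing budget} and needs proof.
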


\yfedit{One interpretation of \Cref{lem:poa pure allocation price relation} is as follows.  First consider a hypothetical setting where item prices are fixed and the agent is acting as a price-taker, as in ``standard'' market equilibrium. Then it would be optimal for the agent to select items until the marginal price equals the marginal value-per-unit (i.e., $\valuefuncderivative(\cdot)$) divided by the marginal cost-per-unit (i.e., $\moneycostderivative(\cdot)$), which can be formulated as the inequality in \Cref{lem:poa pure allocation price relation} with the discounting term $(1-\allocij)$ in the denominator of the right-hand side removed. However, in our setting, the agent is not acting as a price-taker: her behavior distorts the prices, and hence distorts the relationship between her allocation and the prices. \Cref{lem:poa pure allocation price relation} shows that this distortion is proportional to her allocation at equilibrium.  If an agent is taking almost all of the items they care about ($\allocij$ close to 1 for all $j$ in $\purchaseseti$), the denominator on the right-hand side is close to 0 and the distortion is very large. In contrast, as long as a constant fraction of the items remains, the distortion is small.  This is useful for our efficiency analysis: roughly speaking, an agent who gets a small allocation is acting approximately like a price-taker (and thus an approximate first-welfare-theorem analysis applies). On the other hand, an agent who gets a large allocation is anyway making a large contribution to the liquid welfare.
}

\yfdelete{To gain a better understanding behind \Cref{lem:poa pure allocation price relation}, consider a simple instance with one item and linear utilities. In this instance, each agent $i$ has a linear utility model with type $\val_i$, i.e., $\valuefuncderivative_i(\sum_{j\in[m]}\ctrij\allocij) = \val_i$ and $\moneycostderivative_i(\payment_i) = 1$. Moreover, suppose click-through-rate $\ctr_{i1} = 1$ for each agent~$i$. For this instance, the inequality in \Cref{lem:poa pure allocation price relation} can be simplified as $\alloc_{i1} \geq 1 - \sfrac{\price_1}{\val_i}$ which serves as a lower bound for agent $i$'s allocation as a function of per-unit price $\price_1$ of the FPPE and her value per click $\val_i$.}

At a high level, the proof of \Cref{lem:poa pure allocation price relation} argues that when the inequality in the lemma statement is violated, agent $i$ can increase her utility by increasing her reported budget for a sufficiently small amount. To support this argument, we require the following technical lemma concerning the sensitivity of the per-unit prices with respect to the reported budgets in FPPE. Its proof is deferred to \Cref{apx:increasingBudget}.

\begin{restatable}{lemma}{increasingBudget}
\label{lem:fppe increasing budget}
\label{lem:fppe change budget}
In an FPPE, when the budget of an arbitrary agent $i$ is increased from $\wealth_i$ to $\wealth_i\primed$, the per-unit price $\price_j$ for every item $j$ weakly increases. Furthermore, the resulting revenue $\sum_{j\in[m]}\pricej$ increases by at most $\wealth_i\primed$.
\end{restatable}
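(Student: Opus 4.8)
\textbf{Proof plan for Lemma~\ref{lem:fppe increasing budget}.}

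The plan is to prove the monotonicity claim first and then use it to derive the revenue bound. For monotonicity, I would argue by contradiction: suppose that after increasing agent $i$'s budget from $\wealth_i$ to $\wealth_i\primed$, the price of some item strictly decreases, and let $S$ be the (nonempty) set of items whose price strictly decreases. Consider the agents that, in the \emph{new} FPPE, are allocated some item of $S$. By the ``highest bang-per-buck'' property, any such agent $k$ has all of her purchased items among those with the smallest price ratio $\price_j/(\val_k\ctr_{kj})$, and since the prices of items in $S$ went down while prices outside $S$ went (weakly) up, these agents concentrate their demand on $S$ in the new equilibrium at least as much as before. Meanwhile, by supply feasibility, the items in $S$ still have total allocation $1$ each (their prices are positive unless they were zero already, and a price that strictly decreased to $0$ only makes the revenue statement easier). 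The total budget available to agents buying in $S$ cannot have decreased except possibly through agent $i$ — but agent $i$'s budget went \emph{up}. So the total money spent on $S$ weakly increased, while the per-unit prices on $S$ strictly decreased and total supply of $S$ is unchanged; that forces the total money spent on $S$ to strictly decrease, a contradiction. Making this ``concentration of demand'' argument precise is the main obstacle: one has to handle agents who are indifferent between items in $S$ and items outside $S$, and track how the pacing multipliers of the relevant agents change. The cleanest route is probably to invoke the competitive-equilibrium / convex-program characterization of FPPE from~\cite{CKPSSSW-22} (the Eisenberg--Gale-type formulation), under which monotonicity of prices in one agent's budget is a standard comparative-statics fact; I would cite that structure and give the short exchange argument above as intuition.

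For the revenue bound, once monotonicity is established, let $\price$ and $\price\primed$ denote the FPPE prices before and after the budget increase, with associated allocations $\alloc,\alloc\primed$. By supply feasibility, for every item $j$ with $\price_j\primed>0$ we have $\sum_{k}\alloc_{kj}\primed=1$, so
\begin{align*}
\sum_{j\in[m]}\price_j\primed
= \sum_{j\in[m]}\price_j\primed\sum_{k\in[n]}\alloc_{kj}\primed
= \sum_{k\in[n]}\sum_{j\in[m]}\price_j\primed\alloc_{kj}\primed
= \sum_{k\in[n]}\payment_k\primed,
\end{align*}
using the payment-calculation property, and similarly $\sum_{j}\price_j=\sum_k \payment_k$ (items with zero price contribute nothing on either side, so the identity is exact). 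Now for every agent $k\neq i$, the budget $\wealth_k$ is unchanged, so budget feasibility gives $\payment_k\primed\le \wealth_k$; but I want to compare $\payment_k\primed$ to $\payment_k$, not to $\wealth_k$. Here I would use the ``budget feasibility'' clause more carefully: if $\payment_k\primed>\payment_k$ then agent $k$ is spending strictly more in the new equilibrium, which (since her own budget cap did not move) means she was not budget-constrained before, i.e. $\max_j \val_k\ctr_{kj}/\price_j\le 1$; combined with the fact that prices only went up, her demand value per click only dropped, so she cannot be buying \emph{more} in value terms, and a short argument bounds the increase. The simplest clean bound, however, is: $\sum_{k\neq i}\payment_k\primed \le \sum_{k\neq i}\payment_k$ — each other agent's spend weakly decreases because prices rose and her budget is fixed — while $\payment_i\primed \le \wealth_i\primed$. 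Hence
\begin{align*}
\sum_{j\in[m]}\price_j\primed
= \payment_i\primed + \sum_{k\neq i}\payment_k\primed
\le \wealth_i\primed + \sum_{k\neq i}\payment_k
\le \wealth_i\primed + \sum_{j\in[m]}\price_j,
\end{align*}
which is exactly the claimed bound that revenue increases by at most $\wealth_i\primed$.

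The step I expect to be genuinely delicate is the claim ``each other agent's total spend weakly decreases when prices weakly increase and her budget is fixed.'' This is intuitive (a price-taker facing uniformly higher prices and a fixed budget spends no more, and an FPPE agent behaves like a constrained price-taker), but it needs the competitive-equilibrium interpretation of FPPE noted after Definition~\ref{def:FPPE}: agent $k$'s allocation maximizes her budgeted linear utility at the posted prices, so if she was budget-exhausting before she spends exactly $\wealth_k$ in both (no increase), and if she was not budget-exhausting before then her bid $\val_k\ctr_{kj}$ was below $\price_j$ for her best item, and after a price increase she can only buy weakly less, so her spend weakly decreases. I would write this as a short self-contained sub-argument or, again, defer to the convex-program comparative statics. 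Everything else is bookkeeping with the four FPPE properties.
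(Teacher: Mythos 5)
Your approach is genuinely different from the paper's, and the revenue-bound half has a gap you yourself flag but do not close.

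For monotonicity the paper does not run the exchange/contradiction argument you sketch; it observes that the old FPPE pacing multipliers and allocation, unchanged, remain a valid budget-feasible first-price pacing multiplier (BFPM) once the budget is raised, and then invokes the \cite{CKPSSSW-22} fact that the FPPE Pareto-dominates all BFPMs in pacing multipliers and prices. Your suggestion to ``defer to the convex-program comparative statics'' lands in the same place, and your exchange-argument intuition is fine as intuition, so this half is acceptable modulo making one of those routes precise.

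The revenue bound is where the approaches diverge substantively and where your proof has a real hole. You reduce the claim to the inequality $\sum_{k\neq i}\payment_k\primed \le \sum_{k\neq i}\payment_k$, i.e.\ that every other agent's spend weakly decreases once prices rise and her budget is fixed, and you sketch two cases. The first case is stated incorrectly (a budget-exhausting agent need not ``spend exactly $w_k$ in both''; the equality condition in budget feasibility can stop binding after prices rise, though this only helps you since her spend then drops). The second case is the genuine gap: for an agent $k$ who was not budget-exhausting, her purchasable items after the change are a subset of those before (exactly the ones whose price did not move), but her \emph{allocation} of those items is pinned down only by supply feasibility and tie-breaking, not by any individual optimality condition that would cap her spend. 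If another, budget-constrained, competitor for those items becomes unable to afford as much, $k$ can absorb more supply at the same per-unit price and her payment $\payment_k\primed = \sum_j p_j x_{kj}\primed$ can go up, bounded only by $\wealth_k$. Your sketch does not rule this out, and the intuition ``a price-taker facing higher prices with a fixed budget spends no more'' does not apply because for such an agent the operative prices did not rise. You would need a separate aggregate argument. The paper avoids the issue entirely: it compares both FPPEs to the FPPE of the instance $\instance\doubleprimed$ with agent $i$ \emph{removed}, using two \cite{CKPSSSW-22} facts --- adding an agent weakly increases revenue (so $\mathrm{Rev}(\instance\doubleprimed)\le \sum_j\price_j$), and the FPPE dominates every BFPM --- together with an explicit BFPM for a sub-instance of $\instance\doubleprimed$ constructed from the post-increase FPPE by deleting agent $i$'s allocation (and some fractionally unsold supply), which shows $\mathrm{Rev}(\instance\doubleprimed) \ge \sum_j\price_j\primed - \wealth_i\primed$. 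Chaining gives $\sum_j\price_j\primed \le \sum_j\price_j + \wealth_i\primed$ with no need for the per-agent spend monotonicity you rely on. If you want to keep your accounting style, you need to either prove the aggregate inequality $\sum_{k\ne i}\payment_k\primed\le\sum_{k\ne i}\payment_k$ (which I do not see how to do without essentially redoing the paper's sandwich), or switch to the removal-and-sandwich argument.
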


In the following, we prove \Cref{lem:poa pure allocation price relation} and then  \Cref{thm:poa pure}.


\subsection{Proof of \texorpdfstring{\Cref{lem:poa pure allocation price relation}}{}}

Fix an arbitrary pure Nash equilibrium and suppose $\price, \alloc, \payment$
are the per-unit prices, allocation and payment of the inner FPPE, respectively.
Fix an arbitrary agent $i$ who does not exhaust her true budget $\truewealth_i$,
i.e., $\payment_i < \truewealth_i$. We consider two cases separately.

\paragraph{Case a. Suppose $\purchaseseti \not=\emptyset$.}
\yfedit{Let $\Delta_i \triangleq \min_{j\in[m]}\frac{\price_j}{\ctrij}$.
Due to the ``highest bang-per-buck'' property of FPPE, 
we know that $\Delta_i = \frac{\price_j}{\ctrij}$
for every $j\in \purchaseseti$.
In the remaining argument, we further assume that ${\valuefuncderivative_i(\sum_{j\in[m]}\ctrij\allocij)} / {\moneycostderivative_i
    (\sum_{j\in[m]}\price_j\allocij)} > \Delta_i$.
Otherwise,  the lemma statement is satisfied since:
\begin{align*}
    \frac
    {\valuefuncderivative_i
    \left(\sum_{j\in[m]}\ctrij\allocij\right)}
{\moneycostderivative_i
    \left(\sum_{j\in[m]}\price_j\allocij\right)}
    \leq
    \Delta_i
    =
    \frac{\sum_{j\in \purchaseseti}\price_j}{\sum_{j\in \purchaseseti}\ctrij}
    \leq
    \frac{\sum_{j\in \purchaseseti}\price_j}{\sum_{j\in \purchaseseti}(1-\allocij)\ctrij}
\end{align*}

Using the same argument as the one presented in \Cref{lem:reporting infinite value},
the same inner FPPE and thus the same utility for every agent are induced
if we hold all other agents' reports fixed and let agent~$i$ report $\reportval_i \triangleq \frac{\valuefuncderivative_i(\sum_{j\in[m]}\ctrij\allocij)}{\moneycostderivative_i
    \left(\sum_{j\in[m]}\price_j\allocij\right)}$ and $\reportwealth_i \triangleq \paymenti$.} By the definition of $\purchaseseti$, we know that $\reportwealth_i =
\sum_{j\in \purchaseseti}\pricej\allocij$ where we shrink the summation over set $[m]$ to $\purchaseseti$.
The utility $\util_i$ of agent $i$ in the equilibrium can be computed as
\begin{align*}
\util_i
&\overset{(a)}{=}
    \valuefunc_i\left(
    \left(\max_{j\in[m]}
    \frac{\ctrij}{\price_j}
    \right)
    \cdot
    \paymenti
    \right)
    - \moneycost_i\left(
        \paymenti
    \right)
    \\
    &\overset{(b)}{=}
    \valuefunc_i\left(
    \frac{\paymenti}{\Delta_i}
    \right)
    - \moneycost_i\left(
        \paymenti
    \right)
    \\
    &\overset{(c)}{=}
    \valuefunc_i\left(
    \frac{
    \reportwealth_i}
    {\Delta_i}
    \right)
    - \moneycost_i\left(
        \reportwealth_i
    \right)
\end{align*}
where
equality~(a) holds due to \Cref{lem:compute utility};
equality~(b) holds due to the definition of $\Delta_i$;
and
equality~(c) holds due to the construction of $\reportwealth_i$.

Now consider a deviation of agent $i$ which keeps her reported value $\reportval_i\primed \triangleq \reportval_i$ as the same, while
increasing her reported budget by $\epsilon$,
i.e., $\reportwealth_i\primed \triangleq \reportwealth_i + \epsilon$
for sufficiently small and positive $\epsilon <
\min\{\truewealth_i - \reportwealth_i,
\min_{j\in \purchaseseti}
\reportval_i\ctrij - \price_j\}$.
Let $\price\primed, \alloc\primed, \payment\primed$
be the per-unit prices, allocation and payment of the new inner FPPE under such a deviation.
\Cref{lem:fppe increasing budget} implies that $
    \sum_{j\in \purchaseseti}\price_j\primed \overset{}{\leq}
    \epsilon + \sum_{j\in \purchaseseti}\price_j$.
Furthermore, agent $i$'s utility $\util_i\primed$ after her deviation can be lowerbounded as
\begin{align*}
\util_i\primed
&\overset{(a)}{=}
    \valuefunc_i\left(
    \left(\max_{j\in[m]}
    \frac{\ctrij}{\price_j\primed}
    \right)
    \cdot \paymenti\primed
    \right)
    - \moneycost_i\left(
       \paymenti\primed
    \right)
    \\
    &\geq
    \valuefunc_i\left(
    \frac{\sum_{j\in \purchaseseti}\ctrij}
    {\sum_{j\in \purchaseseti}
    \price_j\primed}
    \cdot
    \paymenti\primed
    \right)
    - \moneycost_i\left(
        \paymenti\primed
    \right)
    \\
    &\overset{(b)}{=}
    \valuefunc_i\left(
    \frac{\sum_{j\in \purchaseseti}\ctrij}
    {\sum_{j\in \purchaseseti}
    \price_j\primed}
    \cdot
    \reportwealth_i\primed
    \right)
    - \moneycost_i\left(
        \reportwealth_i\primed
    \right)
    \\
    &\geq
    \valuefunc_i\left(
    \frac{\sum_{j\in \purchaseseti}\ctrij}
    {\epsilon+ \sum_{j\in \purchaseseti}
    \price_j}
    \cdot
    \left(
    \reportwealth_i
    +
    \epsilon
    \right)
    \right)
    - \moneycost_i\left(
        \reportwealth_i
        +
        \epsilon
    \right)
    \\
    &\overset{(c)}{=}
    \valuefunc_i\left(
    \frac{\sum_{j\in\purchaseseti}\price_j}
    {\epsilon+\sum_{j\in\purchaseseti}\price_j}
    \frac{
    \reportwealth_i
    +
    \epsilon
    }{\Delta_i}
    \right)
    - \moneycost_i\left(
        \reportwealth_i
        +
        \epsilon
    \right)
\end{align*}
where
equality~(a) holds due to \Cref{lem:compute utility} and the construction of $\reportwealth_i\primed$ so that $\paymenti\primed \leq \reportwealth_i\primed \leq \truewealth_i$;
equality~(b) holds due to the construction of $\reportval_i\primed$, the choice of $\epsilon$,
and the ``budget feasibility'' property of FPPE;
and equality~(c) holds due to the definition of $\Delta_i$ and the ``highest bang-per-buck'' property of FPPE,
which implies $\frac{\price_j}{\ctrij} = \Delta_i$ for every $j\in \purchaseseti$.

Since $\util_i$ is agent $i$'s utility in the equilibrium,
utility $\util_i\primed$ under her deviation should not be profitable,
i.e., $\util_i \geq \util_i\primed$.
Using the bounds of $\util_i$ and $\util_i\primed$ obtained above, the difference between agent $i$'s utility before and after her deviation can be upperbounded as
\begin{align*}
    \util_i - \util_i\primed
    &\leq \valuefunc_i\left(
    \frac{\reportwealth_i}
    {\Delta_i}
    \right)
    - \moneycost_i\left(
        \reportwealth_i
    \right)
    -
    \left(
    \valuefunc_i\left(
    \frac{\sum_{j\in\purchaseseti}\price_j}
    {\epsilon+\sum_{j\in\purchaseseti}\price_j}
    \frac{
    \reportwealth_i
    +
    \epsilon
    }{\Delta_i}
    \right)
    - \moneycost_i\left(
        \reportwealth_i
        +
        \epsilon
    \right)
    \right)
    \\
    &=
    \moneycost_i(\reportwealth_i + \epsilon)
    -
    \moneycost_i(\reportwealth_i)
    -
     \left(
     \valuefunc_i\left(
    \frac{\sum_{j\in\purchaseseti}\price_j}
    {\epsilon+\sum_{j\in\purchaseseti}\price_j}
    \frac{
    \reportwealth_i
    +
    \epsilon
    }{\Delta_i}
    \right)
    -
    \valuefunc_i\left(
    \frac{
    \reportwealth_i
    }{\Delta_i}
    \right)
     \right)
    \\
    &\overset{(a)}{\leq}
    \epsilon
    \cdot
    \moneycostderivative_i(\reportwealth_i + \epsilon)
    -
    \left(
    \frac{\sum_{j\in\purchaseseti}\price_j}
    {\epsilon+\sum_{j\in\purchaseseti}\price_j}
    \frac{
    \reportwealth_i
    +
    \epsilon
    }{\Delta_i}
    -
    \frac{
    \reportwealth_i
    }{\Delta_i}
    \right)
    \cdot
     \valuefuncderivative_i
     \left(
     \frac{\sum_{j\in\purchaseseti}\price_j}
    {\epsilon+\sum_{j\in\purchaseseti}\price_j}
    \frac{
    \reportwealth_i
    +
    \epsilon
    }{\Delta_i}
    \right)
     \\
     &\overset{}{=}
     \epsilon
    \cdot
    \moneycostderivative_i(\reportwealth_i + \epsilon)
    -
    \frac{\epsilon}{\Delta_i}
    \cdot
    \frac{\sum_{j\in\purchaseseti}\pricej -
    \reportwealth_i}{\sum_{j\in\purchaseseti}\price_j + \epsilon}
    \cdot
     \valuefuncderivative_i
     \left(
     \frac{\sum_{j\in\purchaseseti}\price_j}
    {\epsilon+\sum_{j\in\purchaseseti}\price_j}
    \frac{
    \reportwealth_i
    +
    \epsilon
    }{\Delta_i}
    \right)
     \\
     &\overset{(b)}{=}
     \epsilon
    \cdot
    \moneycostderivative_i(\reportwealth_i + \epsilon)
    -
    \epsilon
    \cdot
    \frac{\sum_{j\in\purchaseseti}(1-\allocij)\ctrij
    }
    {\sum_{j\in\purchaseseti}\price_j + \epsilon}
    \cdot
     \valuefuncderivative_i
     \left(
     \frac{\sum_{j\in\purchaseseti}\price_j}
    {\epsilon+\sum_{j\in\purchaseseti}\price_j}
    \frac{
    \reportwealth_i
    +
    \epsilon
    }{\Delta_i}
    \right)
\end{align*}
where inequality~(a) holds due to the concavity of
and the convexity of money cost function $\moneycost_i$;
and equality~(b) holds since $\frac{\price_j}{\ctrij} = \Delta_i$
for every $j\in \purchaseseti$, and thus $\frac{1}{\Delta_i}(\sum_{j\in\purchaseseti}\pricej - \reportwealth_i) =
\frac{1}{\Delta_i}(\sum_{j\in\purchaseseti}\pricej - \paymenti)
=
\frac{1}{\Delta_i}(\sum_{j\in\purchaseseti}\pricej - \sum_{j\in\purchaseseti}\pricej\allocij) =
\sum_{j\in\purchaseseti}(1-\allocij)\ctrij$.
After rearranging the terms,
we have
\begin{align*}
       \frac
{\valuefuncderivative_i
     \left(
     \frac{\sum_{j\in\purchaseseti}\price_j}
    {\epsilon+\sum_{j\in\purchaseseti}\price_j}
    \frac{
    \reportwealth_i
    +
    \epsilon
    }{\Delta_i}
    \right)}
{\moneycostderivative_i(\reportwealth_i + \epsilon)}
\leq
\frac{\sum_{j\in \purchaseseti}\price_j + \epsilon}{\sum_{j\in \purchaseseti}(1-\allocij)\ctrij}
\end{align*}
for all positive $\epsilon$ that is sufficiently small.
Finally, letting $\epsilon$ approach zero,
the lemma statement is obtained as
\begin{align*}
&\lim_{\epsilon\rightarrow 0}
\moneycostderivative_i(\reportwealth_i + \epsilon)
\overset{(a)}{=}
\moneycostderivative_i(\reportwealth_i)
=
\moneycostderivative_i\left(\paymenti\right)
\\
    &\lim_{\epsilon\rightarrow 0}
    \valuefuncderivative_i
     \left(
     \frac{\sum_{j\in\purchaseseti}\price_j}
    {\epsilon+\sum_{j\in\purchaseseti}\price_j}
    \frac{
    \reportwealth_i
    +
    \epsilon
    }{\Delta_i}
    \right)
    \overset{(b)}{=}
    \valuefuncderivative_i
     \left(
    \frac{
    \reportwealth_i
    }{\Delta_i}
    \right)
    \overset{(c)}{=}
\valuefuncderivative_i
\left(\sum_{j\in[m]}\ctrij\allocij\right)
\end{align*}
where
equalities~(a) (b) hold since both $\moneycost_i$ and $\valuefunc_i$ are differentiable;
and equality~(c) holds due to the construction of $\reportwealth_i$ and the fact that $\frac{\price_j}{\ctrij} = \Delta_i$
for every $j\in \purchaseseti$ and $\allocij = 0$ for every $j\notin\purchaseseti$.

\newcommand{\jPrime}{j\primed}

\paragraph{Case b. Suppose $\purchaseseti =\emptyset$.}
\yfedit{In this case, $\sum_{j\in[m]}\ctrij\allocij = 0$, $\paymenti = 0$ 
and agent $i$ has zero utility.
We consider the following contradiction argument:
suppose there exists an item $j\primed\in[m]$ such that $\frac{\price_{\jPrime}}{\ctr_{i\jPrime}} < \frac{\valuefuncderivative_i(0)}{\moneycostderivative_i(0)}$.

Consider a deviation of agent $i$ reports $\reportval_i\primed \triangleq \frac{\valuefuncderivative_i(0)}{\moneycostderivative_i
    (0)}$ and $\reportwealth_i\primed \triangleq \epsilon$ for sufficiently small and positive $\epsilon < \reportval_i\ctr_{i\jPrime} - \price_{\jPrime}$.
Let $\price\primed, \alloc\primed, \payment\primed$
be the per-unit prices, allocation and payment of the new inner FPPE under such a deviation.
\Cref{lem:fppe increasing budget} implies that $
    \price_{\jPrime}\primed \overset{}{\leq}
    \epsilon + \price_{\jPrime}$.
Furthermore, agent $i$'s utility $\util_i\primed$ after her deviation can be lowerbounded as
\begin{align*}
\util_i\primed
&\overset{(a)}{=}
    \valuefunc_i\left(
    \left(\max_{j\in[m]}
    \frac{\ctrij}{\price_j\primed}
    \right)
    \cdot \paymenti\primed
    \right)
    - \moneycost_i\left(
       \paymenti\primed
    \right)
    \\
    &\geq
    \valuefunc_i\left(
    \frac{\ctr_{i\jPrime}}
    {
    \price_{\jPrime}\primed}
    \cdot
    \paymenti\primed
    \right)
    - \moneycost_i\left(
        \paymenti\primed
    \right)
    \\
    &\overset{(b)}{=}
    \valuefunc_i\left(
    \frac{\ctr_{i\jPrime}}
    {
    \price_{\jPrime}\primed}
    \cdot
    \reportwealth_i\primed
    \right)
    - \moneycost_i\left(
        \reportwealth_i\primed
    \right)
    \\
    &\geq
    \valuefunc_i\left(
    \frac{\ctr_{i\jPrime}}
    {\epsilon +    \price_{\jPrime}}
    \cdot
    \epsilon
    \right)
    - \moneycost_i\left(
        \epsilon
    \right)
    \\
    &\overset{(c)}{>}0
\end{align*}
where
equality~(a) holds due to \Cref{lem:compute utility};
equality~(b) holds due to the construction of $\reportval_i\primed$, the choice of $\epsilon$,
and the ``budget feasibility'' property of FPPE;
and strict inequality~(c) holds for sufficiently small $\epsilon$ since $\frac{\price_{\jPrime}}{\ctr_{i\jPrime}} < \frac{\valuefuncderivative_i(0)}{\moneycostderivative_i(0)}$.
Finally, note that $\util_i\primed > 0$ leads to a contradiction, since agent~$i$ has zero utility in the equilibrium.
}

\subsection{Proof of \texorpdfstring{\Cref{thm:poa pure}}{}}

    Fix an arbitrary pure Nash equilibrium and suppose $\price, \alloc, \payment$
    are the per-unit prices, allocation and payment of the inner FPPE, respectively.
    Let
    $\optalloc$ be the optimal allocation that maximizes the liquid welfare.
    Consider the following partition $A_1\bigsqcup A_2\bigsqcup A_3$
    of agents based on $\alloc$, $\price$, and $\optalloc$:
    \begin{align*}
        A_1 &\triangleq
        \left\{i\in[n]: \sum_{j\in[m]}\pricej\allocij = \truewealth_i\right\}
        \\
        A_2 &\triangleq
        \left\{i\in[n]: i \not\in A_1
        \land
        \sum_{j\in[m]} \ctrij\optalloc_{ij} \leq
        \sum_{j\in[m]} \ctrij\allocij
        \right\}
        \\
        A_3 &\triangleq
        \left\{i\in[n]: i \not\in A_1
        \land
        \sum_{j\in[m]} \ctrij\optalloc_{ij} >
        \sum_{j\in[m]} \ctrij\allocij
        \right\}
    \end{align*}
    In words, $A_1$ corresponds to every agent $i$ who exhausts her true budget $\truewealth_i$ in the equilibrium;
    and $A_2$, $A_3$ correspond to the remaining agents {divided according to whether their individual allocation is larger in the equilibrium outcome or in the optimal allocation}.

    In the following, we compare the willingness to pay (liquid welfare contribution) for agents from $A_1, A_2, A_3$ separately.

    \smallskip
    For every agent $i\in A_1$, note that
        $\wtp_i(\optalloc_i)
        \overset{}{\leq} \truewealth_i
        \overset{}{=}
        \wtp_i(\alloci)$
    where the inequality holds due to the definition of $\wtp_i$,
    and the equality holds due to the definition of $A_1$.

    For every agent $i\in A_2$, note that
        $\wtp_i(\optalloc_i) \leq \wtp_i(\alloci)$
    due to the fact that $\wtp_i$ is increasing and $\sum_{j\in[m]} \ctrij\optalloc_{ij} \leq
        \sum_{j\in[m]} \ctrij\allocij$ in the definition of $A_2$.

    For every agent~ $i\in A_3$,
    let $\purchaseseti \triangleq \{j\in[m]:\allocij > 0\}$
    be the subset of items for which agent~$i$ receives a strictly positive fraction.
    \yfedit{
    Consider two cases. 
    First, suppose $\purchaseseti = \emptyset$. Note that
    \begin{align*}
        \frac{\valuefunc_i\left(\sum_{j\in[m]} \ctrij\optalloc_{ij}
        \right)}{
        \moneycost_i\left(\sum_{j\in[m]} \pricej\optalloc_{ij}
        \right)
        }
        &\overset{(a)}{\leq}
        \frac{\left(\sum_{j\in[m]} \ctrij\optalloc_{ij}
        \right)\cdot \valuefuncderivative_i(0)}{
        \left(\sum_{j\in[m]} \pricej\optalloc_{ij}
        \right)
        \cdot \moneycostderivative_i(0)
        }
        \overset{(b)}{\leq}
        \frac{\left(\sum_{j\in[m]} \ctrij\optalloc_{ij}
        \right)}{
        \left(\sum_{j\in[m]} \pricej\optalloc_{ij}
        \right)
        }
        \cdot
        \left(
        \min_{j\in[m]}\frac{\pricej}{\ctrij}
        \right)
        \leq 1
    \end{align*}
    where inequality~(a) holds due to the concavity (convexity) of valuation function $\valuefunc_i$
    (money cost function $\moneycost_i$);
    and inequality~(b) holds due to \Cref{lem:poa pure allocation price relation}.
    The above inequality further implies
    \begin{align*}
        \wtp_i(\optalloc_i)
        &\overset{}{=}
        \moneycost_i^{-1}
        \left(\valuefunc_i\left(\sum_{j\in[m]} \ctrij\optalloc_{ij}
        \right)
        \right)
        \leq 
        \sum_{j\in[m]} \pricej\optalloc_{ij}
    \end{align*}
    }Next, suppose $\purchaseseti \not=\emptyset$. Let $\Delta_i \triangleq \frac{\sum_{j\in \purchaseseti}\price_j}{\sum_{j\in \purchaseseti}\ctrij}$.
Due to the ``highest bang-per-buck'' property of FPPE,
$\frac{\price_j}{\ctrij} \geq \Delta_i$ for every item $j\in[m]$,
and equality holds for $j\in \purchaseseti$.
    Note that
    \begin{align*}
        \wtp_i(\optalloc_i)
        &\overset{(a)}{\leq}
        \wtp_i(\alloci) +
        \frac{
        \valuefunc_i\left(\sum_{j\in[m]} \ctrij\optalloc_{ij}
        \right) -
        \valuefunc_i\left(
        \sum_{j\in[m]} \ctrij\allocij
        \right)
        }{
        \moneycostderivative_i(\wtp_i(\alloci))
        }
        \\
        &\overset{(b)}{\leq}
         \wtp_i(\alloci) +
        \frac{
        \valuefunc_i\left(\sum_{j\in[m]} \ctrij\optalloc_{ij}
        \right) -
        \valuefunc_i\left(
        \sum_{j\in[m]} \ctrij\allocij
        \right)
        }{
        \moneycostderivative_i\left(\paymenti\right)
        }
        \\
        &\overset{(c)}{\leq}
         \wtp_i(\alloci)
         +
         \left(
         \sum_{j\in[m]} \ctrij(\optalloc_{ij} -\allocij)
         \right)
         \cdot \frac{
        \valuefuncderivative_i\left(
        \sum_{j\in[m]} \ctrij\allocij
        \right)
         }{
        \moneycostderivative_i\left(
        \paymenti\right)
         }
         \\
        &\overset{(d)}{\leq}
        \wtp_i(\alloci) +
       \left(
         \sum_{j\in[m]} \ctrij(\optalloc_{ij} -\allocij)
         \right)
         \cdot
    \frac{\sum_{j\in \purchaseseti}\price_j}{\sum_{j\in \purchaseseti}(1-\allocij)\ctrij}
    \\
        &\overset{(e)}{\leq}
        \wtp_i(\alloci) +
       \left(
         \sum_{j\in[m]} \ctrij\optalloc_{ij}
         \right)
         \cdot
    \frac{\sum_{j\in \purchaseseti}\price_j}{\sum_{j\in \purchaseseti}\ctrij}
    \\
        &\overset{(f)}{=}
        \wtp_i(\alloci) +
       \left(
         \sum_{j\in[m]} \ctrij\optalloc_{ij}
         \right)
         \cdot \Delta_i
        \\
        &\overset{(g)}{\leq}
        \wtp_i(\alloci) +
        \sum_{j\in[m]} \pricej\optalloc_{ij}
    \end{align*}
    where inequality~(a) holds due to the definition of $\wtp_i$ and the convexity of money cost function $\moneycost_i$;
    inequality~(b) holds since
    $\moneycostderivative_i$ is weakly increasing implied by
    the convexity of
    $\moneycost_i$,
    and
    $\wtp_i(\alloci) \geq \paymenti$ which is implied by the definition of $\wtp_i$ and
    the fact that
    agent $i$'s utility is non-negative in the equilibrium;
    inequality~(c) holds due to the concavity of valuation function $\valuefunc_i$;
    inequality~(d) holds due to \Cref{lem:poa pure allocation price relation};
    and inequality~(e) holds
    since $\frac{\sum_{j\in[m]\ctrij(\optalloc_{ij}-\allocij)}}{\sum_{j\in[m]}\ctrij(1-\allocij)}
    \leq \frac{\sum_{j\in[m]\ctrij\optalloc_{ij}}}{\sum_{j\in[m]}\ctrij}$
    by algebra;
    equality~(f) holds due to the definition of $\Delta_i$;
    and inequality~(g) holds since
    $\frac{\price_j}{\ctrij} \geq \Delta_i$ for every item~$j\in[m]$.

    \smallskip
    Putting all pieces together, we have the following upper bound of
    the optimal liquid welfare $\wtp(\optalloc)$,
    \begin{align*}
    \begin{split}
        \wtp(\optalloc)
        &=
        \sum_{i\in[n]} \wtp_i(\optalloc_i)
        \leq \sum_{i\in A_3}\sum_{j\in[m]} \pricej\optalloc_{ij}
        +
        \sum_{i\in A_1\sqcup A_2 \sqcup A_3}
        \wtp_i(\alloci)
        \\
        &
        \overset{(a)}{\leq}
        \sum_{j\in[m]}\pricej
        +
        \sum_{i\in A_1\sqcup A_2 \sqcup A_3}
        \wtp_i(\alloci)
        \overset{(b)}{\leq}
        2\wtp(\allocs)
        \end{split}
    \end{align*}
    where inequality~(a) holds since $\sum_{i\in A_3}\optalloc_{ij} \leq 1$
    for every item $j\in[m]$;
    and inequality~(b) holds since
    $\sum_{j\in[m]}\pricej \leq \wtp(\allocs)$
    since all agents receive non-negative utility in the equilibrium.

\section{Main Result for Mixed Nash Equilibrium}
\label{sec:poa mixed}

{Now we analyze the PoA for the mixed equilibria of the metagame.}

\begin{theorem}
\label{thm:poa mixed}
    In the metagame,
    the price of anarchy $\mixedPoA$ under mixed Nash equilibrium lies in [2, 4].
\end{theorem}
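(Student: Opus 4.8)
\emph{Lower bound.} This is immediate from \Cref{example:pure lower bound}: the pure Nash equilibrium exhibited there is in particular a mixed Nash equilibrium, and its deterministic allocation has liquid welfare $1$ while $\OPT = 2 - \sfrac{1}{K}$, so letting $K\to\infty$ gives $\mixedPoA\ge 2$ (equivalently $\mixedPoA \ge \purePoA = 2$ by \Cref{thm:poa pure}).

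\emph{Upper bound: setup and reduction.} Fix a mixed Nash equilibrium with random messages $\{(\randomreportval_i,\randomreportwealth_i)\}_i$, and let $\randomalloc,\randomprice,\randompayment$ be the random allocation, per-unit prices, and payments of the induced inner FPPE. Let $\optalloc$ be a liquid-welfare-maximizing allocation, write $V_i\triangleq\sum_j\ctrij\optalloc_{ij}$, set $\bar u_i\triangleq\wtp_i(\randomalloc_i)$ (so $\sum_i\bar u_i=\twelfare(\randomalloc)$), let $\randomprice^{-i}$ be the random FPPE per-unit prices in the market with agent $i$ removed and all others at their equilibrium messages, and set $c_i\triangleq\expect[{\randomprice^{-i}}]{\sum_j \randomprice^{-i}_j\,\optalloc_{ij}}$ (the expected cost of agent $i$'s optimal bundle at the prices she would face if absent). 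The plan is to establish the per-agent bound
\[
\wtp_i(\optalloc_i)\ \le\ 2\,\bar u_i + 2\,c_i\qquad\text{for every agent }i,
\]
and then conclude by a revenue-charging argument: summing over $i$, using $\randomprice^{-i}\le\randomprice$ coordinatewise (\Cref{lem:fppe increasing budget}, with $i$'s budget dropped to $0$) and $\sum_i\optalloc_{ij}\le1$ to bound $\sum_i c_i\le\expect{\sum_j\randomprice_j}$, and using the per-realization identity $\sum_j\randomprice_j=\sum_i\randompayment_i$ together with $\expect{\randompayment_i}\le\wtp_i(\randomalloc_i)$ (each equilibrium utility is $\ge0$, so pointwise $\randompayment_i\le\truewealth_i$ and $\randompayment_i\le\moneycost_i^{-1}(\valuefunc_i(\text{clicks}_i))$, and $\moneycost_i^{-1}$ is concave, so Jensen applies), one gets $\OPT=\sum_i\wtp_i(\optalloc_i)\le 2\twelfare(\randomalloc)+2\twelfare(\randomalloc)=4\,\twelfare(\randomalloc)$.

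\emph{Upper bound: the per-agent bound.} I would prove this by contradiction. After discarding the trivial cases $\bar u_i\ge\tfrac12\wtp_i(\optalloc_i)$, $c_i\ge\tfrac12\wtp_i(\optalloc_i)$, and $\valuefunc_i(V_i)=0$, one may assume $\bar u_i,c_i<\tfrac12\wtp_i(\optalloc_i)\le\tfrac12\truewealth_i$ and $\valuefunc_i(V_i)>0$. Consider the deviation in which agent $i$ reports $(\reportval_i\primed,\reportwealth_i\primed)=(\infty,c_i)$ (feasible since $c_i<\truewealth_i$). By \Cref{lem:fppe increasing budget}, adding agent $i$ back with budget $c_i$ raises each price from $\randomprice^{-i}_j$ to some $\randomprice_j\primed$ with $\sum_j(\randomprice_j\primed-\randomprice^{-i}_j)\le c_i$; agent $i$ spends all of $c_i$, and by the highest-bang-per-buck property and the (weighted) mediant inequality her clicks are at least $\frac{V_i c_i}{\sum_j\randomprice_j\primed\optalloc_{ij}}\ge\frac{V_i c_i}{X_i+c_i}$, where $X_i\triangleq\sum_j\randomprice^{-i}_j\optalloc_{ij}$ and the last step uses $\optalloc_{ij}\le1$. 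Concavity of $\valuefunc_i$ with $\valuefunc_i(0)=0$ turns this into $\valuefunc_i(\text{clicks}_i\primed)\ge\frac{c_i}{X_i+c_i}\valuefunc_i(V_i)$ in every realization, and taking expectations and applying Jensen to the convex map $x\mapsto\frac{c_i}{x+c_i}$ (with $\expect{X_i}=c_i$) gives $\expect{\valuefunc_i(\text{clicks}_i\primed)}\ge\tfrac12\valuefunc_i(V_i)$. Now $\expect{\valuefunc_i(\text{clicks}_i)}=\moneycost_i(\bar u_i)$ (by the definition of $\wtp_i$ together with $\bar u_i<\truewealth_i$), $\moneycost_i(\randompayment_i)\ge0$, and $\moneycost_i(\randompayment_i\primed)\le\moneycost_i(c_i)$, so the equilibrium inequality $\expect{\util_i}\ge\expect{\util_i\primed}$ yields $\moneycost_i(\bar u_i)+\moneycost_i(c_i)\ge\tfrac12\valuefunc_i(V_i)$. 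Finally, convexity of $\moneycost_i$ with $\moneycost_i(0)=0$, together with $\moneycost_i(\wtp_i(\optalloc_i))\le\valuefunc_i(V_i)$, gives $\moneycost_i(\bar u_i)+\moneycost_i(c_i)\le\frac{\bar u_i+c_i}{\wtp_i(\optalloc_i)}\valuefunc_i(V_i)$; combining the last two facts gives $\wtp_i(\optalloc_i)\le2\bar u_i+2c_i$, contradicting the assumption.

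\emph{Main obstacle.} The part I expect to require the most care is the bookkeeping around randomness and boundary cases rather than any single estimate. One must check that every application of Jensen's inequality points in the usable direction; this works because $\valuefunc_i$ is concave with $\valuefunc_i(0)=0$, so the scaling bound $\valuefunc_i(\lambda z)\ge\lambda\valuefunc_i(z)$ lets one extract the random ratio $\frac{c_i}{X_i+c_i}$ \emph{before} averaging (dually $\moneycost_i$ is convex with $\moneycost_i(0)=0$). One must also dispose of degenerate realizations---items priced at $0$, and the case where the deviating agent cannot spend all of $c_i$, which happens only when the market is cheap enough that she wins essentially every impression she values, so that $\text{clicks}_i\primed\ge V_i$ and the bounds above hold a fortiori---as well as the boundary subcase where $\bar u_i$ equals the true budget $\truewealth_i$ (then $\truewealth_i\ge\wtp_i(\optalloc_i)$ and the per-agent bound is trivial). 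The final constant $4$ decomposes as $2\times2$: one factor of $2$ from the deviation analysis in the per-agent bound, and one from charging the residual liquid welfare to equilibrium revenue, which is always a lower bound on liquid welfare.
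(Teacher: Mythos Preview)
Your proposal is correct and follows essentially the same route as the paper: the same deviation (report $\tilde v_i'=\infty$ and budget equal to the expected cost of $\optalloc_i$ at the ``agent-$i$-absent'' prices), the same convexity/Jensen step on $x\mapsto c_i/(x+c_i)$ (equivalently the paper's $\tau/(\tau+1)$ with $E[1/\tau]=1$), and the same revenue-charging at the end. The paper states the per-agent bound as $\tfrac12\wtp_i(\optalloc_i)\le \moneycost_i^{-1}(u_i)+\sum_j \tilde P_j'' \optalloc_{ij}$ and manipulates via concavity of $\moneycost_i^{-1}$, whereas you phrase it as $\wtp_i(\optalloc_i)\le 2\bar u_i+2c_i$ and manipulate via convexity of $\moneycost_i$; since $\moneycost_i^{-1}(u_i)\le \bar u_i$, these are equivalent.

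One small correction: in the revenue-charging step you assert \emph{pointwise} $\randompayment_i\le \moneycost_i^{-1}(\valuefunc_i(\text{clicks}_i))$, i.e.\ pointwise nonnegative utility, which need not hold at a mixed equilibrium (only the expectation is nonnegative). The intended inequality $E[\randompayment_i]\le \bar u_i$ still follows, but via Jensen on the \emph{convex} $\moneycost_i$: from $E[\util_i]\ge 0$ one gets $E[\moneycost_i(\randompayment_i)]\le E[\valuefunc_i(\text{clicks}_i)]$, hence $\moneycost_i(E[\randompayment_i])\le E[\valuefunc_i(\text{clicks}_i)]$, so $E[\randompayment_i]\le \moneycost_i^{-1}(E[\valuefunc_i(\text{clicks}_i)])$; combined with $\randompayment_i\le \truewealth_i$ a.s.\ (which \emph{does} hold pointwise, since a violation gives $-\infty$ utility) this yields $E[\randompayment_i]\le \bar u_i$. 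Also, your ``by contradiction'' framing is harmless but unnecessary: under $\bar u_i,c_i<\tfrac12\wtp_i(\optalloc_i)$ your chain of inequalities directly \emph{derives} $\wtp_i(\optalloc_i)\le 2(\bar u_i+c_i)$, and the other cases are trivial.
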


Note that \Cref{example:pure lower bound} also serves as a lower bound for \Cref{thm:poa mixed}, so what remains is to prove the upper bound.  {We first provide some intuition into the high-level approach.
In our analysis of pure Nash equilibrium, we analyzed the impact of small budget adjustments given the messages of the other agents.  However, in a mixed Nash equilibium, the messages (and hence outcomes and prices) may be random, so the impact of local adjustments less clear.  We instead consider a specific budget-setting strategy that each agent will consider as a deviation.  Namely, each agent considers the \emph{expected} per-unit prices of the inner FPPE if she were not present, and then sets a budget equal to the expected total payment of her part of the optimal allocation under those prices.}  We utilize \Cref{lem:compute utility} and \Cref{lem:fppe increasing budget} to characterize the per-unit prices of the inner FPPE and the agent's utility under such a deviation. Consequently, we obtain an upper bound on the optimal liquid welfare.
\begin{proof}[Proof of \Cref{thm:poa mixed}]
   Fix an arbitrary mixed Nash equilibrium and suppose $\randomprice, \randomalloc, \randompayment$
    are the randomized per-unit prices, allocation and payment of the inner FPPE over the randomness of all agents' reported message, respectively.
    Let
    $\optalloc$ be the optimal allocation that maximizes the liquid welfare.
    In the following, we compare the willingness to pay (liquid welfare contribution) for agents separately.

    Fix agent $i$. We prove an upper bound
    on its contribution $\wtp_i(\optalloc_i)$ in the optimal liquid welfare,
    \begin{align}
    \label{eq:multi item mixed wtp bound}
        \tfrac{1}{2}\,\wtp_i(\optalloc_i) \leq
        \moneycost_i^{-1}(\util_i) +
        {\textstyle \sum_{j\in[m]}}\;
        \expect[\pricej\sim\randompricej]{\pricej}\optallocij,
    \end{align}
    where $\util_i$ is the expected utility of agent $i$ in the equilibrium.
    It is important to note that inequality~\eqref{eq:multi item mixed wtp bound}
    immediately implies the upper bound of PoA as desired since
    \begin{align*}
        \wtp(\optalloc)
        = \sum_{i\in[n]}
        \wtp_i(\optalloc_i)
        \leq
        2
        \left(
        \sum_{i\in[n]}
        \moneycost_i^{-1}(\util_i)
        +
        \sum_{j\in[m]}\expect[\pricej\sim\randompricej]{\pricej}\optallocij
        \right)
        \leq
        4\wtp(\allocs)
    \end{align*}
    where the last inequality holds since $\moneycost_i^{-1}(\util_i)
    \leq
    \moneycost_i^{-1}
    \left(
    \expect[\alloci\sim\randomalloc_i]{\valuefunc_i
    \left(
    \sum_{j\in[m]}\allocij\ctrij
    \right)
    }
    \right)
    =
    \wtp_i(\alloci)$,\footnote{This proof does not utilize the differentiability of money cost function $\moneycost_i$. Therefore, to simplify the presentation, we redefine $\moneycost_i^{\clubsuit}(\paymenti) \triangleq \moneycost_i(\paymenti)$ if $\paymenti \leq \truewealth_i$ and $\moneycost_i^{\clubsuit}(\paymenti) = \infty$ if $\paymenti > \truewealth_i$; and assume $\truewealth_i^{\clubsuit} = \infty$ for each agent $i$ without loss of generality.} and
    $\sum_{i\in[n]}\sum_{j\in[m]}\expect[\pricej\sim\randompricej]{\pricej}\optallocij
    \leq \sum_{j\in[m]}\expect[\pricej\sim\randompricej]{\pricej}
    \leq  \wtp(\allocs)$ due to the non-negative utility of every agent in the equilibrium.

    We prove inequality~\eqref{eq:multi item mixed wtp bound}
    by constructing a specific deviation based on $\optalloc_i$ for agent $i$.
    Let $\prices\doubleprimed$ be the randomized per-unit prices of the FPPE when agent $i$ reporting $\reportval_i\doubleprimed = \infty$ and $\reportwealth_i\doubleprimed = 0$, while all other agents reports the same messages as the ones in the equilibrium.
    It is important to note that
    \Cref{lem:fppe increasing budget} implies that randomized price $\randompricej\doubleprimed$ is first-order stochastically dominated by randomized price $\randompricej$ for every item $j\in[m]$, and thus
    $\expect[\pricej\doubleprimed\sim\randompricej\doubleprimed]{\pricej\doubleprimed}\leq \expect[\pricej\sim\randompricej]{\pricej}$.

    Let $\tildePj\doubleprimed \triangleq \expect[\pricej\doubleprimed\sim\randompricej\doubleprimed]{\pricej\doubleprimed}$
    for every item $j\in[m]$.
    Consider the following deviation for agent $i$
    where agent $i$ deterministically reports
      $\reportval_i\primed = \infty,
        \reportwealth_i\primed =
        \sum_{j\in[m]}
        \tildePj\doubleprimed
        \optalloc_{ij} $.
    Let $\randomalloc\primed,\randomprice\primed$ be the randomized allocation and per-unit prices of the inner FPPE after this deviation of agent $i$.
    Since agent $i$ reports $\val_i\primed = \infty$,
    she always exhausts her reported budget$~\reportwealth_i\primed$.

    In the remaining analysis,
    we couple the reported message of all other agents except agent~$i$ in the FPPE inducing randomized per-unit prices $\randomprice\doubleprimed$, and the FPPE inducing randomized
    allocation~$\randomalloc\primed$ and per-unit prices $\randomprice\primed$.
    It is important to note that
    \Cref{lem:fppe increasing budget} implies
    \begin{align*}
        \pricej\doubleprimed \leq \pricej\primed
        \text{ for every $j\in[m]$, and }
        \sum_{j\in[m]}\pricej\primed \leq
        \sum_{j\in[m]}\pricej\doubleprimed + \reportwealth_i\primed
    \end{align*}
    for every realization of agents' message profile.

    Define auxiliary random variable
    $\boldsymbol\tau_i \triangleq
    \frac{\reportwealth_i\primed}
    {\sum_{j\in[m]}
    \randompricej\doubleprimed\optallocij} \geq 0$.
    By definitions of $\boldsymbol\tau_i$ and $\reportwealth_i\primed$, $\expect{\frac{1}{\boldsymbol\tau_i}} = 1$.
    The expected utility $\util_i$
    of agent $i$ in the equilibrium can be lowerbounded as her expected utility under this deviation. Namely,
    \begin{align*}
        \util_i
        &\overset{(a)}{\geq}
        \expect
        [\price\primed\sim\randomprice\primed]
        {
        \valuefunc_i\left(
        \left(
        \max_{j\in[m]}
        \frac{\ctrij}{\pricej\primed}
        \right)
        \reportwealth_i\primed
        \right)
        -
        \moneycost_i\left(\reportwealth_i\primed\right)}
        \\
        &\overset{(b)}{\geq}
        \expect
        [\price\primed\sim\randomprice\primed,\tau_i\sim\boldsymbol\tau_i]
        {
        \valuefunc_i\left(
        \left(
        \max_{j\in[m]}
        \frac{\ctrij}{\pricej\primed}
        \right)
        \left(
        \sum_{j\in[m]}
        \frac{1}{\tau_i + 1}
        \tau_i\pricej\doubleprimed\optallocij
        +
        \frac{\tau_i}{\tau_i+1}
        \left(\pricej\primed - \pricej\doubleprimed\right)
        \right)
        \right)}
        -
        \moneycost_i\left(
        \sum_{j\in[m]}
        \tildePj\doubleprimed
        \optalloc_{ij}
        \right)
        \\
        &\overset{(c)}{\geq}
        \expect
        [\price\primed\sim\randomprice\primed,\tau_i\sim\boldsymbol\tau_i]
        {
        \valuefunc_i\left(
        \left(
        \max_{j\in[m]}
        \frac{\ctrij}{\pricej\primed}
        \right)
        \left(
        \sum_{j\in[m]}
        \frac{\tau_i}{\tau_i + 1}
        \pricej\primed
        \optallocij
        \right)
        \right)}
        -
        \moneycost_i\left(
        \sum_{j\in[m]}
        \tildePj\doubleprimed
        \optalloc_{ij}
        \right)
        \\
        &\overset{(d)}{\geq}
        \expect
        [\price\primed\sim\randomprice\primed,\tau_i\sim\boldsymbol\tau_i]
        {
        \valuefunc_i\left(
        \sum_{j\in[m]}
        \frac{\tau_i}{\tau_i + 1}
        \frac{\ctrij}{\pricej\primed}
        \pricej\primed
        \optallocij
        \right)}
        -
        \moneycost_i\left(
        \sum_{j\in[m]}
        \tildePj\doubleprimed
        \optalloc_{ij}  \right)
        \\
        &\overset{(e)}{\geq}
        \expect
        [\tau_i\sim\boldsymbol\tau_i]
        {
        \frac{\tau_i}{\tau_i + 1}
        \valuefunc_i\left(
        \sum_{j\in[m]}
        {\ctrij}
        \optallocij
        \right)}
        -
        \moneycost_i\left(
        \sum_{j\in[m]}
        \tildePj\doubleprimed
        \optalloc_{ij}\right)
        \\
        &\overset{(f)}{\geq}
        {
        \frac{1}
        {\expect{\frac{1}{\boldsymbol\tau_i}}
        +1 }
        \valuefunc_i\left(
        \sum_{j\in[m]}
        {\ctrij}
        \optallocij
        \right)}
        -
        \moneycost_i\left(
        \sum_{j\in[m]}
        \tildePj\doubleprimed
        \optalloc_{ij}  \right)
        \\
        &\overset{(g)}{=}
        \frac{1}{2}
        \valuefunc_i\left(
        \sum_{j\in[m]}
        {\ctrij}
        \optallocij
        \right)
        -
        \moneycost_i\left(
        \sum_{j\in[m]}
        \tildePj\doubleprimed
        \optalloc_{ij}  \right)
    \end{align*}
    where inequality~(a) holds due to \Cref{lem:compute utility};
    inequality~(b) holds since $\reportwealth_i\primed = \tau_i\sum_{j\in[m]}\pricej\doubleprimed\optallocij$
    and $\wealth_i\primed \geq \sum_{j\in[m]}\pricej\primed - \pricej\doubleprimed$ for every realization of
    $\pricej\primed,\pricej\doubleprimed,\tau_i$;
     inequality~(c) holds since $\optallocij \leq 1$;
     inequality~(d) holds by algebra;
     {inequality~(e) holds due to the concavity of
     valuation function $\valuefunc_i$ and $\valuefunc_i(0) = 0$};
     inequality~(f) holds due to Jensen's inequality;
     and equality~(g) holds since $\expect{\frac{1}{\boldsymbol\tau_i}} = 1$
     by definition.

    We are ready to prove inequality~\eqref{eq:multi item mixed wtp bound} as follows,
    \begin{align*}
        \frac{1}{2}\wtp_i(\optalloc_i)
        &\overset{(a)}{=}
        \frac{1}{2}\moneycost_i^{-1}\left(
        \valuefunc_i\left(
        \sum_{j\in[m]}
        {\ctrij}
        \optallocij
        \right)
        \right)
        \\
        &\overset{(b)}{\leq}
        \moneycost_i^{-1}\left(
        \frac{1}{2}
        \valuefunc_i\left(
        \sum_{j\in[m]}
        {\ctrij}
        \optallocij
        \right)
        \right)
        \\
        &\overset{(c)}{\leq}
        \moneycost_i^{-1}\left(
        \util_i +
        \moneycost_i\left(\sum_{j\in[m]}
        \tildePj\doubleprimed
        \optalloc_{ij} \right)
        \right)
        \\
        &\overset{(d)}{\leq}
        \moneycost_i^{-1}\left(
        \util_i  \right)
        +
        \sum_{j\in[m]}
        \tildePj\doubleprimed
        \optalloc_{ij}
        \\
        &\overset{(e)}{\leq}
        \moneycost_i^{-1}\left(
        \util_i  \right)
        +
        \sum_{j\in[m]}
        \expect[\pricej\sim\randompricej]{\pricej}
        \optalloc_{ij}
    \end{align*}
    where
    equality~(a) holds due to the definition of $\wtp_i$;
    inequalities~(b) and (d) hold due to the concavity of $\moneycost_i^{-1}$;
    inequality~(c) holds due to the monotonicity of $\moneycost_i^{-1}$ and the lower bound of the expected utility $\util_i$ obtained above;
    and inequality~(e) holds since $\randomprice_j$
    first order-stochastically dominates $\randomprice_j\doubleprimed$ and thus $\tildePj\doubleprimed=\expect[\pricej\doubleprimed\sim\randompricej\doubleprimed]{\pricej\doubleprimed}\leq \expect[\pricej\sim\randompricej]{\pricej}$
    for every item $j$.
\end{proof}  

\section{Extension I: Metagame in Bayesian Environments}
\label{sec:bayesian}

In this section, we explore the generalization of our model and results to from full information environments to Bayesian environments. In this \emph{Bayesian metagame} extension, rather than assuming that each agent $i$ has a fixed type $(\valuefunctioni, \truewealthi, \moneycosti)$, we assume that each agent $i$'s type is independently drawn from a type distribution $\typedist_i$.
In the following, we formally define the solution concept -- Bayesian Nash equilibrium and price of anarchy under Bayesian Nash equilibrium.
Finally, we prove that the price of anarchy under Bayesian Nash equilibrium is between $[2, 4]$ in \Cref{thm:poa bayes}.

In Bayesian metagame, a strategy $\strategy_i$ of agent $i$ is a stochastic
mapping from agent $i$'s type
$(\valuefunctioni, \truewealthi, \moneycosti)$
to a randomized message $(\randomreportval_i, \randomreportwealth_i)$.
The \emph{Bayesian Nash equilibrium} is defined as follows.

\begin{definition}[Bayesian Nash equilibrium]
For agents with type distributions $\{\typedist_i\}_{i\in[n]}$,
a Bayesian Nash equilibrium is a strategy profile $\{\strategy_i\}_{i\in[n]}$
such that for every agent $i$, every realized type
$(\valuefunctioni, \truewealthi, \moneycosti)$,
and every message
$(\reportval_i\primed, \reportwealth_i\primed)$,
\begin{align*}
    &\expect[ (\reportval_{i},\reportwealth_{i})
        \sim
        \strategy_i(\valuefunctioni, \truewealthi, \moneycosti)]{
    \expect[(\valuefunction_{-i}, \truewealth_{-i}, \moneycost_{-i})
    \sim F_{-i}]
    {
    \expect[ (\reportval_{-i},\reportwealth_{-i})
        \sim
        \strategy_{-i}(\valuefunction_{-i}, \truewealth_{-i}, \moneycost_{-i})]
        {\util_i(\reportval_i, \reportwealth_i, \reportval_{-i}, \reportwealth_{-i})}
    }
    }
    \\
    &\qquad\qquad\qquad\qquad\geq
    \expect[(\valuefunction_{-i}, \truewealth_{-i}, \moneycost_{-i})
    \sim F_{-i}]
    {
    \expect[ (\reportval_{-i},\reportwealth_{-i})
        \sim
        \strategy_{-i}(\valuefunction_{-i}, \truewealth_{-i}, \moneycost_{-i})]
        {\util_i(\reportval_i\primed, \reportwealth_i\primed, \reportval_{-i}, \reportwealth_{-i})}
    }
\end{align*}
\end{definition}

Clearly, the Bayesian metagame is a generalization of our baseline model, as every problem instance in the baseline model where agents have fixed types can be viewed as a problem instance where agents' types are drawn from single point-mass distributions in the Bayesian metagame. As a result of this equivalence, Bayesian Nash equilibrium also encompasses mixed Nash equilibrium as a generalization.

Similar to the baseline model, we evaluate the performance of the Bayesian metagame by measuring the approximation of liquid welfare. This is done by comparing the worst expected liquid welfare among all possible equilibria with the optimal expected liquid welfare over the randomness of agents' type and messages, and taking the supremum over all instances.
\begin{definition}[Price of anarchy in Bayesian environments]
    The \emph{price of anarchy (PoA)} of the metagame $\bayesPoA$
    under Bayesian Nash equilibrium is
    \begin{align*}
        \bayesPoA \triangleq
        \sup_{n, m, \ctr}
        \sup_{\{\typedist_i\}_{i\in[n]}}
        \frac
        {
        \expect[\{(\valuefunctioni,\truewealthi,\moneycosti)\}_{i\in[n]}\sim\{\typedist_i\}_{i\in[n]}]{
        \max_{\alloc}\twelfare(\alloc\condition \{(\valuefunctioni,\truewealthi,\moneycosti)\}_{i\in[n]})
        }
        }
        {\inf_{\strategy \in \texttt{Bayes}}\hat\twelfare(\strategy\condition \{\typedist_i\}_{i\in[n]})}
    \end{align*}
    where $\texttt{Bayes}$
    is the set of strategy profiles
    in all Bayesian Nash equilibrium
    given type distributions $\{\typedist\}_{i\in[n]}$,
    $\twelfare(\alloc\condition \{(\valuefunctioni,\truewealthi,\moneycosti)\}_{i\in[n]})$
    is the liquid welfare of allocation $\alloc$ given agents' types $\{(\valuefunctioni,\truewealthi,\moneycosti)\}_{i\in[n]}$,
    and
    $\hat\twelfare(\strategy\condition \{\typedist_i\}_{i\in[n]})$
    is the expected liquid welfare under strategy profile $\strategy$ defined as
    \begin{align*}
        \hat\twelfare(\strategy\condition \{\typedist_i\}_{i\in[n]})
        &\triangleq
        \sum_{i\in[n]} \hat\wtp_i(\strategy\condition \{\typedist_i\}_{i\in[n]}), \\
    \hat\wtp_i(\strategy\condition \{\typedist_\ell\}_{\ell\in[n]}) 
        &\triangleq
        \expect[(\valuefunctioni,\truewealthi,\moneycosti)\sim\typedist_i]{
        \wtp_i(\randomalloc_i(\strategy,(\valuefunctioni,\truewealthi,\moneycosti), \typedist_{-i}))
        }
    \end{align*}
    for every agent $i\in[n]$. Here
    $\randomalloc_i(\strategy,(\valuefunctioni,\truewealthi,\moneycosti), \typedist_{-i})$
    is the randomized allocation of agent $i$ with realized type $(\valuefunctioni,\truewealthi,\moneycosti)$
    when agents report under strategy $\strategy$,
    and the randomness is over agent $i$'s message, other agents' types and their messages.
\end{definition}

The main result of this section is the bound on the PoA under Bayesian Nash equilibrium. The proof follows a similar approach to that of \Cref{thm:poa mixed} and is deferred to \Cref{apx:poabayes} for completeness.

\begin{restatable}{theorem}{poabayes}
\label{thm:poa bayes}
In the Bayesian metagame, the PoA under Bayesian Nash equilibrium lies in $[2, 4]$.
\end{restatable}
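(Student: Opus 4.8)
The plan is to prove the two bounds separately, with essentially all the work in the upper bound $\bayesPoA\le 4$, obtained by adapting the proof of \Cref{thm:poa mixed}. The lower bound $\bayesPoA\ge 2$ is immediate: making every agent's type a point mass embeds any full-information instance into the Bayesian metagame, and a pure Nash equilibrium of that instance is a (degenerate) Bayesian Nash equilibrium, so \Cref{example:pure lower bound} transfers verbatim.

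For the upper bound, fix a Bayesian Nash equilibrium with strategy profile $\strategy$. Let $\randomprice$ be the randomized inner-FPPE prices at equilibrium (over the randomness of all types and all messages), and for a realized type $t_i$ of agent $i$ write $\util_i(t_i)$ for $i$'s expected equilibrium utility conditioned on $t_i$. For each item $j$, let $\tildePj\doubleprimed\triangleq\expect{\pricej\doubleprimed}$ be the expected inner-FPPE price of item $j$ when agent $i$ reports $(\reportval_i\doubleprimed=\infty,\reportwealth_i\doubleprimed=0)$ --- i.e.\ is effectively absent --- while the other agents draw their types from $\typedist_{-i}$ and report according to $\strategy_{-i}$; applying \Cref{lem:fppe increasing budget} realization by realization, $\randompricej$ first-order stochastically dominates $\randompricej\doubleprimed$, so $\tildePj\doubleprimed\le\expect{\pricej}$. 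Let $\optalloc$ denote the welfare-maximizing allocation as a function of the realized type profile, and set $q_{ij}(t_i)\triangleq\expect[t_{-i}]{\optallocij}$.

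The core step, carried out for each agent $i$ and each realized type $t_i$, is to consider the deterministic deviation $\reportval_i\primed=\infty$, $\reportwealth_i\primed=\sum_j\tildePj\doubleprimed q_{ij}(t_i)$. Coupling the messages of the other agents across the ``absent'' FPPE (prices $\randomprice\doubleprimed$) and the ``deviating'' FPPE (allocation $\randomalloc\primed$, prices $\randomprice\primed$), \Cref{lem:fppe increasing budget} gives $\pricej\doubleprimed\le\pricej\primed$ for every $j$ and $\sum_j\pricej\primed\le\sum_j\pricej\doubleprimed+\reportwealth_i\primed$ in every realization. With $\boldsymbol\tau_i\triangleq\reportwealth_i\primed\big/\sum_j\randompricej\doubleprimed q_{ij}(t_i)$ (so $\expect{1/\boldsymbol\tau_i}=1$), the same chain of inequalities as in the proof of \Cref{thm:poa mixed} --- \Cref{lem:compute utility} for the deviation utility, then the algebraic bound using $q_{ij}(t_i)\le 1$, then concavity of $\valuefunc_i$ with $\valuefunc_i(0)=0$, then Jensen for $x\mapsto 1/(1+x)$ at $x=1/\boldsymbol\tau_i$ --- yields
\[
\util_i(t_i)\ \ge\ \tfrac12\,\valuefunc_i\!\Big(\textstyle\sum_j\ctrij q_{ij}(t_i)\Big)\ -\ \moneycost_i\!\Big(\textstyle\sum_j\tildePj\doubleprimed q_{ij}(t_i)\Big).
\]
Rearranging through monotonicity and subadditivity of $\moneycost_i^{-1}$, and using concavity twice more --- Jensen to push $\expect[t_{-i}]{\cdot}$ inside $\valuefunc_i$, and concavity of $y\mapsto\min\{\truewealthi,\moneycost_i^{-1}(y)\}$ to push $\expect[t_{-i}]{\cdot}$ inside $\wtp_i$ --- gives the per-type bound $\tfrac12\,\expect[t_{-i}]{\wtp_i(\optalloc_i\mid t_i)}\le\moneycost_i^{-1}(\util_i(t_i))+\sum_j\expect{\pricej}q_{ij}(t_i)$, where (exactly as in \Cref{thm:poa mixed}) the hard budget is absorbed by replacing $\moneycost_i$ with $\moneycost_i^{\clubsuit}$ and taking $\truewealthi=\infty$. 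Taking $\expect[t_i]{\cdot}$ and summing over $i$: the left side is $\tfrac12$ of the optimal expected liquid welfare (by definition of $\optalloc$ it equals $\tfrac12\sum_i\expect[t_i]{\expect[t_{-i}]{\wtp_i(\optalloc_i\mid t_i)}}$); the first term on the right sums to at most $\hat\twelfare(\strategy)$ since $\util_i(t_i)\le\expect[t_{-i}]{\valuefunc_i(\cdot)}$ (nonnegative money cost); and the second is $\sum_j\expect{\pricej}\sum_i\expect[t_i]{q_{ij}(t_i)}\le\sum_j\expect{\pricej}$ (because $\sum_i\optallocij\le1$), which equals the expected equilibrium revenue and hence is at most $\hat\twelfare(\strategy)$ (each agent's expected payment is bounded by her equilibrium willingness to pay). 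Combining, the optimal expected liquid welfare is at most $4\,\hat\twelfare(\strategy)$, i.e.\ $\bayesPoA\le4$.

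I expect the main obstacle --- indeed the only genuinely new ingredient relative to \Cref{thm:poa mixed} --- to be handling the extra layer of randomness coming from the agents' own types: the benchmark $\expect{\max_\alloc\twelfare(\alloc)}$ is defined with a per-profile optimal allocation $\optalloc$, whereas agent $i$'s deviation budget $\sum_j\tildePj\doubleprimed q_{ij}(t_i)$ may depend only on $t_i$. Bridging this gap is exactly what the two applications of Jensen's inequality above accomplish, and the only slightly delicate point is verifying that the $\moneycost_i^{\clubsuit}$ reformulation (and hence the intermediate bound $\expect[t_{-i}]{\wtp_i(\optalloc_i\mid t_i)}\le\moneycost_i^{-1}(\valuefunc_i(\sum_j\ctrij q_{ij}(t_i)))$) remains valid once the hard-budget caps are present; everything else is a routine transcription of the mixed-equilibrium argument.
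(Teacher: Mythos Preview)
Your proposal is correct and follows essentially the same approach as the paper's proof: the lower bound via \Cref{example:pure lower bound}, and the upper bound by replaying the deviation argument of \Cref{thm:poa mixed} with the per-type expected optimal allocation $q_{ij}(t_i)=\expect[t_{-i}]{\optallocij}$ (the paper's $\optexpectallocij(\typei)$), then using concavity of $\wtp_i$ to pass from $\wtp_i(q_i(t_i))$ down to $\expect[t_{-i}]{\wtp_i(\optalloc_i)}$. Your two-step Jensen breakdown (first through $\valuefunc_i$, then through $\moneycost_i^{-1}$) is just a finer decomposition of the paper's single ``concavity of $\wtp_i$'' step, and the final revenue-versus-liquid-welfare accounting matches as well.
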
 

\section{Extension II: Metagame under Restrictive Message Space}
\label{sec:value/budget reporting only}

In this section, our goal is to address the following question: \emph{In the metagame, what information encoded in agents' messages is essential and cannot be disregarded?} Answering this question holds significant practical implications. For instance, in digital advertising markets, the platform (seller) possesses control over the design of the auction interface for advertisers (agents). As a reminder, our metagame draws inspiration from such an interface where advertisers declare both budget and maximum bid constraints to their autobidder.

In the proof of \Cref{lem:reporting infinite value}, we establish that every message $(\reportval_i,\reportwealth_i)$ can be dominated by a message $(\reportval_i\primed, \reportwealth_i\primed)$ with reported value $\reportval_i\primed = \infty$. Similarly, in the proofs of other results in previous sections, we often construct deviation strategy with reported value $\reportval_i\primed = \infty$ and carefully design reported budget $\reportwealth_i\primed$. Loosely speaking, this indicates that for agents, \emph{the strategic decision of their reported budget $\reportwealth$ holds greater importance than their reported value (aka maximum bid) $\reportval$.}

Motivated by this intuition, we proceed to investigate two variants of the metagame, wherein agents report either only budgets or only values. In Section \ref{sec:budget reporting metagame}, we extend our analysis of the price of anarchy to the variant metagame where agents only report budgets. On the other hand, in Section \ref{sec:value reporting metagame}, we present a negative result demonstrating that the price of anarchy under Bayesian Nash equilibrium can be as high as linear in the number of agents in the variant metagame where agents only report values.

\subsection{Metagame with Budget Reporting Only}
\label{sec:budget reporting metagame}

In this subsection, we examine a variant of the metagame where agents exclusively report their budgets. The definitions of pure Nash equilibrium, mixed Nash equilibrium, and Bayesian Nash equilibrium are adjusted accordingly to accommodate this variant.

\begin{definition}[Metagame with budget reporting only]
Each agent $i$ decides on a message $\reportwealth_i \in \realsinf$ reported to the seller. Given reported message profile $\{\reportwealth_i\}_{i\in[n]}$, the seller implements allocation $\alloc(\{\reportwealth_i\}_{i\in[n]})$ and payment $\payment(\{\reportwealth_i\}_{i\in[n]})$ induced by the FPPE, assuming that agents have budgeted utility with types $\{(\infty, \reportwealth_i)\}_{i\in[n]}$.
\end{definition}

In words, in this variant of the metagame with budget reporting only, the seller treats each agent~$i$ with general utility model as a budgeted agent with value per click $\val_i = \infty$ and budget $\truewealth_i = \reportwealth_i$ reported from the agent. In this variant, we obtain the same price of anarchy (PoA) bounds as the original metagame where agents report both values and budgets.

\begin{proposition}
\label{prop:poa budget reporting}
In the metagame with budget reporting only, the price of anarchy $\purePoA$ under pure Nash equilibrium is 2, and
the price of anarchy $\mixedPoA$ ($\bayesPoA$) under mixed (Bayesian) Nash equilibrium is between [2, 4].
\end{proposition}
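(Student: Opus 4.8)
The plan is to transfer, essentially verbatim, the lower-bound instance and the three upper-bound arguments from the full metagame, exploiting the fact that restricting to budget reporting only pins every reported value to $\infty$ — precisely the regime in which all of our earlier deviation arguments already operate (cf.\ \Cref{lem:reporting infinite value}). For the lower bounds it suffices to give a single-item, two-agent instance in the spirit of \Cref{example:pure lower bound}: let agent~$1$ be budgeted with $\val_1 = K$, $\truewealth_1 = 1$, and agent~$2$ be linear with $\val_2 = 1$. In the budget-only metagame each agent~$i$ submits a budget $\reportwealth_i$ and is treated as $(\infty,\reportwealth_i)$, so the inner FPPE prices the item at $\reportwealth_1+\reportwealth_2$ and allocates it in proportion to the $\reportwealth_i$. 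Each agent's payoff is strictly concave in her own report, and the first-order conditions $(\reportwealth_1+\reportwealth_2)^2 = K\reportwealth_2$ (with $\reportwealth_1\le 1$) and $(\reportwealth_1+\reportwealth_2)^2 = \reportwealth_1$ have, for $K$ large, the interior solution $\reportwealth_1 = K^2/(K+1)^2 < 1$, $\reportwealth_2 = K/(K+1)^2$; one checks this is a pure Nash equilibrium with liquid welfare $1+\tfrac{1}{K+1}$, whereas $\OPT = 2-\tfrac1K$. Letting $K\to\infty$ gives $\purePoA\ge 2$, and since any pure equilibrium is also a degenerate mixed and Bayes--Nash equilibrium, $\mixedPoA,\bayesPoA\ge 2$ as well.

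For the $\purePoA\le 2$ bound I would first re-prove \Cref{lem:poa pure allocation price relation} in the budget-only metagame. At a pure equilibrium agent~$i$'s report is already $(\infty,\reportwealth_i)$ with $\reportwealth_i = \paymenti$, so one skips the reduction of \Cref{lem:reporting infinite value} and directly considers the deviation $\reportwealth_i\primed = \paymenti+\epsilon$ for small $\epsilon < \truewealth_i - \paymenti$. Because the reported value is $\infty$, the ``budget feasibility'' clause of \Cref{def:FPPE} automatically forces $\paymenti\primed = \paymenti+\epsilon$ (no auxiliary bound $\epsilon < \reportval_i\ctrij - \pricej$ is required), and \Cref{lem:fppe increasing budget} still gives that prices weakly increase and $\sum_j\pricej$ rises by at most $\epsilon$. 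From there the concavity/convexity estimate and the limit $\epsilon\to 0$ are identical to the original argument, yielding the same inequality (and Case b is unchanged). With this in hand the proof of \Cref{thm:poa pure} applies word for word: it partitions agents by whether they exhaust their true budget and whether $\optalloc$ gives them more than the equilibrium, bounds $\wtp_i(\optalloc_i)$ for the last group by $\wtp_i(\alloci)+\sum_j\pricej\optallocij$ via the re-proven \Cref{lem:poa pure allocation price relation}, and charges $\sum_j\pricej\le\wtp(\allocs)$ using non-negativity of equilibrium utilities (which still holds, since reporting $\reportwealth_i=0$ guarantees utility $\ge 0$). None of these steps mentions reported values.

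For the mixed and Bayesian bounds no new work is needed: the proofs of \Cref{thm:poa mixed} and \Cref{thm:poa bayes} use only the two deviations $(\reportval_i\doubleprimed=\infty,\reportwealth_i\doubleprimed=0)$ (agent~$i$ ``absent'') and $(\reportval_i\primed=\infty,\reportwealth_i\primed=\sum_j\tildePj\doubleprimed\optallocij)$, both legal under budget reporting only, and every supporting ingredient (\Cref{lem:compute utility}, \Cref{lem:fppe increasing budget}, first-order stochastic dominance of the ``absent'' prices, Jensen's inequality applied to $\moneycost_i^{-1}$) is phrased purely in terms of inner-FPPE prices and payments; hence inequality~\eqref{eq:multi item mixed wtp bound} still holds and $\mixedPoA,\bayesPoA\le 4$, with existence of equilibria transferring by the same argument as for the full metagame. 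The main obstacle is not technical depth but bookkeeping: one must carefully confirm that no step in the full-metagame upper bounds secretly exploited the freedom to declare a finite maximum bid (in particular the FPPE-matching step inside the proof of \Cref{lem:poa pure allocation price relation}), and argue that each such step either already sets $\reportval=\infty$ or admits a $\reportval=\infty$ replacement that only weakens the right-hand side; the one genuinely new computation is verifying that the lower-bound profile is a global (not merely stationary) best response, which follows from strict concavity of each agent's payoff in her own reported budget.
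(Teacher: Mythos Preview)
Your proposal is correct. For the upper bounds it is the same approach as the paper's: the proofs of \Cref{thm:poa pure}, \Cref{thm:poa mixed}, and \Cref{thm:poa bayes} only ever invoke deviations with reported value $\infty$ (already in the mixed and Bayesian cases, and after the one-line substitution the paper describes in the pure case), so they transfer verbatim to the budget-only metagame; your re-derivation of \Cref{lem:poa pure allocation price relation} is exactly this substitution spelled out.

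For the lower bound you and the paper both use the instance of \Cref{example:pure lower bound}, but exhibit different equilibria. The paper asserts that $(\reportwealth_1,\reportwealth_2)=(1,0)$ is a pure Nash equilibrium with inner price $1$; you instead solve the first-order conditions and obtain the interior profile $\reportwealth_1=K^2/(K{+}1)^2$, $\reportwealth_2=K/(K{+}1)^2$. Your route is the more careful one here: in the budget-only variant both agents are treated as having value $\infty$, so when agent~$2$ reports budget~$0$ the unique inner-FPPE price collapses to $\reportwealth_1$ and agent~$1$'s utility is $K-\reportwealth_1$ on $(0,1]$ --- she strictly prefers any smaller positive budget, so $(1,0)$ is not a best response for her. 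Your interior equilibrium sidesteps this degeneracy, is a genuine global optimum by the strict concavity you note, and delivers the same asymptotic ratio $\OPT/\twelfare(\allocs)\to 2$.
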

\begin{proof}
The PoA guarantees under mixed Nash equilibrium and Bayesian Nash equilibrium follow exactly the same argument as the ones in \Cref{thm:poa mixed,thm:poa bayes} where deviations $(\reportval_i\primed, \reportwealth_i\primed)$ are constructed such that $\reportval_i\primed = \infty$ and thus remain feasible in this variant of the metagame.

For the PoA guarantee under pure Nash equilibrium, note that the constructed deviation $(\reportval_i\primed, \reportwealth_i\primed)$ in \Cref{thm:poa pure} can be modified as $(\reportval_i\doubleprimed, \reportwealth_i\doubleprimed)$ where $\reportval_i\doubleprimed = \infty$ and $\reportwealth_i\doubleprimed = \reportwealth_i\primed$. It is straightforward to verify that the same argument applies and the upper bound of PoA holds. Regarding the lower bound, note that in \Cref{example:pure lower bound}, a pure Nash equilibrium is achieved when agent 1 reports $\reportwealth_1 = 1$ and agent 2 reports $\reportwealth_2 = 0$. In this equilibrium, the per-unit price of the inner FPPE is $\price_1 = 1$ and agent 1 receives the entire item. Consequently, the same lower bound of PoA under pure Nash equilibrium is obtained.
\end{proof}

\xhdr{Another variant for agents with linear valuation functions.}
In the remaining of this subsection, we restrict out attention to agents with linear valuation functions, i.e., $\valuefunctioni(\sum_{j}\ctrij\allocij) = \val_i\cdot \sum_{j}\ctrij\allocij$ where $\val_i$ is her \emph{value per click}. However, agents may still have hard budget $\truewealth_i$ and differentiable, weakly increasing, weakly convex money cost function $\moneycost_i$. For such agents, we denote $(\val_i, \truewealth_i, \moneycost_i)$ by their types. We consider another variant of the metagame with budget reporting only as follows.

\begin{definition}[Metagame with budget reporting only and known linear valuations]
Each agent $i$ with type $(\val_i, \truewealth_i, \moneycost_i)$ decides on a message $\reportwealth_i \in \realsinf$ reported to the seller. Given reported message profile $\{\reportwealth_i\}_{i\in[n]}$, the seller implements allocation $\alloc(\{\reportwealth_i\}_{i\in[n]})$ and payment $\payment(\{\reportwealth_i\}_{i\in[n]})$ induced by the FPPE, assuming that agents have budgeted utility with types $\{(\val_i, \reportwealth_i)\}_{i\in[n]}$.
\end{definition}

In words, in this variant of the metagame with budget reporting only, the seller knows the value per click $\val_i$ of each agent $i$ and treats this agent with type $(\val_i, \truewealth_i, \moneycost_i)$ as a budgeted agent with value per click $\val_i$ and budget $\truewealth_i = \reportwealth_i$ reported from the agent. In this variant, we still obtain the same price of anarchy (PoA) bounds as the original metagame where agents report both values and budgets.

\begin{proposition}
\label{prop:poa budget reporting linear value}
In the metagame with budget reporting only and known linear valuations, the price of anarchy $\purePoA$ under pure Nash equilibrium is 2, and
the price of anarchy $\mixedPoA$ ($\bayesPoA$) under mixed (Bayesian) Nash equilibrium is between [2, 4].
\end{proposition}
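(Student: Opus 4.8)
The plan is to recycle the price-of-anarchy analyses behind Theorems~\ref{thm:poa pure}, \ref{thm:poa mixed} and~\ref{thm:poa bayes}, in the same way Proposition~\ref{prop:poa budget reporting} does; the only structural difference from that argument is that here a deviating agent $i$ can no longer name an arbitrary maximum bid for her autobidder --- it is fixed to her true, now publicly known, value per click $\val_i$. So the job is to verify that every deviation invoked in those proofs can be reproduced by changing only the declared budget, with the max bid held at $\val_i$, without losing effectiveness. The lower bounds $\purePoA,\mixedPoA,\bayesPoA\ge 2$ transfer for free: in Example~\ref{example:pure lower bound} both agents already report their true values, so the seller's knowledge of $\val_1=K$ and $\val_2=1$ is irrelevant; the inner FPPE, the equilibrium check, and the ratio $2-1/K$ all carry over unchanged.

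For the upper bounds, note that in all three proofs a deviation of agent $i$ consists of declaring some budget $\reportwealth_i\primed$ (together with a max bid at least as large as the per-click price of the items she ends up buying), after which her autobidder exhausts $\reportwealth_i\primed$ exactly. The structural fact that makes this work --- the highest-bang-per-buck property of FPPE plus Lemma~\ref{lem:compute utility} --- is that, once the declared budget is exhausted, agent $i$'s outcome (and everyone else's) depends on her max bid only through which items are affordable, i.e.\ through the inequalities $\reportval_i\ctr_{ij}\ge\price_j$; raising the declared max bid up to $\val_i$, the largest value now permitted, only enlarges the affordable set. Hence every such deviation stays feasible and at least as good. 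For the agents where that is all one needs --- in particular every budgeted agent, for whom the money cost is the identity so the ``effective value'' $\val_i/\moneycostderivativei(\cdot)$ equals $\val_i$ exactly --- the proofs of Theorems~\ref{thm:poa pure}, \ref{thm:poa mixed} and~\ref{thm:poa bayes} go through verbatim. In particular Lemma~\ref{lem:poa pure allocation price relation} still holds for any agent whose equilibrium bundle has per-click price \emph{strictly} below $\val_i$: there the infinitesimal budget increase used in its proof keeps those items affordable, since by Lemma~\ref{lem:fppe increasing budget} prices rise by at most the budget increment.

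The one case that requires a genuinely new argument --- and where I expect the main difficulty to lie --- is an agent $i$ sitting exactly at unit bang-per-buck, i.e.\ $\val_i=\min_{j}\price_j/\ctr_{ij}$. In the unrestricted metagame this configuration never arises at equilibrium, because the agent would just declare a higher max bid and buy more of her items; with the max bid pinned at $\val_i$ she can be genuinely stuck (her budget lever is slack), and Lemma~\ref{lem:poa pure allocation price relation} may fail for her. For such an agent I would instead bound her contribution directly, inside the $A_3$ bookkeeping of the proof of Theorem~\ref{thm:poa pure} (and its mixed and Bayesian analogues): every unit she wins is priced at $\val_i$ per click, so her equilibrium value equals her payment $\paymenti$, and then non-negativity of her equilibrium utility gives $\moneycosti(\paymenti)\le\paymenti$, hence $\wtp_i(\alloci)\ge\paymenti$; moreover $\price_j\ge\val_i\ctr_{ij}$ for \emph{every} item $j$, so her slice of the optimal allocation costs at least $V_i:=\val_i\sum_j\ctr_{ij}\optallocij$ at equilibrium prices, while $\wtp_i(\optalloc_i)=\min\{\truewealthi,\moneycosti^{-1}(V_i)\}$ is at most $V_i$ (using $\truewealthi\le V_i$, or else that money is worth at least its face value).

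Thus her optimal-allocation contribution is charged entirely to the equilibrium revenue collected on precisely those items --- the same ``charge the lost liquid welfare to revenue'' step already used for the set $A_3$ --- and once this case is folded into the partition, the bounds $\purePoA=2$ and $\mixedPoA,\bayesPoA\in[2,4]$ follow. The crux of the argument is therefore entirely in handling this boundary configuration: one must confirm that an agent pinned at unit bang-per-buck is indeed the only obstruction to re-running the original proofs, and that the revenue collected on her optimal-allocation items suffices to absorb her willingness-to-pay there; everything else is a direct transcription of the earlier arguments with ``$\reportval_i\primed=\infty$'' replaced by ``$\reportval_i\primed=\val_i$''.
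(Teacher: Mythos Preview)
Your identification of the boundary configuration $\val_i=\Delta_i$ as the sole obstruction to transplanting Lemma~\ref{lem:poa pure allocation price relation} is on target, but the direct fix you propose does not close. You assert that for such an agent $\wtp_i(\optalloc_i)\le V_i:=\val_i\sum_j\ctr_{ij}\optallocij$, invoking ``$\truewealthi\le V_i$, or else that money is worth at least its face value.'' The second alternative would mean $\moneycost_i^{-1}(V_i)\le V_i$, i.e.\ $\moneycost_i(V_i)\ge V_i$; but the model only requires $\moneycosti$ to be convex, increasing, and zero at zero, so for instance $\moneycosti(t)=t/2$ is admissible and gives $\moneycost_i^{-1}(V_i)=2V_i>V_i$. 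With no assumed bound on $\truewealthi$ either, $\wtp_i(\optalloc_i)$ can strictly exceed $V_i$, and your revenue-charging step falls short. Nothing about being pinned at unit bang-per-buck forces the marginal money cost to be at least one --- indeed, such an agent would \emph{like} to buy more at price $\val_i$ per click but cannot, precisely because her max bid is capped.

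The paper takes a different route, especially for the mixed and Bayesian cases. Rather than folding a boundary case into an $A_1/A_2/A_3$ partition (which those proofs do not use at all), it compares the two candidate deviations realization-by-realization: when the prices $\price$ under $(\infty,\reportwealth_i')$ satisfy $\max_j\val_i\ctr_{ij}/\price_j\ge 1$, switching the max bid to $\val_i$ leaves the inner FPPE --- and hence agent $i$'s realized utility --- unchanged; when instead $\max_j\val_i\ctr_{ij}/\price_j<1$, the paper argues the realized utility under $(\infty,\reportwealth_i')$ is non-positive while under $(\val_i,\reportwealth_i')$ it is non-negative, so the latter dominates. This domination lets the entire lower bound on $\util_i$ from Theorems~\ref{thm:poa mixed} and~\ref{thm:poa bayes} transfer verbatim, with no separate boundary bookkeeping. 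For the pure case the paper simply asserts that the infinitesimal-budget deviation of Lemma~\ref{lem:poa pure allocation price relation} can already be taken with reported value $\val_i$ when valuations are linear --- so your extra caution about the boundary, while well-motivated, is not reflected in the paper's own treatment.
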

\begin{proof}
The PoA guarantee under pure Nash equilibrium follows exactly the same argument as the ones in \Cref{thm:poa pure} where deviations $(\reportval_i\primed, \reportwealth_i\primed)$ are constructed such that $\reportval_i\primed = \val_i$ when the valuation function is linear, and thus remain feasible in this variant of the metagame.

For the PoA guarantees under mixed Nash equilibrium and Bayesian Nash equilibrium, consider the constructed deviation $(\reportval_i\primed, \reportwealth_i\primed)$ where $\reportval_i\primed = \infty$ defined in \Cref{thm:poa mixed,thm:poa bayes}. Note that when the realized per-unit prices $\price$ of the inner FPPE satisfy $\max_{j\in[m]}\frac{\val_i\ctrij}{\pricej} \geq 1$, it can be verified that replacing reported value $\reportval_i\primed = \infty$ in the constructed deviation with reported value $\reportval_i\doubleprimed = \val_i$ does not alter the inner FPPE. Consequently, the same realized utility for agent $i$ is guaranteed. Conversely, when the realized per-unit prices $\price$ of the inner FPPE satisfy $\max_{j\in[m]}\frac{\val_i\ctrij}{\pricej} < 1$, the agent receives a non-positive realized utility under deviation $(\reportval_i\primed, \reportwealth_i\primed)$, while a non-negative realized utility is guaranteed under deviation $(\reportval_i\doubleprimed, \reportwealth_i\doubleprimed)$ with $\reportval_i\doubleprimed = \val_i$ and $\reportwealth_i\doubleprimed = \reportwealth_i\primed$. Therefore, we conclude that this new deviation $(\reportval_i\doubleprimed, \reportwealth_i\doubleprimed)$ can replace $(\reportval_i\primed, \reportwealth_i\primed)$ in the original analysis in \Cref{thm:poa mixed,thm:poa bayes}, and remains feasible in this variant of the metagame.
\end{proof}

\subsection{Metagame with Value Reporting Only}
\label{sec:value reporting metagame}

Let us examine a variant of the metagame where agents only report their values (i.e., the maximum bid).

\begin{definition}[Metagame with value reporting only]
Each agent $i$ decides on a message $\reportval_i \in \realsinf$ reported to the seller. Given reported message profile $\{\reportval_i\}_{i\in[n]}$, the seller implements allocation $\alloc(\{\reportval_i\}_{i\in[n]})$ and payment $\payment(\{\reportval_i\}_{i\in[n]})$ induced by the FPPE, assuming that agents have budgeted utility with types $\{(\reportval_i, \infty)\}_{i\in[n]}$.
\end{definition}

In words, in this variant of the metagame with value reporting only, the seller treats each agent~$i$ with general utility model as a budgeted agent with value per click $\val_i = \reportval_i$ reported from the agent and budget $\truewealth_i = \infty$ reported from the agent. In this variant, we present the following negative result on the price of anarchy under Bayesian Nash equilibrium.

\begin{proposition}
\label{prop:poa value reporting}
In the metagame with value reporting only, the price of anarchy $\bayesPoA$ under Bayesian Nash equilibrium is at least $\Omega(n)$, even for budgeted agents.
\end{proposition}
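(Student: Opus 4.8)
The plan is to exhibit, for each $n$, a Bayesian instance with $n$ items and $n+2$ budgeted agents on which \emph{some} Bayes--Nash equilibrium has expected liquid welfare $O(1)$ while $\expect[\OPT]\ge n$. The instance has two \emph{big} agents $A_1,A_2$ and $n$ \emph{specialist} agents $C_1,\dots,C_n$. Each $A_\ell$ is a linear (hence budgeted) agent with $\ctr_{A_\ell j}=1$ for all $j$, budget $\truewealth_{A_\ell}=\infty$, and value per click $\val_{A_\ell}$ drawn i.i.d.\ from the uniform distribution on $[2/n,4/n]$; these are the only random types. Each $C_i$ is a budgeted agent with $\truewealth_{C_i}=1$, value per click $1$, and $\ctr_{C_i j}=1$ for all $j\ne i$ while $\ctr_{C_i i}=0$, so $C_i$ values every item except item $i$. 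I would first lower bound $\OPT$: for any realization, allocating item $j$ entirely to $C_{j+1}$ (indices mod $n$) gives each $C_i$ one full click, worth value $1\le \truewealth_{C_i}$, so $\wtp_{C_i}=1$ and $\twelfare\ge n$; hence $\expect[\OPT]\ge n$.

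Next I would describe the bad equilibrium. Recall that in the value-reporting-only metagame the inner FPPE is computed with all reported budgets equal to $\infty$, so every pacing multiplier equals $1$ and the outcome is exactly a simultaneous first-price auction with per-item bids $\reportval_i\ctr_{ij}$. Let the big agents play $\reportval_{A_\ell}=\beta(\val_{A_\ell})$ with $\beta(v)=\tfrac12\bigl(v+\tfrac2n\bigr)\in[\tfrac2n,\tfrac3n]$ --- the standard symmetric two-bidder first-price equilibrium for values on $[2/n,4/n]$, using that for an agent with $\ctr\equiv 1$ winning all items scales both value and payment by the same factor $n$ --- and let every specialist report $\reportval_{C_i}=0$. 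Since each $C_i$ bids $0$ while each $A_\ell$ bids $\beta(\val_{A_\ell})>0$ uniformly across items, the FPPE awards all $n$ items to the big agent with the larger value, at per-unit price $\beta(\max\{\val_{A_1},\val_{A_2}\})$; by Lemma~\ref{lem:compute utility} that agent's willingness to pay, which is the whole liquid welfare, equals $n\cdot\max\{\val_{A_1},\val_{A_2}\}\le 4$. Thus $\bayesPoA\ge \expect[\OPT]/\expect[\twelfare]\ge n/4=\Omega(n)$.

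It remains to verify this profile is a Bayes--Nash equilibrium. For the big agents the specialists are irrelevant, since they always bid strictly below $2/n\le\beta(\val_{A_\ell})$; hence the game between $A_1$ and $A_2$ is a classical two-bidder first-price auction and $\beta$ is a mutual best response, and the winner pays at most $n\cdot\tfrac3n=3<\infty$, so no budget is violated. For a specialist $C_i$: because $\ctr_{C_i j}=1$ for all $n-1$ items $j\ne i$, a single report $r$ produces the uniform bid $r$ on all of them, so against the uniform big-agent bids $C_i$ wins \emph{either} all $n-1$ items it values \emph{or} none, paying $(n-1)r$ in the former case. To win with positive probability, $r$ must exceed $\max\{\beta(\val_{A_1}),\beta(\val_{A_2})\}\ge 2/n$ on a positive-probability event, so $r>2/n$; but then on that event the payment is $(n-1)r>2(n-1)/n>1=\truewealth_{C_i}$ for $n\ge 3$, a catastrophic outcome giving expected utility $-\infty$. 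Hence any best response has $r\le 2/n$, which wins nothing and yields utility $0$, matching the equilibrium payoff, so $\reportval_{C_i}=0$ is optimal.

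The step I expect to be the main obstacle is precisely this exclusion of specialist deviations --- and within it, ruling out the ``tie'' deviation. In the analogous full-information instance, $C_i$ could match a big agent's (deterministic) bid and, depending on the tie-breaking rule $\sigma$, collect a fraction of the items bounded away from $1$ at a total cost below $\truewealth_{C_i}$, which is profitable; such a deviation would destroy the equilibrium. It is ruled out here only because the big agents' values are atomless, so $\beta(\val_{A_\ell})$ is atomless and the event $\{r=\beta(\val_{A_\ell})\}$ has probability $0$ for every fixed $r$ the specialist might choose. This is exactly where the Bayesian setting (rather than pure or mixed equilibria of a fixed instance) is essential, and it is the one place where the argument must be carried out with care about $\sigma$; once the atomlessness is in hand, the rest is the routine verification above.
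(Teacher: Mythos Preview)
Your construction is correct and establishes the $\Omega(n)$ lower bound, but it follows a genuinely different route from the paper's proof. The paper uses a \emph{single-item} instance with uniform click-through rates: $N$ budgeted agents with very large value $N^2$ and tiny budget $\tfrac12$, plus two ``blocker'' agents whose value is $2$ with probability $1-\epsilon$ and $0$ with probability $\epsilon$, budget $1$. In equilibrium the first $N$ agents all report $\tfrac12$ (any higher and, on the $\epsilon^2$-probability event that both blockers are absent, the deviator would win the whole item and exceed her budget), while the blockers report $1$ or $0$ according to their type. The optimal liquid welfare is $N/2$, equilibrium welfare is $O(1)$, and the bound follows.

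Compared to this, your instance has $n$ items, heterogeneous click-through rates, and continuous (atomless) randomness on the two big agents rather than discrete randomness on the blockers. Both proofs hinge on the same phenomenon---a budgeted agent cannot raise her report without risking an event in which she wins more than her budget can absorb---but they engineer that event differently: the paper via a low-probability absence of competition on one item, you via the specialist winning $n-1$ items simultaneously. Your use of an atomless distribution is a clean way to kill tie-based deviations and makes the argument independent of the tie-breaking rule $\sigma$; the paper instead accepts ties among the $N$ small agents and must add a footnote assuming the tie-breaking rule allocates nontrivial shares. On the other hand, the paper's instance is structurally simpler (one item, identical $\ctr$'s), and its equilibrium verification does not rely on an external auction-theory result, whereas you invoke the standard two-bidder first-price equilibrium formula. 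Two small notational points: your parameter $n$ is the number of items, while the proposition's $n$ is the number of agents (off by $2$, harmless asymptotically); and the appeal to Lemma~\ref{lem:compute utility} in bounding the welfare is unnecessary---for linear agents the willingness to pay is just $n\cdot v_{\max}$ directly.
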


The proof of \Cref{prop:poa value reporting} relies on the following Bayesian instance with budgeted agents.

\begin{example}
\label{example:poa value reporting}
Consider a scenario with $n \triangleq N + 2$ budgeted agents and one item. Let us assume that the click-through rates are the same for all agents, i.e., $\ctr_{i1} = 1$ for every $i\in[n]$. The first $N$ agents, $i \in [N]$, have deterministic values $\val_i = N^2$ and budgets $\truewealth_i = \frac{1}{2}$. On the other hand, the last two agents, $i = N+1$ and $i = N+2$, have a value of $\val_i = 2$ with probability $1-\epsilon$, a value of $\val_i = 0$ with probability $\epsilon$, and deterministic budgets $\truewealth_i = 1$.
\end{example}

\begin{proof}[Proof of \Cref{prop:poa value reporting}]
Consider the Bayesian instance defined in \Cref{example:poa value reporting}.
The optimal expected liquid welfare is $\frac{N}{2}$. This is achieved through an allocation where each of the first $N$ budgeted agents receives a $\frac{1}{N}$-fraction of the item.

Consider the following Bayesian Nash equilibrium: The first $N$ agents, $i\in[N]$, report $\reportval_i = \frac{1}{2}$. The last two agents, $i = N + 1$ and $i = N+2 $ report $\reportval_i = 1$ when their true value is $\val_i = 2$, and report $\reportval_i = 0$ when their true value is $\val_i = 0$. In this equilibrium, the per-unit price is $\price_1 = 1$ when either of the last two agents $i$ has true value $\val_i = 2$ and reports $\reportval_i = 1$; and it is $\price_1 = \frac{1}{2}$ otherwise. In the former case, the last two agents shares the item, while in the later case, the first $N$ agents shares the item. Regardless of the tie-breaking rule of the inner FPPE, the expected liquid welfare is at most
\begin{align*}
    \underbrace{(1 - \epsilon)^2 \cdot 2}_{\text{the last two agents both have $\val_i = 2$}}
    +
    \underbrace{2\epsilon(1-\epsilon) \cdot 1}_{\text{one of the last two agents has $\val_i = 2$}}
    +
    \underbrace{\epsilon^2 \cdot N/2}_{\text{the last two agents both have $\val_i = 0$}}
\end{align*}
Letting $\epsilon$ approach zero and $N$ approach infinity, the lower bound $\Omega(n)$ of PoA under Bayesian Nash equilibrium is obtained as desired.

Finally, we verify the validity of this equilibrium by demonstrating that all agents have no profitable deviations. For each of the first $N$ agents, $i\in [N]$, deviating to a smaller reported value $\reportval_i\primed < \reportval_i$ reduces their utility to zero, while deviating to a higher reported value $\reportval_i\primed > \reportval_i$ results in a violation of their hard budget constraint when the last two agents report 0, thereby reducing their utility as well. As for the last two agents, $i = N + 1$ and $i = N + 2$, if their realized value is $\val_i = 0$, reporting $\val_i\primed = 0$ is optimal. Conversely, if their realized value is $\val_i = 2$, deviating to a smaller reported value $\reportval_i\primed < \reportval_i$ reduces their winning probability to $\epsilon$, while deviating to a higher reported value $\reportval_i\primed > \reportval_i$ leads to a violation of their hard budget constraint when the other of the last two agents reports 0, resulting in a reduction of their utility in both cases. \footnote{This proof assumes that the tie-breaking rule of the inner FPPE satisfies that when both of the last two agents report $\reportval_i = 1$, the fraction of the item allocated to each of them is at least $2\epsilon$.}
\end{proof}

\yfedit{
\section{Pure Nash Equilibrium for Single-Item Instances}
\label{sec:single item pure nash}

\newcommand{\reserveagent}{i^*}

To shed light on agents' strategic behavior in the metagame,
this section focuses on single-item instances.
All results and analysis hold for more general instances with homogeneous items:
each agent~$i$ has the same click-through rate $\ctrij$ for all item $j$,
i.e., $\ctrij = \ctr_{ij'}$ for every $j, j'\in[m]$, $i\in[n]$.

By restricting to single-item instances, the remaining of this section drops subscript index $j$ for the item. Moreover, without loss of generality, we assume $\ctri = 1$ for all agents and drop it as well.

Before presenting the main result of this section, we introduce the following two 
auxiliary notations that are used in the equilibrium characterization (\Cref{thm:pure nash single item}): fix a per-unit price $\price\in[0, \infty)$, define 
\begin{align*}
    \allocLBi(\price) &\triangleq 
    \frac{\truewealth_i}{\price} \wedge
    \min\left\{\alloci\in[0, 1]: \valuefunctionderivativei(\alloci)\cdot (1-\alloci) \leq \price \cdot \moneycostderivativei(\price\alloci)\right\} \\
    \allocUBi(\price) &\triangleq 
    \frac{\truewealth_i}{\price} \wedge
    \max\left\{
    \alloci\in[0, 1]: \valuefunctionderivativei(\alloci) \geq \price \cdot \moneycostderivativei(\price\alloci)
    \right\}
\end{align*}
Loosely speaking, $\allocLBi(\price)$ is the smallest allocation 
such that the agent $i$ has no incentive to weakly increase her reported budget in the metagame with induced per-unit price $\price$.
Specifically, the inequality in the definition $\allocLBi(\price)$
is exactly the inequality of \Cref{lem:poa pure allocation price relation} for single-item instances.
On the other side, $\allocUBi(\price)$ is the largest allocation such that
the agent $i$ has a weakly positive marginal utility when facing a fixed per-unit price $\price$.
If $\valuefunctionderivativei(\alloci) < \price\cdot \moneycostderivativei(\price\alloci)$
for all $\alloci\in[0, 1]$, we set $\allocUBi(\price) = 0$.
Since valuation function $\valuefunctioni$ (money cost function $\moneycosti$) is differentiable, weakly increasing, weakly concave (convex) and $\valuefunctioni(0) = 0$ ($\moneycosti(0)=0$), we make the following observation about $\allocLBi(\price)$ and $\allocUBi(\price)$.

\begin{observation}
\label{lem:allocLB allocUB}
For every agent $i\in[n]$, functions $\allocLBi(\price)$ and $\allocUBi(\price)$ satisfy the following properties:
\begin{enumerate}
    \item Both $\allocLBi(\price)$ and $\allocUBi(\price)$ are continuous and weakly decreasing in $\price$.
    \item Both $\allocLBi(0) = \allocUBi(0) = 1$, 
            $\lim_{\price\rightarrow \infty}\allocLBi(\price) = \lim_{\price\rightarrow \infty}\allocUBi(\price) = 0$,
        and $\allocLBi(\alloci) \leq \allocUBi(\alloci)$ for all $\alloci\in\reals_+$.
\end{enumerate}
\end{observation}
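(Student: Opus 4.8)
The plan is to rewrite each quantity as a minimum of a ``budget piece'' $\truewealthi/\price$ and a ``first-order piece'', and analyze the two pieces separately. Set $g_i(\alloci,\price) \triangleq \valuefunctionderivativei(\alloci)(1-\alloci) - \price\cdot\moneycostderivativei(\price\alloci)$ and $h_i(\alloci,\price) \triangleq \valuefunctionderivativei(\alloci) - \price\cdot\moneycostderivativei(\price\alloci)$, and define $\psi_i(\price)\triangleq\inf\{\alloci\in[0,1]:g_i(\alloci,\price)\le 0\}$ and $\chi_i(\price)\triangleq\sup\{\alloci\in[0,1]:h_i(\alloci,\price)\ge 0\}$ with $\sup\emptyset\triangleq 0$, so that $\allocLBi(\price)=\min\{\truewealthi/\price,\,\psi_i(\price)\}$ and $\allocUBi(\price)=\min\{\truewealthi/\price,\,\chi_i(\price)\}$. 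The key structural input is that a differentiable weakly concave function has a continuous weakly decreasing derivative (monotone $+$ Darboux $\Rightarrow$ continuous), so $\valuefunctionderivativei$ is continuous and weakly decreasing, and symmetrically $\moneycostderivativei$ is continuous and weakly increasing; both are nonnegative. Hence $g_i,h_i$ are jointly continuous; for each fixed $\price$ they are weakly decreasing in $\alloci$ (a nonnegative weakly decreasing term minus a nonnegative weakly increasing term); and for each fixed $\alloci$ they are weakly decreasing in $\price$, since $\price\mapsto\price\,\moneycostderivativei(\price\alloci)$ is nonnegative and weakly increasing. Finally $g_i(1,\price)=-\price\,\moneycostderivativei(\price)\le 0$, so the set defining $\psi_i(\price)$ is nonempty and closed, and $\psi_i(\price)$ is the allocation where $g_i(\cdot,\price)$ crosses from positive to nonpositive (similarly $\chi_i(\price)$ is a max when the set is nonempty).

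\textbf{Monotonicity, boundary values, and the ordering.} Monotonicity in $\price$ (first half of Property~1) is short: $\truewealthi/\price$ is weakly decreasing; as $\price$ grows, $g_i(\alloci,\price)$ weakly decreases pointwise, so $\{g_i(\cdot,\price)\le 0\}$ grows and $\psi_i$ weakly decreases, while $\{h_i(\cdot,\price)\ge 0\}$ shrinks and $\chi_i$ weakly decreases; a minimum of weakly decreasing functions is weakly decreasing. For Property~2: direct substitution $\price=0$ gives $\psi_i(0)=\chi_i(0)=1$ and $\truewealthi/0=\infty$, so $\allocLBi(0)=\allocUBi(0)=1$. As $\price\to\infty$: if $\truewealthi<\infty$ then $\truewealthi/\price\to 0$ forces both to $0$; if $\truewealthi=\infty$, the standing assumption that $\moneycosti(t_i)>0$ for some $t_i$ makes $\moneycostderivativei$ eventually positive, so for any fixed $\alloci>0$ we have $\price\,\moneycostderivativei(\price\alloci)\to\infty$ while $\valuefunctionderivativei(\alloci)$ stays bounded, hence $g_i(\alloci,\price)\le 0$ and $h_i(\alloci,\price)<0$ for large $\price$, giving $\psi_i(\price),\chi_i(\price)\le\alloci$; let $\alloci\downarrow 0$. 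For the ordering: if $\alloci<\psi_i(\price)$ then $g_i(\alloci,\price)>0$, and since $1-\alloci\le 1$ we get $h_i(\alloci,\price)\ge g_i(\alloci,\price)>0$, so $\alloci\le\chi_i(\price)$; letting $\alloci\uparrow\psi_i(\price)$ yields $\psi_i(\price)\le\chi_i(\price)$, hence $\allocLBi(\price)\le\allocUBi(\price)$.

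\textbf{Continuity (the crux).} Since $\truewealthi/\price$ is continuous on $(0,\infty)$, it suffices to prove $\psi_i$ and $\chi_i$ are continuous; each is monotone, so only jumps are possible, and I rule these out with joint continuity of $g_i,h_i$. Right-continuity of $\psi_i$: along $\price_n\downarrow\price_0$, the relations $g_i(\psi_i(\price_n),\price_n)\le 0$ pass to the limit to give $g_i(\lim_n\psi_i(\price_n),\price_0)\le 0$, so $\lim_n\psi_i(\price_n)\ge\psi_i(\price_0)$, and monotonicity forces equality. Left-continuity: if $\psi_i$ jumped down at $\price_0$, there would be an interval $(\psi_i(\price_0),L)$ on which $g_i(\cdot,\price_0)\equiv 0$ while $g_i(\cdot,\price_n)>0$ for $\price_n\uparrow\price_0$; but $g_i(\alloci,\price_0)=0$ on an interval means the weakly decreasing map $\alloci\mapsto\valuefunctionderivativei(\alloci)(1-\alloci)$ equals the weakly increasing map $\alloci\mapsto\price_0\moneycostderivativei(\price_0\alloci)$ there, which forces both constant; as $1-\alloci$ is strictly decreasing this requires $\valuefunctionderivativei\equiv 0$ on that allocation range and $\moneycostderivativei\equiv 0$ on the corresponding payment range, whence $g_i(\cdot,\price_n)=0$ there too for $\price_n$ near $\price_0$ — contradiction. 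The argument for $\chi_i$ is the mirror image, again exploiting the rigidity of an equality between a weakly decreasing and a weakly increasing function, and using the budget cap $\truewealthi/\price$ to absorb configurations where the unconstrained first-order threshold alone might jump. I expect this continuity step — precisely, handling the ``flat-at-zero'' configurations of $g_i,h_i$ where the first-order condition holds on a whole interval of allocations — to be the main technical obstacle; everything else is bookkeeping on top of the monotonicity and continuity of $\valuefunctionderivativei$ and $\moneycostderivativei$.
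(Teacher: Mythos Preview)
The paper does not prove this statement; it is recorded as an Observation with only the one-line justification that $\valuefunctioni,\moneycosti$ are differentiable, monotone, and concave/convex. Your write-up is therefore far more detailed than anything in the paper, and your arguments for monotonicity, the boundary values, the ordering $\allocLBi\le\allocUBi$, and the continuity of $\psi_i$ (hence of $\allocLBi$) are correct and nicely executed.

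There is, however, a genuine gap in the continuity argument for $\allocUBi$, and it is not repairable: the continuity claim is actually false as stated. Take a linear agent with value-per-click $\vali$, so $\valuefunctionderivativei\equiv\vali$, $\moneycostderivativei\equiv 1$, $\truewealthi=\infty$. Then $h_i(\alloci,\price)=\vali-\price$ does not depend on $\alloci$, so $\chi_i(\price)=1$ for $\price\le\vali$ and $\chi_i(\price)=0$ for $\price>\vali$; with the budget cap equal to $\infty$, $\allocUBi$ jumps from $1$ to $0$ at $\price=\vali$. (The paper itself computes $\allocUBi(\price)=\indicator{\price\le\vali}$ for this very example a few lines after the Observation.) The failure point in your sketch is the sentence ``the argument for $\chi_i$ is the mirror image'': for $\psi_i$, when $g_i(\cdot,\price_0)\equiv 0$ on an interval the strictly decreasing factor $(1-\alloci)$ forces the common constant to be zero, and that zero propagates to nearby $\price_n$; for $\chi_i$ there is no such factor, so the two sides of $h_i=0$ can equal the same \emph{positive} constant on all of $[0,1]$, and an arbitrarily small increase in $\price$ makes $h_i$ strictly negative everywhere at once. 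The budget cap does not help either: even for a budgeted agent with $\truewealthi<\infty$ one gets $\allocUBi(\price)=(1\wedge\truewealthi/\price)\cdot\indicator{\price\le\vali}$, which still jumps at $\price=\vali$. So the gap lies in the Observation itself, not in your method; the parts of the Observation that are true you have proved, and the part that is not true cannot be.
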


\noindent We now present the main result of the section.
\begin{theorem}
\label{thm:pure nash single item}
    In the metagame with a single item, pure Nash equilibrium always exists. 
    Specifically, there exist two types of equilibrium:\yfmargincomment{For some specific tie-breaking rule, I think there may also exist other pure Nash, which will disappear when we switch to other tie-breaking rule. In contrast, all equilibrium described below hold for all tie-breaking rule.}
    \begin{enumerate}
        \item \textsl{(Low-price equilibrium)}
        Define non-empty subinterval $\PriceL$ as 
        \begin{align*}
            \PriceL \triangleq \left\{\price\in[0,\infty):\sum_{i\in[n]}\allocLBi(\price) = 1\right\}
        \end{align*}
        For every $\price \in \PriceL$, there exists pure Nash equilibrium whose inner FPPE has per-unit price $\price$. Such equilibrium can be induced by reported message profile $\{(\reportval_i,\reportwealth_i)\}_{i\in[n]}$ constructed as 
        \begin{align*}
            \forall i\in[n]:\qquad \reportval_i \triangleq \infty,~
            \reportwealth_i \triangleq \price  \allocLBi(\price)
        \end{align*}
        Moreover, $\min \PriceL$ is the lowest per-unit price in all pure Nash equilibrium.

        \item \textsl{(High-price equilibrium)}
        Define non-empty subinterval $\PriceH$ as 
        \begin{align*}
            \PriceH \triangleq \left\{\price\in[0,\infty):
            \sum_{i\in[n]}\allocLBi(\price) \leq 1 ~\text{and}~
            \sum_{i\in[n]}\allocUBi(\price) \geq 1
            \right\}
        \end{align*}
        For every $\price \in \PriceH$, 
        if there exists agent $\reserveagent$ such that $\allocLB_{\reserveagent}(\price)+\sum_{i\in[n]:i\not= \reserveagent} \allocUBi(\price) \geq 1$, then there exists pure Nash equilibrium whose inner FPPE has per-unit price~$\price$. Such equilibrium can be induced by reported message profile $\{(\reportval_i,\reportwealth_i)\}_{i\in[n]}$ constructed as 
        \begin{align*}
             i = \reserveagent:\qquad &\reportval_i \triangleq \price,~
            \reportwealth_i \triangleq \infty,
            \\
            \forall i\not= \reserveagent:\qquad &\reportval_i \triangleq \infty,~
            \reportwealth_i \triangleq \price  \allocHati
        \end{align*}
        where $\{\allocHati\}_{i\not= \reserveagent}$ is an arbitrary solution such that 
        $\sum_{i\not=\reserveagent}\allocHati = 1 - \allocLB_{\reserveagent}(\price)$ and $\allocLBi(\price) \leq \allocHati \leq \allocUBi(\price)$.
        Moreover, $\max \PriceH$ is the highest possible per-unit price in all pure Nash equilibrium.
        
    \end{enumerate}
\end{theorem}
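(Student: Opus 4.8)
\medskip
\noindent\textbf{Proof plan.}
My plan splits into the two families of profiles and the two extremal‑price claims. First I would record the easy structural facts. By \Cref{lem:allocLB allocUB} the maps $\price\mapsto\sum_i\allocLBi(\price)$ and $\price\mapsto\sum_i\allocUBi(\price)$ are continuous, weakly decreasing, equal to $n\ge 2$ at $\price=0$, and tend to $0$ as $\price\to\infty$; hence $\PriceL$ is a nonempty closed interval, and since $\allocLBi\le\allocUBi$ we get $\PriceL\subseteq\PriceH$, so $\PriceH$ is nonempty as well. I would also observe up front that no pure Nash equilibrium can have inner FPPE price $0$: at price $0$ some agent is not winning the whole item, and (having strictly positive marginal value at $0$) she could deviate to $(\reportval_i,\reportwealth_i)=(\infty,\epsilon)$ and capture essentially the entire item at per‑unit price $\epsilon$, which is profitable for small $\epsilon$. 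So throughout, the inner FPPE price $\price$ is positive, $\sum_i\alloci=1$, and an agent declaring $\reportval_i=\infty$ spends her declared budget exactly. Finally, exactly as in the proof of \Cref{lem:reporting infinite value} (an argument that does not presuppose equilibrium), replacing any message $(\reportval_i,\reportwealth_i)$ by $(\infty,\paymenti)$ for the induced payment $\paymenti$ leaves the inner FPPE unchanged, so when testing a candidate profile for deviations of agent $i$ it suffices to consider deviations $(\infty,b)$, $b\in\realsinf$, after which (by \Cref{lem:compute utility}) her utility is a function of $b$ alone through the resulting price.

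\medskip
\noindent\textbf{Low‑price profiles and the minimum price.}
For $\price\in\PriceL$ and the displayed profile $\reportval_i=\infty,\ \reportwealth_i=\price\,\allocLBi(\price)$, all declared values are infinite, so the inner FPPE price equals $\sum_i\reportwealth_i=\price\sum_i\allocLBi(\price)=\price$ and agent $i$ gets $\reportwealth_i/\price=\allocLBi(\price)$. If agent $i$ deviates to $(\infty,b)$, the price becomes $b+B$ with $B=\price(1-\allocLBi(\price))$, and her utility is $\valuefunctioni\big(b/(b+B)\big)-\moneycosti(b)$, which is concave in $b$ (composition of the concave increasing $\valuefunctioni$ with the concave increasing $b\mapsto 1-B/(b+B)$, minus the convex $\moneycosti$). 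Evaluating the derivative at $b=\price\,\allocLBi(\price)$ (where $b+B=\price$) and using the ``highest‑bang‑per‑buck'' identity, its sign matches that of $\price\,\moneycostderivativei(\price\,\allocLBi(\price))-\valuefunctionderivativei(\allocLBi(\price))(1-\allocLBi(\price))$, which is $\ge 0$ precisely by the defining inequality of $\allocLBi(\price)$ (or because the cap $\price\,\allocLBi(\price)=\truewealthi$ binds), while for smaller $b$ a direct comparison shows the derivative is positive; hence $b=\price\,\allocLBi(\price)$ is the maximizer and the profile is a pure Nash equilibrium. For the minimum‑price claim, in any pure Nash equilibrium with inner FPPE $(\price,\alloc,\payment)$ each agent $i$ satisfies $\alloci\ge\allocLBi(\price)$: if she exhausts her true budget then $\alloci=\truewealthi/\price\ge\allocLBi(\price)$, and otherwise \Cref{lem:poa pure allocation price relation} (single‑item form) gives $\valuefunctionderivativei(\alloci)(1-\alloci)\le\price\,\moneycostderivativei(\price\,\alloci)$, i.e.\ $\alloci\ge\allocLBi(\price)$; summing, $1=\sum_i\alloci\ge\sum_i\allocLBi(\price)$, and since $\sum_i\allocLBi$ strictly exceeds $1$ below $\min\PriceL$ we conclude $\price\ge\min\PriceL$, with equality attained by a low‑price profile. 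The maximum‑price claim is dual: in any pure Nash equilibrium $\alloci\le\allocUBi(\price)$ — if $\alloci>\truewealthi/\price$ her payment exceeds her true budget ($-\infty$), and if $\truewealthi/\price\ge\alloci>\allocUBi(\price)$ her marginal utility for allocation at price $\price$ is negative, so shrinking her declared budget (which by \Cref{lem:fppe increasing budget} only weakly lowers the price) is profitable — so $1=\sum_i\alloci\le\sum_i\allocUBi(\price)$, whence $\price\in\PriceH$ and $\price\le\max\PriceH$.

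\medskip
\noindent\textbf{High‑price profiles, and the main obstacle.}
Fix $\price\in\PriceH$ and an agent $\reserveagent$ with $\allocLB_{\reserveagent}(\price)+\sum_{i\ne\reserveagent}\allocUBi(\price)\ge 1$. The inequality $\sum_i\allocLBi(\price)\le 1$ from $\PriceH$ together with this condition are exactly what make a vector $\{\allocHati\}_{i\ne\reserveagent}$ with $\sum_{i\ne\reserveagent}\allocHati=1-\allocLB_{\reserveagent}(\price)$ and $\allocLBi(\price)\le\allocHati\le\allocUBi(\price)$ feasible (a box‑intermediate‑value argument). For the displayed profile, agent $\reserveagent$'s declared max bid $\price$ with her infinite declared budget pins the inner FPPE price to exactly $\price$ — any lower price makes her demand unbounded, any higher price removes her but then the budgets of the others clear the market strictly below $\price$ — and she absorbs the residual $\allocLB_{\reserveagent}(\price)$. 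Her own incentive problem is then the identical concave one‑dimensional problem as in the low‑price case, with $B=\price(1-\allocLB_{\reserveagent}(\price))$, so she cannot improve. For an agent $i\ne\reserveagent$, a deviation to $(\infty,b)$ keeps the price at $\price$ while $b\le\price(\allocLB_{\reserveagent}(\price)+\allocHati)$ (agent $\reserveagent$ shrinks her residual and $i$ receives $b/\price$) and raises the price to $b+\price(1-\allocLB_{\reserveagent}(\price)-\allocHati)$ beyond that; in either regime the induced utility is concave in $b$, and the goal is to show $b=\price\,\allocHati$ is the maximizer, using $\allocHati\le\allocUBi(\price)$ in the fixed‑price (price‑taking) regime and $\allocHati\ge\allocLBi(\price)$ in the price‑rising regime, where the $\allocLB$‑inequality of \Cref{lem:poa pure allocation price relation} applies. \emph{This last step is the main obstacle I expect:} unlike the low‑price case, the clearing price reacts to a budget deviation only after the residual capacity of the max‑bid agent $\reserveagent$ is exhausted, so the deviation must be analyzed piecewise, with continuity checked at the threshold $b=\price(\allocLB_{\reserveagent}(\price)+\allocHati)$ and the two defining inequalities of $\allocLBi,\allocUBi$ combined with the feasibility constraints on $\{\allocHati\}$ to rule out a profitable increase — this is precisely where the role of the condition $\allocLB_{\reserveagent}(\price)+\sum_{i\ne\reserveagent}\allocUBi(\price)\ge 1$ has to be nailed down. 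It then remains only to note that $\max\PriceH$ is itself realized whenever the $\reserveagent$‑condition holds there, while in any case it upper‑bounds all pure Nash equilibrium prices by the previous paragraph, which finishes the two extremal‑price statements.
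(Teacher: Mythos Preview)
Your treatment of the structural facts, the low-price profiles, and both extremal-price bounds tracks the paper's proof exactly: the concave one-dimensional deviation problem $b\mapsto \valuefunctioni\!\big(b/(b+B)\big)-\moneycosti(b)$, the first-order condition recovering $\allocLBi(\price)$, and the summing arguments $1=\sum_i\alloci\ge\sum_i\allocLBi(\price)$ (via \Cref{lem:poa pure allocation price relation}) and $1=\sum_i\alloci\le\sum_i\allocUBi(\price)$ are precisely what the paper does. Your analysis of agent $\reserveagent$'s deviation in the high-price profile likewise reduces to the same concave problem with $B=\price(1-\allocLB_{\reserveagent}(\price))$, as in the paper.

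The obstacle you flag for an upward deviation of agent $i\ne\reserveagent$ is real, and your outlined resolution does not close it. You are in fact more careful than the paper here: the paper splits at $b=\price\,\allocHati$ and simply asserts the utility is ``concave and decreasing'' on $[\price\,\allocHati,\truewealthi]$ ``using a similar argument as the low-price equilibrium,'' whereas you correctly locate the regime change at $b=\price(\allocHati+\allocLB_{\reserveagent}(\price))$. The problem is that your two inequalities pull the same way rather than pinching. In the fixed-price window $b\in[\price\,\allocHati,\,\price(\allocHati+\allocLB_{\reserveagent}(\price))]$ the price-taker utility is $\valuefunctioni(b/\price)-\moneycosti(b)$ with derivative $\valuefunctionderivativei(b/\price)/\price-\moneycostderivativei(b)$, and the constraint $\allocHati\le\allocUBi(\price)$ makes this derivative \emph{nonnegative} at $b=\price\,\allocHati$ --- it is exactly the inequality that rules out \emph{downward} deviations, not upward ones. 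So whenever $\allocLB_{\reserveagent}(\price)>0$ and $\allocHati<\allocUBi(\price)$ strictly, agent $i$ strictly prefers to raise her budget inside this window; a two-agent linear instance (\Cref{example:linear agents}) with $\price$ slightly below $\val_2$ and $\reserveagent=1$ already exhibits this. The paper's ``similar to low-price'' shortcut shares this gap: the low-price computation takes the price to be $b+\sum_{\ell\ne i}\reportwealth_\ell$, but here $\reportwealth_{\reserveagent}=\infty$, and substituting agent $\reserveagent$'s equilibrium payment in its place understates agent $i$'s allocation precisely on the fixed-price interval you isolated. Your instinct that ``this is precisely where the role of the $\reserveagent$-condition has to be nailed down'' is therefore correct, and neither your outline nor the paper's argument actually nails it down.
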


By analyzing a two-item budgeted-agent instance from \Cref{example:budgeted agents}, \Cref{sec:pure nash budgeted agents} shows the non-existence of pure Nash equilibrium for general instances. In contrast, \Cref{thm:pure nash single item} confirms the existence of pure Nash equilibrium for single-item instances.

Since $\allocLBi(\price) \leq \allocUBi(\price)$ for all $\price\in\reals_+$ (\Cref{lem:allocLB allocUB}), it is straightforward to verify that $\PriceL \subseteq \PriceH$. Moreover, for every $\price\in\PriceL$, it can be constructed as both low-price equilibrium and high-price equilibrium (since condition ``$\exists\reserveagent$, $\allocLB_{\reserveagent}(\price)+\sum_{i\not= \reserveagent} \allocUBi(\price) \geq 1$'' is satisfied trivially). 
A natural question is whether there exists high-price equilibrium with per-unit price $\price\in\PriceH$ such that there exists no low-price equilibrium with the same per-unit price, i.e., $\price \in \PriceL$.
The answer is yes.
Consider \Cref{example:linear agents}:
\begin{align*}
\begin{array}{ll}
  \allocLBi(\price) = \max\left\{1 - \frac{\price}{\vali}, 0\right\},
    \qquad   
    &  
    \allocUBi(\price) = \indicator{\price \leq \vali}
    \\
    \PriceL = \left\{\frac{\val_1\val_2}{\val_1+\val_2}\right\},
     & 
     \PriceH = \left[\frac{\val_1\val_2}{\val_1+\val_2}, \val_1\right].
\end{array}
\end{align*}
Though the reported messages are slightly different, equilibrium with per-unit price $\val_2$ (resp.\ $\frac{\val_1\val_2}{\val_1+\val_2}$) described in \Cref{claim:linear efficient} (resp.\ \Cref{claim:linear inefficient}) is equivalent to a high-price (resp.\ low-price) equilibrium. 
In fact, we can extend \Cref{example:linear agents} and construct natural scenario with multiple pure Nash equilibria. In those equilibria, the per-unit prices of inner FPPE are different. Consequently, a fixed agent's utility is different in different equilibrium. See the proof of the proposition in \Cref{apx:singleitembudgetedagentmultinash}.

\begin{restatable}{proposition}{singleitembudgetedagentmultinash}
\label{prop:multiple pure nash for budgeted agents}
    In the metagame, for budgeted (or linear) agents with type $\{(\val_i, \truewealth_i)\}$,
    if $\truewealth_i > \frac{1}{4}\val_i$ for all agents, then there exists multiple pure Nash equilibrium. Specifically, besides low-price equilibrium and high-price equilibrium with per-unit price in $\PriceL$, there also exists high-price equilibrium with per-unit price in $\PriceH \backslash \PriceL$.
\end{restatable}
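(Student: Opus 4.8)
The plan is to build directly on the equilibrium characterization of \Cref{thm:pure nash single item}. Since $\sum_{i}\allocLBi$ is continuous with value $n\ge 2$ at $\price=0$ and limit $0$ as $\price\to\infty$ (\Cref{lem:allocLB allocUB}), $\PriceL\ne\emptyset$, so \Cref{thm:pure nash single item} already provides low-price equilibria and, for each $\price\in\PriceL$, a high-price equilibrium with per-unit price $\price$ (the condition ``$\exists\reserveagent:\allocLB_{\reserveagent}(\price)+\sum_{i\ne\reserveagent}\allocUBi(\price)\ge 1$'' holds trivially because $\allocUBi\ge\allocLBi$). It therefore remains to exhibit a high-price equilibrium whose per-unit price lies strictly above $\price_0:=\max\PriceL$. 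For budgeted (hence also linear) agents, $\valuefunctionderivativei\equiv\val_i$ and $\moneycostderivativei\equiv 1$, so a direct computation gives, for $0<\price\le\val_i$, $\allocLBi(\price)=\min\{\truewealthi/\price,\;1-\price/\val_i\}$ and $\allocUBi(\price)=\min\{\truewealthi/\price,\;1\}$, while $\allocUBi(\price)=0$ for $\price>\val_i$. Monotonicity together with $\sum_i\allocLBi(0)=n>1$ places $\PriceL$ inside $(0,\val_{\max})$, where $\val_{\max}:=\max_i\val_i<\infty$; in particular $\price_0>0$, and since $\sum_i\allocLBi(\price_0)=1$ some agent has $\allocLBi(\price_0)>0$, which forces $\val_i>\price_0$ for that agent and hence $\price_0<\val_{\max}$.

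The crucial step, and the place where the hypothesis $\truewealthi>\tfrac14\val_i$ is used, is a strict gap between $\allocLBi$ and $\allocUBi$ at $\price_0$. Fix any agent $i$ with $\price_0<\val_i$. By the AM--GM inequality, $\price_0(\val_i-\price_0)\le \val_i^2/4<\truewealthi\val_i$, so dividing by $\price_0>0$ yields $\truewealthi/\price_0>1-\price_0/\val_i$; substituting into the formulas above gives $\allocLBi(\price_0)=1-\price_0/\val_i<\min\{\truewealthi/\price_0,1\}=\allocUBi(\price_0)$. Next, take the reserve agent $\reserveagent:=\argmax_i\val_i$ and observe that it cannot be the only agent with value above $\price_0$: otherwise $\sum_i\allocLBi(\price_0)=\allocLB_{\reserveagent}(\price_0)\le 1-\price_0/\val_{\max}<1$, contradicting $\price_0\in\PriceL$. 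Hence some $i'\ne\reserveagent$ has $\val_{i'}>\price_0$, for which $\allocUB_{i'}(\price_0)>\allocLB_{i'}(\price_0)$ by the gap just shown. Writing $g(\price):=\allocLB_{\reserveagent}(\price)+\sum_{i\ne\reserveagent}\allocUBi(\price)$, we conclude $g(\price_0)>\allocLB_{\reserveagent}(\price_0)+\sum_{i\ne\reserveagent}\allocLBi(\price_0)=\sum_i\allocLBi(\price_0)=1$.

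Finally, $g$ is continuous and weakly decreasing, being a sum of functions with these properties by \Cref{lem:allocLB allocUB}, so $g(\price_0)>1$ provides a $\price^\star$ slightly larger than $\price_0$ with $g(\price^\star)\ge 1$. For such a $\price^\star$ one has $\sum_i\allocLBi(\price^\star)\le 1$ by monotonicity, and in fact $\sum_i\allocLBi(\price^\star)<1$ since $\price^\star>\max\PriceL$; moreover $\sum_i\allocUBi(\price^\star)\ge g(\price^\star)\ge 1$, so $\price^\star\in\PriceH$; and the high-price-equilibrium condition of \Cref{thm:pure nash single item} at $\price^\star$ is precisely $g(\price^\star)\ge 1$, satisfied with reserve agent $\reserveagent$. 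Hence \Cref{thm:pure nash single item} yields a high-price equilibrium with per-unit price $\price^\star$, and $\price^\star\notin\PriceL$ because $\sum_i\allocLBi(\price^\star)<1$. This equilibrium has an inner FPPE price strictly above that of every low-price equilibrium, establishing the claimed multiplicity. The main obstacle is the gap step: the constant $\tfrac14$ is exactly what the bound $\price_0(\val_i-\price_0)\le\val_i^2/4$ needs in order to force $\allocUBi(\price_0)>\allocLBi(\price_0)$, and one must additionally ensure such a strict-gap agent can be chosen distinct from the designated reserve agent; everything else is continuity and monotonicity bookkeeping layered on \Cref{thm:pure nash single item}.
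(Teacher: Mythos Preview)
Your proof is correct and follows essentially the same approach as the paper's: compute the explicit formulas for $\allocLBi$ and $\allocUBi$ for budgeted agents, use the hypothesis $\truewealthi>\tfrac14\val_i$ (via the bound $\price_0(\val_i-\price_0)\le \val_i^2/4$) to obtain the strict gap $\allocLBi(\price_0)<\allocUBi(\price_0)$ for agents with $\val_i>\price_0$, argue that at least two such agents exist so that the high-price condition holds strictly at $\price_0$, and then invoke continuity to push to some $\price^\star>\price_0$. The only cosmetic differences are that the paper notes $\PriceL$ is a singleton while you work with $\price_0=\max\PriceL$, and the paper allows any agent $i'$ with $\allocLB_{i'}(\priceL)>0$ as the reserve agent whereas you fix $\reserveagent=\argmax_i\val_i$; neither choice affects the argument.
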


By utilizing the monotonicity of $\allocLBi(\cdot), \allocUBi(\cdot)$, 
we can compute $\PriceL, \PriceH$ and thus construct equilibrium in \Cref{thm:pure nash single item} in polynomial time. 

\begin{restatable}{proposition}{computeeqlb}
\label{prop:single item eqlb computation}
    In the metagame with a single item,
    there exists a polynomial time algorithm that 
    computes pure Nash equilibrium.
\end{restatable}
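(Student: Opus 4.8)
The plan is to lean entirely on the structural characterization in \Cref{thm:pure nash single item}: it reduces the task of producing a pure Nash equilibrium to the task of (a) finding one per-unit price $\price$ in the non-empty interval $\PriceL = \{\price \geq 0 : \sum_{i\in[n]} \allocLBi(\price) = 1\}$, and (b) evaluating the functions $\allocLBi$ at that price. Given such a $\price$, the \emph{low-price} construction of \Cref{thm:pure nash single item}---reporting $\reportval_i = \infty$ and $\reportwealth_i = \price\,\allocLBi(\price)$ for every agent $i$---is a pure Nash equilibrium, so the entire algorithmic content is the two subroutines (a) and (b). (We never need the high-price equilibria or the functions $\allocUBi$ for this proposition, although $\allocUBi(\price)$ can be computed by exactly the same method if one wishes to enumerate high-price equilibria as well.)

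For subroutine (b): by definition $\allocLBi(\price)$ is the minimum over $\alloci\in[0,1]$ of those $\alloci$ satisfying $\valuefunctionderivativei(\alloci)\,(1-\alloci) \leq \price\cdot\moneycostderivativei(\price\,\alloci)$, truncated by $\truewealthi/\price$. Since $\valuefunctioni$ is concave, $\alloci \mapsto \valuefunctionderivativei(\alloci)(1-\alloci)$ is weakly decreasing; since $\moneycosti$ is convex, $\alloci \mapsto \price\cdot\moneycostderivativei(\price\,\alloci)$ is weakly increasing; hence the feasible set is an up-closed subinterval of $[0,1]$ (and is non-empty, as $\alloci=1$ always qualifies), so its infimum is located by binary search on $[0,1]$ using evaluations of $\valuefunctionderivativei$ and $\moneycostderivativei$. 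For subroutine (a): let $g(\price) \triangleq \sum_{i\in[n]} \allocLBi(\price)$. By \Cref{lem:allocLB allocUB}, each $\allocLBi$ is continuous and weakly decreasing with $\allocLBi(0)=1$ and $\lim_{\price\to\infty}\allocLBi(\price)=0$, so $g$ is continuous and weakly decreasing with $g(0)=n\geq 2 > 1$ and $g(\price)\to 0$; thus $\PriceL=g^{-1}(1)$ is a non-empty subinterval of $(0,\infty)$ (in particular $0\notin\PriceL$, so the division by $\price$ in the construction is well-defined). We find a point of $\PriceL$ by binary search on $\price$: maintain $[\price_{\mathrm{lo}},\price_{\mathrm{hi}}]$ with $g(\price_{\mathrm{lo}})\geq 1\geq g(\price_{\mathrm{hi}})$ (an initial $\price_{\mathrm{hi}}$ exists by the limit above), evaluate $g$ at the midpoint via $n$ calls to subroutine (b), and recurse on whichever half preserves the invariant; monotonicity of $g$ guarantees correctness. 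Each iteration is polynomial, and $O(\log(1/\epsilon))$ iterations pin $\price$ down to accuracy $\epsilon$, after which the messages $(\reportval_i,\reportwealth_i)=(\infty,\ \price\,\allocLBi(\price))$ form a pure Nash equilibrium by \Cref{thm:pure nash single item}.

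The main obstacle is not the reduction but the computational model: binary search only identifies the equilibrium price to a chosen precision, so a genuinely exact ``polynomial time'' claim must either be phrased in the real-arithmetic/oracle model (comparisons such as ``$g(\price)\gtrless 1$'' are unit cost, and $\valuefunctionderivativei,\moneycostderivativei$ are evaluated exactly), or be specialized to representations of $\valuefunctioni,\moneycosti$---for instance piecewise-linear functions, or the linear/budgeted special cases---for which $\PriceL$ provably contains a rational point of polynomially bounded bit length, so that the search terminates exactly; the write-up should state whichever convention is intended. A secondary point to verify is that $\allocLBi$ really is (weakly) monotone and continuous as asserted in \Cref{lem:allocLB allocUB}, including at the kink introduced by the $\truewealthi/\price$ truncation, since both the inner search and the monotonicity of $g$ rely on it.
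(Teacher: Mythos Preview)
Your proposal is correct and takes essentially the same approach as the paper: use the monotonicity and continuity of $\allocLBi$ from \Cref{lem:allocLB allocUB} to locate a price in $\PriceL$ via binary search, then output the low-price message profile from \Cref{thm:pure nash single item}. Your discussion of the computational model (exact versus $\epsilon$-approximate, and the inner binary search for $\allocLBi(\price)$) is in fact more careful than the paper's own sketch, which simply invokes binary search without addressing these details.
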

The formal proof of the proposition is deferred to \Cref{apx:computeeqlb}. 
In \Cref{apx:best response computation},
we also present a polynomial time algorithm 
to compute the best response for metagame with a single item.

\subsection{Proof of \texorpdfstring{\Cref{thm:pure nash single item}}{}}



We analyze the low-price (high-price) equilibrium separately.

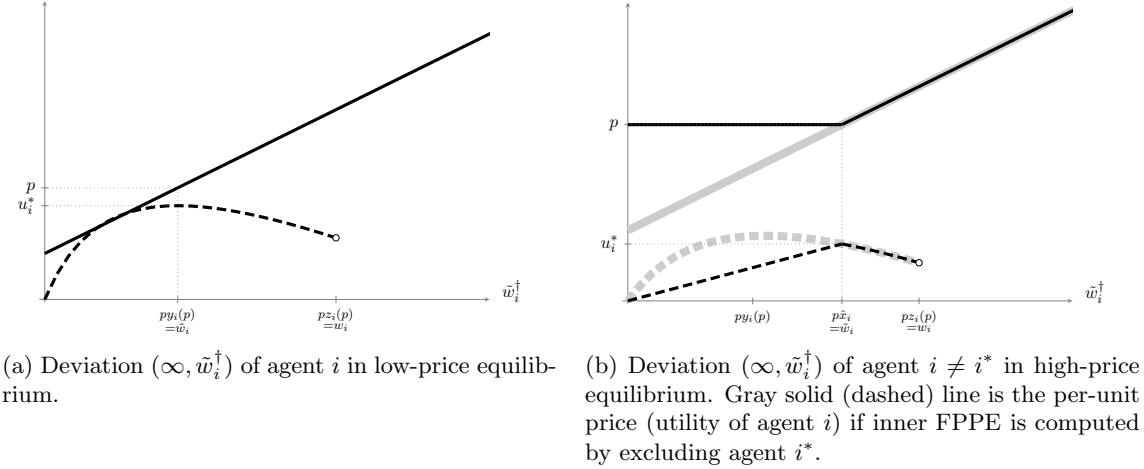
\begin{figure}[hbt]
\centering
\subfloat[Deviation $(\infty,\reportwealth_i\primed)$ of agent $i$ in low-price equilibrium.]{
\begin{tikzpicture}[scale=0.6, transform shape]
\begin{axis}[
axis line style=gray,
axis lines=middle,
xlabel = {$\reportwealth_i\primed$},
xlabel style={at={(axis description cs:1.05,-0.02)}, anchor=south},
xtick={0, 0.24221453, 0.53045},
ytick={0, 0.3460210084, 0.4117647058823529},
xticklabels={0, $\substack{\price\allocLBi(\price) \\ = \reportwealth_i}$, $\substack{ \price\allocUBi(\price) \\ =\truewealth_i }$},
yticklabels={0, $\util_i^*$, $\price$},
xmin=-0.01,xmax=0.81,ymin=-0.01,ymax=1.1,
width=0.7\textwidth,
height=0.5\textwidth]

\addplot[domain=0:1, line width=0.7mm] (x, {0.16955 + x});

\addplot[domain=0:0.53045, dashed, dash pattern=on 6pt off 3pt on 6pt off 3pt, line width=0.7mm] (x, {x/(0.16955 + x)*1-x});

\addplot[color=black, only marks, style={mark=*, fill=white}] coordinates {(0.53045,0.22733553025208542)};

\addplot[domain=0:0.24221453, dotted, gray] (x, {0.4117647058823529});
\addplot[domain=0:0.24221453, dotted, gray] (x, {0.3460210084});
\draw [dotted, gray] (axis cs: 0.24221453,0)--(axis cs:0.24221453, 0.4117647058823529);

\end{axis}

\end{tikzpicture}
\label{fig:pure nash single item deviation low price}
}
\quad
\subfloat[Deviation $(\infty,\reportwealth_i\primed)$ of agent $i\not=\reserveagent$ in high-price equilibrium. Gray solid (dashed) line is
the per-unit price (utility of agent $i$)
if inner FPPE is computed by excluding agent $\reserveagent$.]{




\begin{tikzpicture}[scale=0.6, transform shape]
\begin{axis}[
axis line style=gray,
axis lines=middle,
xlabel = {$\reportwealth_i\primed$},
xlabel style={at={(axis description cs:1.05,-0.02)}, anchor=south},
xtick={0, 0.22749999999999998, 0.39, 0.53045},
ytick={0, 0.20999999999999996, 0.65},
xticklabels={0, $\substack{\price\allocLBi(\price)}$, $\substack{ \price\allocHati\\=\reportwealth_i}$,
$\substack{\price\allocUBi(\price)\\=\truewealth_i }$},
yticklabels={0, $\util_i^*$, $\price$},
xmin=-0.01,xmax=0.81,ymin=-0.01,ymax=1.1,
width=0.7\textwidth,
height=0.5\textwidth]

\addplot[gray!40!white, domain=0.0:1, line width=1.8mm] (x, {0.26 + x});

\addplot[domain=0.39:1, line width=0.7mm] (x, {0.26 + x});
\addplot[domain=0:0.39, line width=0.7mm] (x, {0.65});

\addplot[gray!40!white, domain=0:0.53045, dashed, dash pattern=on 6pt off 3pt on 6pt off 3pt, line width=1.8mm] (x, {x/(0.26 + x)*1-x});

\addplot[domain=0.39:0.53045, dashed, dash pattern=on 6pt off 3pt on 6pt off 3pt, line width=0.7mm] (x, {x/(0.26 + x)*1-x});

\addplot[domain=0:0.39, dashed, dash pattern=on 6pt off 3pt on 6pt off 3pt, line width=0.7mm] (x, {x/(0.65)*1-x});

\addplot[color=black, only marks, style={mark=*, fill=white}] coordinates {(0.53045,0.1406234391802138)};

\addplot[domain=0:0.39, dotted, gray] (x, {0.65});
\addplot[domain=0:0.39, dotted, gray] (x, {0.20999999999999996});
\draw [dotted, gray] (axis cs: 0.39,0)--(axis cs:0.39, 0.65);

\end{axis}

\end{tikzpicture}
\label{fig:pure nash single item deviation high price}
}
\caption{Graphical illustration of agent $i$'s deviation in
the pure Nash equilibrium analysis for \Cref{thm:pure nash single item}. The solid (dashed) line is the per-unit price (utility of agent $i$). The utility of agent~$i$ in the equilibrium is $\util_i^*$. In this graphical example, agent $i$ has
hard budget $\truewealth_i$ and thus her utility becomes negative infinite for all $\reportwealth_i\primed > \truewealth_i$.}
\label{fig:pure nash single item deviation}
\end{figure}

\paragraph{Low-price equilibrium}
Since both $\allocLBi(\price)$ and $\allocUBi(\price)$ is continuous and weakly decreasing with $\allocLBi(0) = \allocUBi(0) = 1$ and 
$\lim_{\price\rightarrow \infty}\allocLBi(\price) = \lim_{\price\rightarrow \infty}\allocUBi(\price) = 0$ (\Cref{lem:allocLB allocUB}), set $\PriceL$ is a non-empty subinterval. 

Fix an arbitrary $\price\in\priceL$.
Now consider the reported message profile $\{(\reportval_i \triangleq \infty,
\reportwealth_i \triangleq \price \allocLBi(\price))\}_{i\in[n]}$.
By construction, the inner FPPE has per-unit price $\sum_{i\in[n]}\reportwealth_i = \sum_{i\in[n]}\price\allocLBi(\price) = \price$, and every agent~$i$ 
has allocation $\allocLBi(\price)$.
Next we verify that this message profile forms a pure Nash equilibrium.
A graphical illustration of the argument can be found at \Cref{fig:pure nash single item deviation low price}.
Fix an arbitrary agent $i\in[n]$. By \Cref{lem:reporting infinite value},
it is sufficient to argue that 
\begin{align*}
    \reportwealth_i \in \argmax_{\reportwealth_i\primed}
        \util_i(\infty,\reportwealth_i\primed, \reportval_{-i}, \reportwealth_{-i})
\end{align*}
Note that by deviating to message $(\reportval_i\primed = \infty, \reportwealth_i\primed)$, the new inner FPPE 
has per-unit price $\price\primed = \reportwealth_i\primed + \sum_{\ell\in[n]:\ell\not= i}\reportwealth_\ell$,
allocation $\alloci\primed = \frac{\reportwealth_i\primed}{\reportwealth_i\primed + \sum_{\ell\in[n]:\ell\not= i}\reportwealth_\ell}$ 
and payment $\paymenti\primed = \reportwealth_i\primed$. Therefore,
the utility $\util_i(\infty,\reportwealth_i\primed, \reportval_{-i}, \reportwealth_{-i})$
has the following closed-form:
for every $\reportwealth_i\primed \leq \truewealth_i$,
\begin{align*}
     \util_i(\infty,\reportwealth_i\primed, \reportval_{-i}, \reportwealth_{-i}) 
    &=
     \valuefunctioni\left(
     \frac{\reportwealth_i\primed}{\reportwealth_i\primed + \sum_{\ell\in[n]:\ell\not= i}\reportwealth_\ell}
     \right)
     -
     \moneycosti\left(\reportwealth_i\primed\right)
\end{align*}
which is concave in $\reportwealth_i\primed$.
Taking the derivative of the right-hand side, we obtain 
\begin{align*}
    \frac{\sum_{\ell\in[n]:\ell\not= i}\reportwealth_\ell}{\left(\reportwealth_i\primed + \sum_{\ell\in[n]:\ell\not= i}\reportwealth_\ell\right)^2}
    \cdot 
    \valuefunctionderivativei\left(
     \frac{\reportwealth_i\primed}{\reportwealth_i\primed + \sum_{\ell\in[n]:\ell\not= i}\reportwealth_\ell}
     \right)
     -
     \moneycostderivativei\left(\reportwealth_i\primed\right) 
\end{align*}
which can be re-written as 
\begin{align*}
    \valuefunctionderivativei(\alloci\primed)\cdot \frac{(1-\alloci\primed)}{\price\primed} - \moneycostderivativei(\price\primed\alloci\primed)
\end{align*}
Recall $\allocLBi(\price) \triangleq \frac{\truewealth_i}{\price}\wedge\min\left\{\alloci\in[0, 1]: \valuefunctionderivativei(\alloci)\cdot (1-\alloci) \leq \price \cdot \moneycostderivativei(\price\alloci)\right\}$.
If $\price\allocLBi(\price) < \truewealth_i$,
then $\reportwealth_i\primed = \price\allocLBi(\price)$ is the maximizer of $\util_i(\infty,\reportwealth_i\primed, \reportval_{-i}, \reportwealth_{-i})$,
since 
$\util_i(\infty,\reportwealth_i\primed, \reportval_{-i}, \reportwealth_{-i}) = \valuefunctioni(
\sfrac{\reportwealth_i\primed}{(\reportwealth_i\primed + \sum_{\ell\in[n]:\ell\not= i}\reportwealth_\ell)}
) - \moneycosti(\reportwealth_i\primed)$
for $\reportwealth_i\primed \in [0, \truewealth_i]$ and the right-hand side is concave in $\reportwealth_i\primed$.
Similarly, if $\price\allocLBi(\price) = \truewealth_i$,
then $\util_i(\infty,\reportwealth_i\primed, \reportval_{-i}, \reportwealth_{-i})$
is weakly increasing and weakly positive in $\reportwealth_i\primed \in[0, \truewealth_i]$,
and thus
$\reportwealth_i\primed = \price\allocLBi(\price) = \truewealth_i$ is again the maximizer attained on the boundary point.

Finally, we argue that there exists no pure Nash equilibrium whose the inner FPPE 
has a per-unit price strictly smaller than $\min\PriceL$ by contradiction.
Suppose there exists a pure Nash equilibrium with per-unit price $\underbar\price < \min\PriceL$.
Due to the monotonicity of $\allocLBi(\cdot)$ (\Cref{lem:allocLB allocUB}),
it guarantees that $\sum_{i\in[n]}\allocLBi(\underbar\price) > 1$.
Thus, there exists an agent $i$ with allocation $\alloci < \allocLBi(\underbar\price)$.
A similar argument as the one in \Cref{lem:poa pure allocation price relation} guarantees that agent $i$ strictly prefers
to increases her reported budget by a sufficiently small amount, which is a contradiction as desired.

\paragraph{High-price equilibrium}
Since both $\allocLBi(\price)$ and $\allocUBi(\price)$ is continuous and weakly decreasing with $\allocLBi(0) = \allocUBi(0) = 1$ and 
$\lim_{\price\rightarrow \infty}\allocLBi(\price) = \lim_{\price\rightarrow \infty}\allocUBi(\price) = 0$ (\Cref{lem:allocLB allocUB}), set $\PriceH$ is a non-empty subinterval. 

Fix an arbitrary $\price\in\priceH$.
Suppose there exists agent $\reserveagent$ such that $\allocLB_{\reserveagent}(\price) + \sum_{i\in[n]:i\not= \reserveagent} \allocUBi(\price) \geq 1$.
Now consider the reported message profile $\{(\reportval_i,\reportwealth_i)\}_{i\in[n]}$ constructed as 
\begin{align*}
i = \reserveagent:\qquad &\reportval_i \triangleq \price,~
\reportwealth_i \triangleq \infty,
\\
\forall i\not= \reserveagent:\qquad &\reportval_i \triangleq \infty,~
\reportwealth_i \triangleq \price  \allocHati
\end{align*}
where $\{\allocHati\}_{i\not= \reserveagent}$ is an arbitrary solution such that 
$\sum_{i\not=\reserveagent}\allocHati = 1 - \allocLB_{\reserveagent}(\price)$ and $\allocLBi(\price) \leq \allocHati \leq \allocUBi(\price)$.
By construction, the inner FPPE has per-unit price $\price$, and every agent $i\not=\reserveagent$ ($i = \reserveagent$)
has allocation $\allocHati$ (allocation $\allocLB_{\reserveagent}(\price)$).
Next we verify that this message profile forms a pure Nash equilibrium.
A graphical illustration of the argument can be found at \Cref{fig:pure nash single item deviation high price}.
Fix an arbitrary agent $i\in[n]$. By \Cref{lem:reporting infinite value},
it is sufficient to argue that 
\begin{align*}
    \reportwealth_i \in \argmax_{\reportwealth_i\primed}
        \util_i(\infty,\reportwealth_i\primed, \reportval_{-i}, \reportwealth_{-i})
\end{align*}
We analyze agent $i\not=\reserveagent$ and $i=\reserveagent$ separately.

Consider agent $i\not=\reserveagent$.
By deviating to message $(\reportval_i\primed = \infty, \reportwealth_i\primed)$ with $\reportwealth_i\primed < \reportwealth_i$, the new inner FPPE 
has the same per-unit price $\price\primed = \price$,
with smaller 
allocation $\alloci\primed = \sfrac{\reportwealth_i\primed}{\price}$ 
and payment $\paymenti\primed = \reportwealth_i\primed$.\footnote{Note that this is different from low-price equilibrium argument, where the new per-unit price decreases. In the high-price equilibrium, the per-unit price remains the same after deviation $\reportwealth_i\primed < \reportwealth_i$, since there exist agent $i'$ in $\lagent$ who reports $(\reportval_{i'} = \price, \reportwealth_{i'} = \infty)$ and avoids the decrease of the per-unit price.}
Note that $\alloci\primed = \sfrac{\reportwealth_i\primed}{\price} < \sfrac{\reportwealth_i}{\price} = \allocHati \leq \allocUBi(\price)$.
The construction of $\allocUBi(\price)$ implies that the utility of agent $i$ 
weakly decreases by deviating to message $(\reportval_i\primed = \infty, \reportwealth_i\primed)$ with $\reportwealth_i\primed < \reportwealth_i$.
On the other side, by deviating to to message $(\reportval_i\primed = \infty, \reportwealth_i\primed)$ with $\reportwealth_i\primed \geq \reportwealth_i$,
using a similar argument as the one for low-price equilibrium, 
it can be verified that the utility is concave and decreasing for $\reportwealth_i\primed \in [\reportwealth_i,\truewealth_i]$ and such deviation is not profitable.
Therefore, we obtain $\reportwealth_i \in \argmax_{\reportwealth_i\primed}
\util_i(\infty,\reportwealth_i\primed, \reportval_{-i}, \reportwealth_{-i})$
as desired. 

Consider agent $\reserveagent$.
By construction,
agent $i$ reports message $(\reportval_i = \price, \reportwealth_i = \infty)$ and has allocation  $\allocLB_{\reserveagent}(\price)$ and payment $\price\allocLB_{\reserveagent}(\price)$.
Note that the utility of agent $\reserveagent$ is equivalent to the utility by reporting  $(\reportval_i\primed = \infty, \reportwealth_i\primed = \price\allocLB_{\reserveagent}(\price))$.
Following the same argument as the one for the low-price equilibrium, it maximizes her utility due to the concavity of the utility as a function of $\reportwealth_i\primed$.

Finally, we argue that there exists no pure Nash equilibrium whose inner FPPE 
has a per-unit price strictly larger than $\max\PriceH$ by contradiction.
Suppose there exists a pure Nash equilibrium with per-unit price $\bar\price > \max\PriceL$.
Due to the monotonicity of $\allocUBi(\cdot)$ (\Cref{lem:allocLB allocUB}),
it guarantees that $\sum_{i\in[n]}\allocUBi(\bar\price) < 1$.
Thus, there exists an agent $i$ with allocation $\alloci > \allocUBi(\bar\price)$.
The construction of $\allocUBi(\bar\price)$
implies that $\valuefunctionderivativei(\alloci) < \bar\price\cdot \moneycostderivativei(\bar\price\alloci)$.
Thus, agent $i$'s utility can be strictly increases if she decreases
her reported budget by a sufficiently small amount, which is a contradiction as desired.

}



\bibliographystyle{apalike}
	\bibliography{refs.bib}
\appendix
\section*{Appendix}
\section{Original Definition of First-price Pacing Equilibrium}
\label{apx:FPPE original definition}

The original definition of the first-price pacing equilibrium in \cite{CKPSSSW-22} is as follows.

\begin{definition}
\label{def:FPPE old}
For budgeted agents with types $\{(\val_i, \truewealth_i)\}_{i\in[n]}$, a \emph{first-price pacing equilibrium (FPPE)} is a tuple $(\pacescalar, \alloc)$ of pacing multipliers $\pacescalar_i \in [0, 1]$ for each item $j$, and allocation $\alloci\in[0,1]^m$ for each agent $i$ that satisfies the following properties:
\begin{enumerate}
    \item \textsl{(price)} per-unit price $\pricej = \max_{i\in[n]} \pacescalar_i\vali\ctrij$
    \item \textsl{(items go to highest agents)} if $\allocij > 0$, then $\pacescalar_i\vali\ctrij = \max_{i'\in[n]}\pacescalar_{i'}\val_{i'}\ctr_{i'j}$
    \item \textsl{(budget-feasible)} $\sum_{j\in[m]}\allocij\pricej \leq \truewealthi$
    \item \textsl{(demanded items sold completely)} if $\pricej > 0$, then $\sum_{i\in[n]}\allocij = 1$
    \item \textsl{(no overselling)} $\sum_{i\in[n]}\allocij \leq 1$
    \item \textsl{(no unnecessary pacing)} if $\sum_{j\in[m]}\allocij\pricej < \truewealthi$, then $\pacescalar = 1$.
\end{enumerate}
\end{definition}
We verify the equivalence between \Cref{def:FPPE} and \Cref{def:FPPE old}. 

From \Cref{def:FPPE old} to \Cref{def:FPPE}, consider the same allocation, and let the per-unit price $\pricej$ in \Cref{def:FPPE} be $\pricej=\max_{i\in[n]}\pacescalar_i\vali\ctrij$ due to the ``price'' property in \Cref{def:FPPE old}. The ``highest band-per-buck'' property in \Cref{def:FPPE} is implied by the ``price'' property and ``items go to highest agents'' property in \Cref{def:FPPE old}. The ``supply feasibility'' property in \Cref{def:FPPE} is implied by the ``demanded item sold completely'' property and ``no overselling'' property. The ``budget feasibility'' property in \Cref{def:FPPE} is implied by the ``budget-feasible'' and `no unnecessary pacing'' property in \Cref{def:FPPE old}. 

From \Cref{def:FPPE} to \Cref{def:FPPE old}, consider the same allocation, and let the pacing multiplier $\pacescalar_i$ in \Cref{def:FPPE old} be $\pacescalar_i = 1 \wedge (\min_{j\in[m]}\sfrac{\pricej}{\vali\ctrij})\in[0, 1]$. Then the ``price'' property and ``items go to highest agents'' property in \Cref{def:FPPE old} are implied by the pacing multiplier construction and ``highest bang-per-buck'' property in \Cref{def:FPPE}. 
The ``budget-feasible'' property and ``no unnecessary pacing'' property in \Cref{def:FPPE old} in \Cref{def:FPPE old} are implied by the ``budget feasibility'' property and ``payment calculation'' property in \Cref{def:FPPE}. The ``demanded items sold completely'' property and ``no overselling'' property in \Cref{def:FPPE old} are implied by ``supply feasibility'' property in \Cref{def:FPPE}.

\section{Omitted Proofs}
\label{apx:proofs}

\subsection{Proof of \texorpdfstring{\Cref{lem:compute utility}}{}}
\label{apx:computeutility}
\computeutility*

\begin{proof}
    Let $\alloci$ be the allocation of agent $i$.
    By definition, given allocation $\alloci$ and payment $\paymenti$,
    agent~$i$'s utility $\util_i(\alloci,\paymenti)$ is
    \begin{align*}
        \util_i(\alloci,\paymenti) =
    \left\{
    \begin{array}{ll}
    \valuefunctioni\left(
    \sum_{j\in[m]}
    \ctrij\allocij\right)
     -
    \moneycosti\left(\paymenti\right)&
    \quad \text{if } \paymenti \leq \truewealthi\\
    -\infty     &
     \quad \text{if } \paymenti > \truewealthi
    \end{array}
    \right.
    \end{align*}
    Note that the lemma statement is satisfied immediately if $\paymenti > \truewealthi$.

    Now we consider the case where $\paymenti\leq \truewealthi$.
    Define $\purchaseseti \triangleq \argmax_{j\in[m]} \frac{\ctrij}{\pricej}$
    as the subset of items that achieves the highest bang-per-buck for agent $i$.
    The ``highest bang-per-buck'' property of FPPE ensures that
    $\allocij > 0$ if $j \in \purchaseseti$.
    Hence,
    \begin{align*}
        \sum_{j\in[m]} \ctrij\allocij
        =
        \sum_{j\in[m]} \frac{\ctrij}{\pricej}\cdot \pricej\allocij
        =
        \sum_{j\in \purchaseseti} \frac{\ctrij}{\pricej}\cdot \pricej\allocij
        =
        \left(\max_{j\in[m]}\frac{\ctrij}{\pricej}\right)\cdot \paymenti
    \end{align*}
    where the last equality holds due to the definition of $\purchaseseti$ and
    the ``payment calculation'' property of FPPE, i.e., $\paymenti = \sum_{j\in[m]}\pricej\allocij = \sum_{j\in \purchaseseti}\pricej\allocij$.
\end{proof}

\subsection{Proof of \texorpdfstring{\Cref{lem:reporting infinite value}}{}}
\label{apx:reportinginfinitevalue}
\reportinginfinitevalue*

To prove \Cref{lem:reporting infinite value}, we need the following technical lemma.

\begin{lemma}
\label{lem:finite price}
    In the metagame, for every pure Nash equilibrium and the per-unit prices $\price$ of its inner FPPE, it satisfies that $\pricej \not= \infty$ for every item $j$.
\end{lemma}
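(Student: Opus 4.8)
The plan is a short proof by contradiction: an infinite per-unit price would force some agent to make an infinite payment, which is incompatible with the equilibrium guarantee that every agent obtains non-negative utility. So suppose that at some pure Nash equilibrium, with message profile $\{(\reportval_i,\reportwealth_i)\}_{i\in[n]}$, the inner FPPE $(\price,\alloc,\payment)$ has $\pricej = \infty$ for some item $j$. First I would invoke the ``supply feasibility'' property of FPPE (\Cref{def:FPPE}): since $\pricej = \infty > 0$, we have $\sum_{i\in[n]}\allocij = 1$, so some agent $i^*$ receives $\alloc_{i^* j} > 0$. Then the ``payment calculation'' property gives $\payment_{i^*} = \sum_{j'\in[m]}\alloc_{i^* j'}\price_{j'} \geq \alloc_{i^* j}\,\pricej = \infty$, using $\alloc_{i^* j} > 0$; so agent $i^*$ makes an infinite payment in the metagame.

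Next I would show that this forces agent $i^*$'s utility to be $-\infty$, by splitting on her true hard budget $\truewealth_{i^*}$. If $\truewealth_{i^*} < \infty$, then $\payment_{i^*} = \infty > \truewealth_{i^*}$, so her utility is $-\infty$ by definition. If instead $\truewealth_{i^*} = \infty$, then $\payment_{i^*} \le \truewealth_{i^*}$, and her utility equals $\valuefunc_{i^*}\!\big(\sum_{j'\in[m]}\ctr_{i^* j'}\alloc_{i^* j'}\big) - \moneycost_{i^*}(\infty)$. The click total here is at most $\sum_{j'\in[m]}\ctr_{i^* j'} < \infty$ (finitely many items, each with a finite click-through rate, and $\alloc_{i^* j'}\le 1$), so the valuation term is finite; on the other hand the standing assumption on the money cost function — that either $\truewealth_{i^*}<\infty$ or $\lim_{t\to\infty}\moneycost_{i^*}(t) = \infty$ — forces $\moneycost_{i^*}(\infty) = \infty$. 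Hence her utility is again $-\infty$.

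This contradicts the equilibrium condition: at a pure Nash equilibrium every agent obtains non-negative utility, since reporting $(\reportval_i,\reportwealth_i) = (0,0)$ always yields payment $0$ and hence utility $\valuefunc_i(\cdot) - \moneycost_i(0) \ge 0$. Therefore no item can have an infinite inner-FPPE price, which proves the lemma. The only subtle point — and the place I expect to need care — is the case $\truewealth_{i^*} = \infty$: there we cannot appeal to the $-\infty$ branch of the utility function and must instead observe that agent $i^*$'s necessarily-finite valuation for clicks is dominated by the infinite money-cost disutility that the standing assumption guarantees. The accompanying arithmetic with $\infty$ (in particular that $\alloc_{i^* j}\cdot\infty = \infty$ precisely because $\alloc_{i^* j} > 0$) is routine.
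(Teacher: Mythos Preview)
Your proof is correct and follows essentially the same contradiction argument as the paper: an infinite price forces (via supply feasibility and payment calculation) some agent to pay $\infty$, hence have utility $-\infty$, contradicting the equilibrium guarantee of non-negative utility. The paper's own proof is a terse one-liner that does not spell out the case split on $\truewealth_{i^*}$, whereas you handle the $\truewealth_{i^*}=\infty$ case explicitly by invoking the standing assumption on $\moneycost_{i^*}$; this extra care is warranted and matches the paper's implicit reasoning.
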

\begin{proof}
    We prove the statement by contradiction. 
    Suppose there exists item $j$ such that $\pricej =\infty$.
    Due to the ``supply feasibility'' property and ``payment calculation'' property in the definition of FPPE, 
    there exists an agent whose payment is $\infty$ and utility is $-\infty$, which contradicts the assumption that the per-unit prices $\price$ is induced by an equilibrium.
\end{proof} 

Now we are ready to prove \Cref{lem:reporting infinite value}.

\begin{proof}[Proof of \Cref{lem:reporting infinite value}]
    The first equality in the lemma statement holds due to the definition of pure Nash equilibrium.
    For the second equality, it is sufficient to show that there exists $\reportwealth_i\primed$ such that 
    $\util_i(\reportval_i,\reportwealth_i, \reportval_{-i}, \reportwealth_{-i}) = \util_i(\infty,\reportwealth_i\primed, \reportval_{-i}, \reportwealth_{-i})$.
    In particular, consider the construction that $\reportwealth_i\primed \triangleq \paymenti(\reportval_i,\reportwealth_i, \reportval_{-i}, \reportwealth_{-i})$,
    where $\paymenti(\reportval_i,\reportwealth_i, \reportval_{-i}, \reportwealth_{-i})$
    is the payment of agent $i$ in equilibrium $\{(\reportval_i, \reportwealth_i), (\reportval_{-i}, \reportwealth_{-i})\}$.
    By checking the properties in the definition of FPPE one by one, we can verify that for each item $j$, the unique per-unit price $\price_j$ induced by $\{(\reportval_i, \reportwealth_i), (\reportval_{-i}, \reportwealth_{-i})\}$ is the same as the unique per-unit price $\price_j\primed$ induced by $\{(\infty, \reportwealth_i\primed), (\reportval_{-i}, \reportwealth_{-i})\}$.
    Invoking \Cref{lem:finite price}, it ensures that $\pricej\primed = \pricej < \infty$.
    Therefore, due to the ``budget feasibility'' property, agent $i$ exhausts her reported budget,
    i.e., $\paymenti(\infty,\reportwealth_i\primed , \reportval_{-i}, \reportwealth_{-i}) =
    \reportwealth_i\primed = \paymenti(\reportval_i,\reportwealth_i, \reportval_{-i}, \reportwealth_{-i})$.
    Consequently, $\util_i(\reportval_i,\reportwealth_i, \reportval_{-i}, \reportwealth_{-i}) = \util_i(\infty,\reportwealth_i\primed, \reportval_{-i}, \reportwealth_{-i})$
    due to \Cref{lem:compute utility}.
\end{proof}

\subsection{Proof of \texorpdfstring{\Cref{lem:agent in favor tie-breaking}}{}}
\label{apx:favortiebreaking}
\favortiebreaking*
\begin{proof}
    Invoking \Cref{lem:reporting infinite value}, we have
    \begin{align*}
        \util_i^\allocselectionrule(\reportval_i, \reportwealth_i, \reportval_{-i}, \reportwealth_{-i})
        =
        \max_{\reportwealth_i\primed}
        \util_i^\allocselectionrule(\infty, \reportwealth_i\primed, \reportval_{-i}, \reportwealth_{-i})
    \end{align*}
    Using a similar argument as the one in \Cref{lem:reporting infinite value},
    for every tie-breaking rule $\allocselectionrule'$,
    \begin{align*}
        \util_i^{\allocselectionrule'}(\reportval_i, \reportwealth_i, \reportval_{-i}, \reportwealth_{-i})
        \leq
        \max_{\reportwealth_i\primed}
        \util_i^{\allocselectionrule'}(\infty, \reportwealth_i\primed, \reportval_{-i}, \reportwealth_{-i})
    \end{align*}
    Finally, since under message profile 
    $\{(\infty, \reportwealth_i\primed), (\reportval_{-i}, \reportwealth_{-i})\}$,
    agent $i$ exhausts her reported budget $\reportwealth_i\primed$ due to the ``budget feasibility'' property of FPPE, 
    \Cref{lem:compute utility,lem:FPPE uniqueness}
    ensure that the per-unit prices as well as agent $i$'s utility do not depend on the specific choice of tie-breaking rule.
    Putting all pieces together,
    for every tie-breaking rule $\allocselectionrule'$,
    \begin{align*}
        \util_i^{\allocselectionrule'}(\reportval_i, \reportwealth_i, \reportval_{-i}, \reportwealth_{-i})
        \leq
        \max_{\reportwealth_i\primed}
        \util_i^{\allocselectionrule'}(\infty, \reportwealth_i\primed, \reportval_{-i}, \reportwealth_{-i})
        =
        \max_{\reportwealth_i\primed}
        \util_i^{\allocselectionrule}(\infty, \reportwealth_i\primed, \reportval_{-i}, \reportwealth_{-i})
        =
        \util_i^\allocselectionrule(\reportval_i, \reportwealth_i, \reportval_{-i}, \reportwealth_{-i})
    \end{align*}
    as desired.
\end{proof}

\subsection{Proof of \texorpdfstring{\Cref{claim:budget two allocation}}{}}
\label{apx:budgettwoalloc}
\budgettwoalloc*
\begin{proof}
        We prove this by contradiction. Suppose there exists a pure Nash equilibrium as desired.
        Without loss of generality, we assume $\alloc_{11} = 1$ and $\alloc_{12} > 0$, Namely, agent 1 receives her favored item 1 entirely and a positive fraction of item 2.
        Let $\price_1, \price_2$ be the per-unit prices of the inner FPPE. The ``highest bang-per-buck'' property implies $\frac{1}{2}\price_1 = \price_2 \triangleq \price$.
        Moreover, let $\payment_i$ be the payments for each agent $i$ in the equilibrium. Below we consider three cases depending agents' payments.

        \paragraph{Case (i) agent $1$ does not exhaust her true budget $\truewealth_1$, i.e., $\payment_1 < \truewealth_1$.}
        In this case, we know that agent 2 does not exhaust her true budget as well, i.e., $\payment_2 < \truewealth_2$. 
        Otherwise, the per-unit price $\price_1$ of item 1 is at least $\price_1 = 2 \price_2 \geq 2\truewealth_2$ and thus agent $1$ receives negative utility due to the violation of her budget constraint.
        Since both agents do not exhaust their budgets, invoking \Cref{lem:poa pure allocation price relation}, we have
        \begin{align*}
            \frac{\val_1}{1} \leq \frac{\price_1 + \price_2}{(1-\alloc_{11})\ctr_{11} + (1-\alloc_{12})\ctr_{12}},
            \quad 
            \frac{\val_2}{1} \leq \frac{\price_2}{(1 - \alloc_{22})\ctr_{22}}
        \end{align*}
        which can be simplified as 
        \begin{align*}
            \price_{11} \geq 2\max\left\{\frac{2}{3}\alloc_{22}, 1 - \alloc_{22}\right\} \geq \frac{4}{5}
        \end{align*}
        where the second inequality by considering all possible $\alloc_{22}$. 
        Notice that this leads to a contradiction since agent 1 receives a negative utility due to the violation of her budget constraint.
        
        \paragraph{Case (ii) agent 1 exhausts her true budget $\truewealth_1$, i.e., $\payment_1 = \truewealth_1$.}
        Since agent $1$ exhausts her true budget $\truewealth_1$, we have 
        \begin{align*}
            \truewealth_1 = \payment_1 
            = \price_1\alloc_{11} + \price_2 \alloc_{12}
            = (2 + \alloc_{12})\price_2
        \end{align*}
        
        Using the same argument as the one in case (i), 
        we know that agent 2 does not exhaust her true budget. 
        Using a similar argument as the one in \Cref{lem:poa pure allocation price relation}, we consider a deviation of agent 2 by
        increasing her reported budget for a sufficiently small amount $\epsilon > 0$ and let reported value be infinity. Under this deviation, agent $1$ still wins strictly positive fraction for both items. Thus, the new per-unit prices of inner FPPE under the deviation is $\price_1\primed = \price_1 + \frac{2}{3}\epsilon$ 
        and $\price_2\primed = \price_2 + \frac{1}{3}\epsilon$.
        Given that this is not a profitable deviation, we have
        \begin{align*}
            \val_2 \ctr_{22}\alloc_{22} - \price_2 \alloc_{22} 
            \geq 
            \val_2 \ctr_{22} \frac{\price_2\alloc_{22} + \epsilon}{\price_2 + \frac{1}{3}\epsilon} - (\price_2\alloc_{22} + \epsilon)
        \end{align*}
        Letting $\epsilon$ approach zero, it can be simplified as 
        \begin{align*}
            \price_2 \geq 1 - \frac{1}{3}\alloc_{22}
        \end{align*}
        Together with $\wealth_1 = (2 + \alloc_{12})\price_2$, we obtain 
        $\frac{1}{2(3-\alloc_{22})} \geq 1-\frac{1}{3}\alloc_{22}$ which is not satisfied for all $\alloc_{22} \in [0, 1]$ and thus leads to a contradiction as desired.
\end{proof}

\subsection{Proof of \texorpdfstring{\Cref{lem:fppe increasing budget}}{}}
\label{apx:increasingBudget}
\increasingBudget*

The proof of \Cref{lem:fppe increasing budget} relies on the following concept and lemmas in \cite{CKPSSSW-22}.

\begin{definition}[Budget-feasible first-price pacing multiplier]
For budgeted agents with types $\{(\val_i, \truewealth_i)\}_{i\in[n]}$,
a set of \emph{budget-feasible first-price pacing multipliers (BFPM)} is a tuple $(\pacescalar, \alloc)$, of
pacing multipliers $\pacescalar \in [0, 1]$ for each bidder $i \in N$, and factional allocation $\allocij\in  [0, 1]$ for agent
$i \in [n]$ and item $j \in [m]$ with the following properties:
\begin{itemize}
    \item (Price) Per-unit price $\price_j = \max_{i} \pacescalar_i \val_{ij}$
    \item (Highest bid wins) 
    If $\alloc_{ij} > 0$, then $\pacescalar_i \val_{ij} = \price_j$
    \item (Budget feasible) $\sum_{j} \alloc_{ij}\price_j \leq \wealth_i$
    \item (Demanded goods sold completely)
    If $\price_j > 0$, then $\sum_{i}\alloc_{ij} = 1$
    \item (No overselling) $\sum_{i}\alloc_{ij}  \leq 1$
\end{itemize}
\end{definition}

As we mentioned in \Cref{sec:prelim}, FPPE can be interpreted as a BFPM where the pacing multipliers are defined as $\pacescalar_i = \min\{\max_{j\in[m]}\sfrac{\vali\ctr_{ij}}{\price_{j}}, 1\}\in[0, 1]$ for each agent $i$.

\begin{lemma}[\citealp{CKPSSSW-22}]
\label{lem:BFPM dominance}
For any set of budgeted agents, FPPE Pareto-dominates all BFPMs in pacing multipliers, per-unit prices, and revenue.
\end{lemma}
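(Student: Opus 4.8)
The plan is to recover Lemma~\ref{lem:BFPM dominance} from the lattice structure of the set of budget-feasible first-price pacing multipliers (BFPM), following \citet{CKPSSSW-22}. I would establish three facts: (i) the set of BFPM pacing-multiplier vectors is closed under componentwise maximum, and (being the projection to multiplier-space of a compact set of (multiplier, allocation) pairs cut out by closed constraints, hence a closed subset of the cube $[0,1]^n$) therefore has a pointwise-largest element $\pacescalar^\star$; (ii) $\pacescalar^\star$, together with a suitable allocation, is an FPPE; (iii) the per-unit prices and the revenue $\sum_j\pricej$ are monotone in the pacing multipliers. Granting (i)--(iii): by (ii) and Lemma~\ref{lem:FPPE uniqueness}, the FPPE has pacing multiplier $\pacescalar^\star$ --- uniqueness of FPPE prices pins down the multipliers via $\pacescalar_i = 1\wedge\min_j \pricej/\valij$ --- so every BFPM $(\pacescalar,\alloc)$ satisfies $\pacescalar\le\pacescalar^\star$ componentwise; the monotonicity in (iii) then upgrades this to domination of prices and of revenue.

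The heart of the argument, and the main obstacle, is the join-closure in (i). Given BFPMs $(\pacescalar,\alloc)$ and $(\pacescalar',\alloc')$ with prices $\price$ and $\price'$, set $\beta\triangleq\pacescalar\vee\pacescalar'$ and $q_j\triangleq\max_i\beta_i\valij$. Two structural observations get one started. First, $q_j\in\{\pricej,\price'_j\}$: indeed $q_j\ge\pricej$ and $q_j\ge\price'_j$, while $q_j$ is attained by some agent $i^\star$ with $\beta_{i^\star}\in\{\pacescalar_{i^\star},\pacescalar'_{i^\star}\}$, so $q_j$ equals the corresponding $\pacescalar$- or $\pacescalar'$-bid, which is at most $\pricej$ or $\price'_j$. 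Second, if $q_j=\pricej$ then every $\pacescalar$-winner $i$ of $j$ is a $\beta$-winner of $j$, since $\pacescalar_i\valij=q_j$ together with $\pacescalar_i\le\beta_i\le q_j/\valij$ forces $\beta_i\valij=q_j$ (symmetrically when $q_j=\price'_j$). It then remains to exhibit an allocation $y$ supported on $\beta$-winners that sells every positively-priced item completely and satisfies $\sum_j q_j y_{ij}\le\truewealth_i$ for all $i$; by the Gale--Hoffman/Hall deficiency condition for transportation problems this reduces to checking $\sum_{j\in J}q_j\le\sum_{i\in N_\beta(J)}\truewealth_i$ for every item set $J$, where $N_\beta(J)$ is the set of $\beta$-winners of items in $J$. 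Proving this inequality is where the real work lies: one must combine the budget inequalities supplied by $(\pacescalar,\alloc)$ and $(\pacescalar',\alloc')$ without double-counting an agent who simultaneously wins items whose $\beta$-price comes from $\pacescalar$ and items whose $\beta$-price comes from $\pacescalar'$. The naive fix --- route item $j$'s money through whichever of $\alloc,\alloc'$ matches $q_j$ --- fails precisely because an agent can absorb up to $\truewealth_i$ on each side, so one instead needs the more careful flow/LP-duality argument of \citet{CKPSSSW-22}. Granting it, $\beta$ extends to a BFPM, and iterating over finite subsets together with compactness of the multiplier cube produces the pointwise-largest BFPM $\pacescalar^\star$.

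For (ii), suppose $\pacescalar^\star_i<1$ for some agent $i$ whose equilibrium spend at $\pacescalar^\star$ is strictly below $\truewealth_i$. Raising $\pacescalar^\star_i$ to $\pacescalar^\star_i+\epsilon$ for small $\epsilon>0$ only weakly increases all per-unit prices, and agent $i$'s spend changes continuously in $\epsilon$ (she starts winning a new item only by passing from a tie to a strict win, which she can then take at a slightly higher price while absorbing only an $O(\epsilon)$ increment of money), so for $\epsilon$ small the perturbed bids still admit a budget-feasible allocation; thus $\pacescalar^\star+\epsilon e_i$ extends to a BFPM, contradicting maximality. Hence $\pacescalar^\star$ has no unnecessary pacing and is an FPPE. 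Finally, (iii) is immediate: $\pricej=\max_i\pacescalar_i\valij$ is coordinatewise nondecreasing in $\pacescalar$, so $\pacescalar\le\pacescalar^\star$ gives $\pricej\le\pricej^\star$ for every $j$, and summing gives $\sum_j\pricej\le\sum_j\pricej^\star$. Everything besides the deficiency inequality in step (i) is bookkeeping with the FPPE/BFPM definitions plus the already-available uniqueness result, so I would budget essentially all of the effort for that one step.
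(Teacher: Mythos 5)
First, a framing point: the paper does not prove this lemma at all --- it is imported verbatim from \cite{CKPSSSW-22} --- so the comparison is with the source's argument, whose skeleton (join-closure of BFPMs $\Rightarrow$ a Pareto-dominant BFPM exists $\Rightarrow$ it has no unnecessary pacing, hence is the FPPE $\Rightarrow$ monotonicity of $\price_j=\max_i\pacescalar_i\val_i\ctrij$ gives price and, via ``demanded goods sold completely,'' revenue domination) you have reconstructed correctly. The problem is that you leave unproven exactly the step you call ``where the real work lies'' (``Granting it\,\dots''), and your diagnosis of that step is wrong. The direct construction works: set $k(j)=1$ if $\price_j\ge\price_j'$ and $k(j)=2$ otherwise, and allocate good $j$ by $\alloc^{k(j)}_{\cdot j}$ at price $q_j=\max(\price_j,\price_j')$. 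An agent can never spend money on both sides. If $\pacescalar_i\ge\pacescalar_i'$ and $\alloc_{ij}'>0$ for a good with $\price_j'>\price_j$, then $\pacescalar_i'\val_i\ctrij=\price_j'>\price_j\ge\pacescalar_i\val_i\ctrij\ge\pacescalar_i'\val_i\ctrij$, a contradiction; if $\pacescalar_i'>\pacescalar_i$ and $\alloc_{ij}>0$ for a good with $\price_j\ge\price_j'$, then either $\val_i\ctrij=0$ (so the spend is zero) or $\price_j'\ge\pacescalar_i'\val_i\ctrij>\pacescalar_i\val_i\ctrij=\price_j\ge\price_j'$, again a contradiction. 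Hence each agent's spend under the combined allocation is bounded by her spend in one of the two original BFPMs, so it is at most $\truewealth_i$: the obstacle you describe (``absorbing up to $\truewealth_i$ on each side'') cannot occur, and no Hall/deficiency or flow--LP-duality machinery is needed.

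Your step (ii) also has a genuine gap. When agent $i$ passes from being tied on good $j$ to being its unique highest bidder, the BFPM conditions (highest bid wins together with demanded goods sold completely) force $\alloc_{ij}=1$: she must absorb the entire remaining fraction of $j$ at a price of roughly $\price_j$, which is not an $O(\epsilon)$ expenditure and may exceed her budget slack. The fix is to first re-allocate, at the \emph{unperturbed} multipliers, the goods on which $i$ is a highest bidder toward $i$ until either her budget binds (in which case she is not unnecessarily paced under that allocation) or she already owns all such goods entirely; only in the latter case does the $\epsilon$-perturbation cost $O(\epsilon)$ and contradict maximality. So while the architecture of your proof is the right one, both load-bearing steps are either deferred or argued incorrectly.
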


\begin{lemma}[\citealp{CKPSSSW-22}]
\label{lem:fppe adding item}
\label{lem:fppe adding agent}
    In an FPPE, adding an item (agent) weakly increases revenue.
\end{lemma}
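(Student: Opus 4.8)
Both halves rest on the BFPM characterization of FPPE (\Cref{lem:BFPM dominance}): an FPPE Pareto‑dominates every BFPM of the same instance in pacing multipliers, in per‑unit prices, and hence in revenue (where revenue means $\sum_j\pricej$). The strategy in each case is to manufacture, out of one instance's FPPE, a BFPM of the other instance, and then invoke dominance. \emph{Adding an agent} is the easy direction. Let $(\pacescalar,\alloc,\price)$ be the FPPE of an instance with agents $[n]$, and add an agent $n+1$ with an arbitrary budgeted type. Extend the tuple to the larger instance by setting agent $n+1$'s pacing multiplier to $0$ and her allocation of every item to $0$, leaving prices and all other allocations unchanged. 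This is a BFPM of the augmented instance: the identity $\pricej=\max_{i\in[n+1]}\pacescalar_i\val_i\ctr_{ij}$ survives because agent $n+1$ bids $0$ everywhere, and ``highest bid wins'', ``budget feasible'', ``demanded goods sold completely'', and ``no overselling'' are inherited verbatim since agent $n+1$ wins and spends nothing. By \Cref{lem:BFPM dominance} the FPPE of the augmented instance has revenue at least $\sum_j\pricej$, i.e.\ at least the original revenue.

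\textbf{Adding an item.} Let $(\pacescalar,\alloc,\price)$ be the FPPE of an instance with items $[m]$, and let $(\pacescalar\primed,\alloc\primed,\price\primed)$ be the FPPE after adjoining an item $m+1$. The key observation is that the restriction of $(\pacescalar\primed,\alloc\primed,\price\primed)$ to the old items $[m]$ is a BFPM of the original instance: every BFPM property for items in $[m]$ is inherited from the augmented FPPE, and budget feasibility only becomes slacker once we discard item $m+1$. Hence \Cref{lem:BFPM dominance}, applied to the original instance, gives $\pacescalar_i\ge\pacescalar_i\primed$ for all $i$ and $\pricej\ge\pricej\primed$ for all $j\in[m]$. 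Using ``demanded goods sold completely'' to rewrite revenue as total spending — $\sum_{j\in[m]}\pricej=\sum_i\sum_{j\in[m]}\allocij\pricej$ and $\sum_{j\in[m+1]}\pricej\primed=\sum_i\sum_{j\in[m+1]}\allocij\primed\pricej\primed$ — it suffices to prove that every agent spends weakly more in the augmented FPPE. If $\pacescalar_i<1$, then $\pacescalar_i\primed\le\pacescalar_i<1$, so ``no unnecessary pacing'' forces agent $i$ to spend exactly $\truewealth_i$ in both instances. If $\pacescalar_i=1$ but $\pacescalar_i\primed<1$, agent $i$ spends $\truewealth_i$ in the augmented FPPE, which is at least her original spend (which is $\le\truewealth_i$). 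The remaining case is $\pacescalar_i=\pacescalar_i\primed=1$: agent $i$'s bid on each old item is $\val_i\ctr_{ij}$ in both FPPEs; each old item she wins originally satisfies $\pricej=\val_i\ctr_{ij}$ by ``highest bid wins''; and since $\pricej\ge\pricej\primed$ while also $\pricej\primed=\max_k\pacescalar_k\primed\val_k\ctr_{kj}\ge\val_i\ctr_{ij}=\pricej$, the prices of those items are unchanged. As every competitor's bid on every old item has weakly dropped (all pacing multipliers dropped), one can select a market‑clearing allocation of the augmented FPPE in which agent $i$ retains all of her original winnings at their original prices, so she spends weakly more. Summing over agents yields revenue$'\ge$ revenue.

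\textbf{The main obstacle} is precisely the last case: showing that the augmented FPPE admits a market‑clearing allocation in which an agent unpaced in both instances keeps all her original winnings. The two ingredients are already in hand — no competitor can newly outbid her on an old item (all competing bids weakly decreased) and the prices on her winning items are fixed (forced by the price identity together with $\pricej\ge\pricej\primed$) — but assembling them into one globally consistent allocation requires a short augmenting‑path/flow argument on the bipartite graph of top bidders; this is harmless because revenue is pinned down by the (unique) prices alone, so exhibiting any one such allocation is enough. An alternative that sidesteps this bookkeeping is to invoke the convex‑program characterization of FPPE from \cite{CKPSSSW-22}, under which revenue equals $\sum_i\truewealth_i$ minus total leftover budget, and leftover budgets are monotone (nonincreasing) under enlarging the item set; but that route uses machinery outside the present excerpt, so the BFPM argument above is the preferred one.
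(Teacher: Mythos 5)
The paper does not actually prove this lemma: it is imported verbatim from \cite{CKPSSSW-22} and stated without proof, so there is no internal argument to compare against. Judged on its own merits, your proposal is half right. The \emph{adding an agent} direction is correct and complete: extending the original FPPE by giving the new agent pacing multiplier $0$ and zero allocation yields a valid BFPM of the augmented instance (BFPM has no ``no unnecessary pacing'' requirement, so the zero multiplier is legal), and \Cref{lem:BFPM dominance} then gives revenue monotonicity. This is the standard argument.

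The \emph{adding an item} direction, however, has a genuine gap exactly where you flag it. Your preliminary steps are sound: restricting the augmented FPPE to the old items is a BFPM of the original instance, so $\pacescalar_i\ge\pacescalar_i\primed$ and $\pricej\ge\pricej\primed$ on old items; paced-in-the-original agents spend $\truewealthi$ in both instances; agents who become paced spend weakly more; and for an agent unpaced in both, the prices of her originally-won items are pinned at $\val_i\ctrij$ and hence unchanged. But the decisive step --- that the augmented FPPE admits a \emph{single} allocation in which every such unpaced agent simultaneously retains her original share of every item she won, while every paced agent still exhausts her budget and every positively-priced item is fully allocated among its (possibly shrunken) top-bidder set --- is asserted, not proved. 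This is a nontrivial simultaneous feasibility statement (the original allocation of an item may have been split among several tied bidders whose budget-exhaustion constraints interact across items), and without it the chain $\text{revenue}'=\sum_i t_i'\ge\sum_i t_i=\text{revenue}$ does not close. Note also that the more obvious shortcut --- extending the original FPPE by pricing the new item at $\max_i\pacescalar_i\val_i\ctr_{i,m+1}$ and allocating it to nobody --- fails ``demanded goods sold completely,'' so it cannot be used as a BFPM of the augmented instance. Either carry out the flow/augmenting-path argument explicitly, or (more cleanly, and entirely legitimate since the lemma is cited from prior work anyway) invoke the convex-program characterization of FPPE from \cite{CKPSSSW-22}, under which revenue monotonicity in the item set follows directly.
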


Now, we are ready to prove \Cref{lem:fppe increasing budget}.

\begin{proof}[Proof of \Cref{lem:fppe increasing budget}]
    Let $(\pacescalar, \alloc)$/$(\pacescalar\primed, \alloc\primed)$ 
    be the FPPE before/after
    agent $i$'s budget increasing.
    Let $\price$/$\price\primed$
    be their induced per-unit prices respectively.
    It can be verified that $(\pacescalar, \alloc)$
    is still a valid BFPM after budget increasing.
    By \Cref{lem:BFPM dominance}, $\pacescalar\primed_\ell\geq \pacescalar_\ell$
    for every agent $\ell$,
    and thus per-unit price $\price_j\primed \geq \price_j$ for every item $j$.

    Now consider an instance $\instance\doubleprimed$ where we exclude agent $i$
    while holding all other parameters fixed.
    By \Cref{lem:fppe adding agent},
    the revenue of FPPE in this instance $\instance\doubleprimed$ 
    is at most 
    the revenue of FPPE in the original instance where agent $i$ has budget $\wealth_i$,
    i.e., at most $\sum_j \price_j$.
    It is sufficient to show the revenue of FPPE in $\instance\doubleprimed$ is 
    at least $\sum_j \price_j\primed - \wealth_i\primed$.
    To see this, consider the following pacing multipliers $\pacescalar\doubleprimed\triangleq \pacescalar\primed_{-i}$ and 
    allocations $\alloc\doubleprimed \triangleq \alloc\primed_{-i}$.
    The tuple $(\pacescalar\doubleprimed, \alloc\doubleprimed)$ satisfies all
    constraints of BFPM for instance $\instance\doubleprimed$ except 
    (demand goods sold completely).
    Note that we can further decompose every items into smaller pieces and discard 
    pieces with no allocation under $\allocs\doubleprimed$.
    In this way, $(\pacescalar\doubleprimed, \allocs\doubleprimed)$ becomes a valid
    BFPM after removing those (fractional) items,
    and its revenue is at least $\sum_j \price_j\primed - \wealth_i\primed$.
    Invoking \Cref{lem:BFPM dominance} and \Cref{lem:fppe adding item} finishes the proof.
\end{proof}

\subsection{Proof of \texorpdfstring{\Cref{thm:poa bayes}}{}}
\label{apx:poabayes}
\poabayes*

\newcommand{\type}{z}
\newcommand{\typei}{\type_i}
\newcommand{\typeMinusI}{\type_{-i}}
\newcommand{\optexpectalloc}{X^*}
\newcommand{\optexpectallocij}{\optexpectalloc_{ij}}

Note that \Cref{example:pure lower bound} also serves as a lower bound for \Cref{thm:poa bayes}.
For the upper bound, we use a similar argument as the one in \Cref{thm:poa mixed}.
In this analysis, we introduce auxiliary notation $\type_i \triangleq (\valuefunc_i, \truewealth_i,\moneycost_i)$ to denote agent $i$'s realized type, and notation $\typeMinusI$ for other agents $-i$ similarly.

\begin{proof}[Proof of \Cref{thm:poa bayes}]
   Fix an arbitrary Bayesian Nash equilibrium $\strategy$. 
Fix an arbitrary agent $i$. 
Let $\optalloc(\typei, \typeMinusI)$ be the optimal allocation that maximizes the liquid welfare when 
agent $i$ has type $\typei$, and other agents have types $\typeMinusI$.
For every type $\typei$, 
define $\optexpectalloc_i(\typei) \triangleq \expect[\typeMinusI \sim \typedist_{-i}]{\optalloc_i(\typei, \typeMinusI)}$,
For every type $\typei$,
let $\randomalloc(\typei),\randomprice(\typei)$ be the randomized allocation and per-unit prices of the inner FPPE when agent~$i$ has type $\typei$ under the equilibrium. 
In both $\randomalloc(\typei),\randomprice(\typei)$, the randomness is taken over agent $i$'s message given strategy $\strategy_i$, other agents' types $\typeMinusI$ 
and their messages given strategy $\strategy_{-i}$.
Similarly, let $\randomalloc\doubleprimed,\randomprice\doubleprimed$ be the randomized allocation and per-unit prices of the inner FPPE when agent~$i$ reports $(\reportval_i = 0, \reportwealth_i = 0)$ and all other agents use their equilibrium strategy $\strategy_{-i}$.
By \Cref{lem:fppe increasing budget}, $\randomprice\primed$
is stochastically dominated by $\randomprice(\typei)$ for all $\typei$.

In the first step of our argument, we prove the following inequality
for every type $\typei$ for agent $i$:
    \begin{align}
    \label{eq:multi item bayes wtp bound}
        \frac{1}{2}\wtp_i(\optexpectalloc_i(\typei)) \leq 
        \moneycost_i^{-1}(\util_i(\typei)) + 
        \sum_{j\in[m]}\expect[\pricej\sim\randompricej\doubleprimed]{\pricej}\optexpectallocij(\typei) 
    \end{align}
    where $\util_i(\typei)$ is the expected utility of agent $i$ with type $\typei$ in the equilibrium.
    We proof inequality~\eqref{eq:multi item bayes wtp bound}
    using the same argument as the one in \Cref{thm:poa mixed}.
    Specifically, 
    we construct a specific deviation based on $\optexpectalloc_i(\typei)$ for agent $i$ with type $\typei$.
    Let $\tildePj\doubleprimed \triangleq \expect[\pricej\doubleprimed\sim\randompricej\doubleprimed]{\pricej\doubleprimed}$
    for every item $j\in[m]$.
    Consider the following deviation for agent $i$
    where agent $i$ deterministically reports 
      $\reportval_i\primed = \infty,
        \reportwealth_i\primed = 
        \sum_{j\in[m]} 
        \tildePj\doubleprimed
        \optexpectallocij(\typei)$.
    Let $\randomalloc\primed,\randomprice\primed$ be the randomized allocation and per-unit prices of the inner FPPE after this deviation of agent $i$.
    Since agent $i$ reports $\val_i\primed = \infty$, 
    she always exhausts her reported budget$~\reportwealth_i\primed$.
    
    In the remaining analysis,
    we couple the reported message of all other agents except agent~$i$ in the FPPE inducing randomized per-unit prices $\randomprice\doubleprimed$, and the FPPE inducing randomized 
    allocation~$\randomalloc\primed$ and per-unit prices $\randomprice\primed$.
    It is important to note that 
    \Cref{lem:fppe increasing budget} implies
    \begin{align*}
        \pricej\doubleprimed \leq \pricej\primed
        \text{ for every $j\in[m]$, and }
        \sum_{j\in[m]}\pricej\primed \leq 
        \sum_{j\in[m]}\pricej\doubleprimed + \reportwealth_i\primed
    \end{align*}
    for every realization of agents' message profile.

    Define auxiliary random variable 
    $\boldsymbol\tau_i \triangleq 
    \frac{\reportwealth_i\primed}
    {\sum_{j\in[m]}
    \randompricej\doubleprimed\optexpectallocij(\typei)} \geq 0$.
    By definitions of $\boldsymbol\tau_i$ and $\reportwealth_i\primed$, $\expect{\frac{1}{\boldsymbol\tau_i}} = 1$.
    The expected utility $\util_i$
    of agent $i$ in the equilibrium can be lowerbounded as her expected utility under this deviation. Namely,
    \begin{align*}
        \util_i 
        &\overset{(a)}{\geq} 
        \expect
        [\price\primed\sim\randomprice\primed]
        {
        \valuefunc_i\left(
        \left(
        \max_{j\in[m]}
        \frac{\ctrij}{\pricej\primed}
        \right)
        \reportwealth_i\primed
        \right)
        -
        \moneycost_i\left(\reportwealth_i\primed\right)}
        \\
        &\overset{(b)}{\geq} 
        \expect
        [\price\primed\sim\randomprice\primed,\tau_i\sim\boldsymbol\tau_i]
        {
        \valuefunc_i\left(
        \left(
        \max_{j\in[m]}
        \frac{\ctrij}{\pricej\primed}
        \right)
        \left(
        \sum_{j\in[m]}
        \frac{1}{\tau_i + 1}
        \tau_i\pricej\doubleprimed\optexpectallocij(\typei)
        +
        \frac{\tau_i}{\tau_i+1}
        \left(\pricej\primed - \pricej\doubleprimed\right)
        \right)
        \right)}
        -
        \moneycost_i\left(
        \sum_{j\in[m]}
        \tildePj\doubleprimed
        \optexpectallocij(\typei) 
        \right)
        \\
        &\overset{(c)}{\geq} 
        \expect
        [\price\primed\sim\randomprice\primed,\tau_i\sim\boldsymbol\tau_i]
        {
        \valuefunc_i\left(
        \left(
        \max_{j\in[m]}
        \frac{\ctrij}{\pricej\primed}
        \right)
        \left(
        \sum_{j\in[m]}
        \frac{\tau_i}{\tau_i + 1}
        \pricej\primed
        \optexpectallocij(\typei)
        \right)
        \right)}
        -
        \moneycost_i\left(
        \sum_{j\in[m]}
        \tildePj\doubleprimed
        \optexpectallocij(\typei)  
        \right)
        \\
        &\overset{(d)}{\geq} 
        \expect
        [\price\primed\sim\randomprice\primed,\tau_i\sim\boldsymbol\tau_i]
        {
        \valuefunc_i\left(
        \sum_{j\in[m]}
        \frac{\tau_i}{\tau_i + 1}
        \frac{\ctrij}{\pricej\primed}
        \pricej\primed
        \optexpectallocij(\typei)
        \right)}
        -
        \moneycost_i\left(
        \sum_{j\in[m]}
        \tildePj\doubleprimed
        \optexpectallocij(\typei)\right)
        \\
        &\overset{(e)}{\geq} 
        \expect
        [\tau_i\sim\boldsymbol\tau_i]
        {
        \frac{\tau_i}{\tau_i + 1}
        \valuefunc_i\left(
        \sum_{j\in[m]}
        {\ctrij}
        \optallocij
        \right)}
        -
        \moneycost_i\left(
        \sum_{j\in[m]}  
        \tildePj\doubleprimed
        \optalloc_{ij}\right)
        \\
        &\overset{(f)}{\geq} 
        {
        \frac{1}
        {\expect{\frac{1}{\boldsymbol\tau_i}}
        +1 }
        \valuefunc_i\left(
        \sum_{j\in[m]}
        {\ctrij}
        \optexpectallocij(\typei)
        \right)}
        -
        \moneycost_i\left(
        \sum_{j\in[m]}
        \tildePj\doubleprimed
        \optexpectallocij(\typei)  \right)
        \\
        &\overset{(g)}{=} 
        \frac{1}{2}
        \valuefunc_i\left(
        \sum_{j\in[m]}
        {\ctrij}
        \optexpectallocij(\typei)
        \right)
        -
        \moneycost_i\left(
        \sum_{j\in[m]}
        \tildePj\doubleprimed
        \optexpectallocij(\typei)  \right)
    \end{align*}
    where inequality~(a) holds due to \Cref{lem:compute utility};
    inequality~(b) holds since $\reportwealth_i\primed = \tau_i\sum_{j\in[m]}\pricej\doubleprimed\optexpectallocij(\typei)$
    and $\wealth_i\primed \geq \sum_{j\in[m]}\pricej\primed - \pricej\doubleprimed$ for every realization of 
    $\pricej\primed,\pricej\doubleprimed,\tau_i$;
     inequality~(c) holds since $\optexpectallocij(\typei) \leq 1$;
     inequality~(d) holds by algebra;
     {inequality~(e) holds due to the concavity of 
     valuation function $\valuefunc_i$ and $\valuefunc_i(0) = 0$};
     inequality~(f) holds due to Jensen's inequality;
     and equality~(g) holds since $\expect{\frac{1}{\boldsymbol\tau_i}} = 1$
     by definition.

    We are ready to prove inequality~\eqref{eq:multi item bayes wtp bound} as follows,\footnote{Similar to the proof of \Cref{thm:poa mixed}, this analysis does not use the differentiability of money cost function $\moneycost_i$. Thus, without loss of generality we assume $\truewealth_i = \infty$ and allow $\moneycost:\reals_+ \rightarrow \realsinf$ to simplify the presentation.}
    \begin{align*}
        \frac{1}{2}\wtp_i(\optexpectalloc_i(\typei)) 
        &\overset{(a)}{=}
        \frac{1}{2}\moneycost_i^{-1}\left(
        \valuefunc_i\left(
        \sum_{j\in[m]}
        {\ctrij}
        \optexpectallocij(\typei)
        \right)
        \right)
        \\
        &\overset{(b)}{\leq}
        \moneycost_i^{-1}\left(
        \frac{1}{2}
        \valuefunc_i\left(
        \sum_{j\in[m]}
        {\ctrij}
        \optexpectallocij(\typei)
        \right)
        \right)
        \\
        &\overset{(c)}{\leq}
        \moneycost_i^{-1}\left(
        \util_i(\typei) + 
        \moneycost_i\left(\sum_{j\in[m]} 
        \tildePj\doubleprimed
        \optexpectallocij(\typei) \right)
        \right)
        \\
        &\overset{(d)}{\leq}
        \moneycost_i^{-1}\left(
        \util_i(\typei)  \right)
        +
        \sum_{j\in[m]}
        \tildePj\doubleprimed
        \optexpectallocij(\typei)  
    \end{align*}
    where 
    equality~(a) holds due to the definition of $\wtp_i$;
    inequalities~(b) and (d) hold due to the concavity of $\moneycost_i^{-1}$;
    and
    inequality~(c) holds due to the monotonicity of $\moneycost_i^{-1}$ and the lower bound of the expected utility $\util_i$ obtained above.

    Given inequality~\eqref{eq:multi item bayes wtp bound}, we are able to show the upper bound of the PoA.
    First, for agent $i$ with type $\typei$,
      \begin{align*}
        \expect[\typeMinusI \sim \typedist_{-i}]{\wtp_i(\optalloc_i(\typei, \typeMinusI))} 
	&\overset{(a)}{\leq} 
	\wtp_i\left(\expect[\typeMinusI \sim \typedist_{-i}]{\{\optalloc_i(\typei, \typeMinusI)}
	\right)
	\\
	&=
	\wtp_i\left(\optexpectalloc_i(\typei)\right)
	\\
        &\leq 
        2
        \left(
        \moneycost_i^{-1}(\util_i(\typei))
        +
        \sum_{j\in[m]}\tildePj\doubleprimed\optexpectallocij(\typei) 
        \right)
    \end{align*}
    where 
inequality~(a) holds due to the concavity of $\wtp_i$.
Then,
\begin{align*}
\expect[\type\sim\typedist]{\wtp(\optalloc(\type))} 
&= 
\sum_{i\in[n]}\expect[\typei\sim\typedist_i]{\expect[\typeMinusI \sim \typedist_{-i}]{\wtp_i(\optalloc_i(\typei, \typeMinusI))}}
\\
&\leq
\sum_{i\in[n]}\expect[\typei\sim\typedist_i]{
2
        \left(
        \moneycost_i^{-1}(\util_i(\typei))
        +
        \sum_{j\in[m]}\tildePj\doubleprimed\optexpectallocij(\typei) 
        \right)
}
\\
&\leq 
2\left(
\sum_{i\in[n]}\expect[\typei\sim\typedist_i]{
\moneycost_i^{-1}(\util_i(\typei))
}
+
\sum_{i\in[n]}\sum_{j\in[m]}\expect[\typei\sim\typedist_i]{\tildePj\doubleprimed\optexpectallocij(\typei)}
\right)
\\
&\leq 
4\sum_{i\in[n]}
\expect[\typei\sim\typedist_i]{\wtp(\randomalloc_i(\typei)}
\end{align*}
where the last inequality hold
since $\moneycost_i^{-1}(\util_i(\typei))\leq \moneycost_i^{-1}
    \left(\expect[\alloci\sim\randomalloc_i(\typei)]{\valuefunc_i\left(\sum_{j\in[m]}\allocij\ctrij\right)}\right)
    =
    \wtp_i(\randomalloc_i(\typei))$, and
    $\sum_{i\in[n]}\sum_{j\in[m]}\expect[\typei\sim\typedist_i]{\tildePj\doubleprimed\optexpectallocij(\typei)} 
    = \sum_{j\in[m]}{\tildePj}
    \leq \sum_{i\in[m]}\sum_{j\in[m]}\expect[\typei\sim\typedist_i]
    {\expect[\paymenti\sim\randompayment_i(\typei)]{\paymenti}}
    \leq  \wtp(\allocs)$ where the last two inequalities hold due to the stochastic dominance between $\randomprice\doubleprimed$ and $\randomprice(\type_i)$ for all type $\typei$ and the non-negative utility of every agent in the equilibrium.
\end{proof}

\subsection{Proof of \texorpdfstring{\Cref{prop:multiple pure nash for budgeted agents}}{}}
\singleitembudgetedagentmultinash*
\label{apx:singleitembudgetedagentmultinash}
\begin{proof}
    For every agent $i$, we have
    \begin{align*}
        \allocLBi(\price) &\overset{(a)}{=} \left(1 \wedge \left(1 - \frac{\price}{\vali}\right)\right)\cdot \indicator{\vali \geq \price}
        =
        \left(1 - \frac{\price}{\vali}\right)\cdot \indicator{\vali \geq \price}
        ,\qquad
        \allocUBi(\price) = \left(1 \wedge \frac{\truewealth_i}{\price}\right)\cdot \indicator{\vali\geq \price}
    \end{align*}
    where equality~(a) holds since $\truewealthi > \frac{1}{4}\vali$.
    Moreover, $\allocLBi(\price) < \allocUBi(\price)$ for every $\price \leq \vali$.
    Thus, $\PriceL \subsetneq \PriceH$ and $\PriceL = \{\priceL\}$ is a singleton.

    Below we argue that there exists high-price equilibrium with per-unit price in $\PriceH \backslash\PriceL$.
    By the definition of $\PriceL$, there exists at least two distinct agents $i\primed, i\doubleprimed$ such that $\allocLB_{i\primed}(\priceL), \allocLB_{i\doubleprimed}(\priceL) > 0$, $\val_{i\primed}, \val_{i\doubleprimed} \geq \priceL$ and thus $\allocLB_{i\primed}(\priceL) < \allocUB_{i\primed}(\priceL)$, $\allocLB_{i\doubleprimed}(\priceL) < \allocUB_{i\doubleprimed}(\priceL)$. 
    Therefore, 
    \begin{align*}
        \allocLB_{i\primed}(\priceL) + \sum\nolimits_{i\not=i\primed}\allocUBi(\priceL) 
        > 
        \allocLB_{i\primed}(\priceL) + \allocLB_{i\doubleprimed}(\priceL) +
        \sum\nolimits_{i\not=i\primed,i\doubleprimed}\allocUBi(\priceL)
        \geq 
        \sum\nolimits_{i\in[n]}\allocLBi(\priceL) = 1
    \end{align*}
    Since both $\allocLBi(\price)$ and $\allocUBi(\price)$ are continuous, for sufficiently small $\epsilon$, we have
    \begin{align*}
        \allocLB_{i\primed}(\priceL + \epsilon) + \sum\nolimits_{i\not=i\primed}\allocUBi(\priceL + \epsilon) > 1
    \end{align*}
    which is the condition for high-price equilibrium existence with per-unit price $\priceL + \epsilon \in \PriceH\backslash\PriceL$ as desired.
\end{proof} 

\subsection{Proof of \texorpdfstring{\Cref{prop:single item eqlb computation}}{}}
\computeeqlb*
\label{apx:computeeqlb}
\begin{proof}
    By \Cref{lem:allocLB allocUB}, both $\allocLB_i(\price)$ and $\allocUB_i(\price)$ are weakly decreasing in $\price$.
    Thus, $\PriceL$ and consequently low-price equilibrium in \Cref{prop:single item eqlb computation} can be computed (e.g., via binary search) in polynomial time. Similarly, $\PriceH$ can be computed (e.g., via binary search) in polynomial time. Moreover, price $\price\in\PriceH$ that satisfies conditions $|\lagent(\price)| \geq 1$ and $\sum_{i\in\hagent(\price)}\allocUB_(\price)\geq 1$ can also be identified in polynomial time. Therefore, the high-price equilibrium can be computed in polynomial time as well.
\end{proof}

\section{Best Response Computation for Single-item Instances}
\label{apx:best response computation}

In this section, we focus on the metagame for single-item instances and explain how to compute the best response given other agents' reported message in polynomial time. Similar to \Cref{sec:single item pure nash}, we assume $\ctrij = 1$ for all agents and drop it as well as subscript index $j$ for the item without loss of generality.

\begin{proposition}
    In the metagame with a single item, given other agents' reported message $(\reportval_{-i}, \reportwealth_{-i})$,
    there exists a polynomial time algorithm that 
    computes the best response $(\reportval_i, \reportwealth_i)$ of agent $i$.
\end{proposition}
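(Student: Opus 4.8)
The plan is to reduce the two–dimensional best–response problem to a one–dimensional optimization over agent $i$'s reported budget, write the inner FPPE outcome as an explicit piecewise–rational function of that budget with only $O(n)$ pieces, and then solve a concave univariate subproblem on each piece. For the reduction, recall the argument behind \Cref{lem:reporting infinite value}: for any report $(\reportval_i,\reportwealth_i)$, the report $(\infty,t_i)$ with $t_i$ equal to agent $i$'s payment under $(\reportval_i,\reportwealth_i)$ induces the same inner–FPPE per–unit price and hence, by \Cref{lem:compute utility}, the same utility for $i$. Thus every report achieves the same utility as some report of the form $(\infty,W)$, and it suffices to compute $W^{*}\in\argmax_{W\in[0,\truewealth_i]} f(W)$, where $f(W):=\util_i(\infty,W,\reportval_{-i},\reportwealth_{-i})$ (with $f\equiv-\infty$ for $W>\truewealth_i$, and $f(0)=0$), and return $(\reportval_i,\reportwealth_i)=(\infty,W^{*})$. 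The degenerate case in which the competing reports force the inner price to $+\infty$ is trivial ($i$ reports $(0,0)$) and is excluded below.

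\textbf{Structure of the inner FPPE as a function of $W$.} Using the competitive–equilibrium reading of FPPE from \Cref{sec:prelim}, for a single item at a candidate price $\price>0$ every agent $\ell\neq i$ demands $D_\ell(\price):=\min\{1,\reportwealth_\ell/\price\}\cdot\indicator{\reportval_\ell\ge \price}$ (the demand being the whole interval $[0,\min\{1,\reportwealth_\ell/\price\}]$ at $\price=\reportval_\ell$), while agent $i$ with $\reportval_i=\infty$ demands $\min\{1,W/\price\}$. Set $G(\price):=\sum_{\ell\neq i}D_\ell(\price)$; it is continuous, weakly decreasing in $\price$, and between consecutive breakpoints — drawn from $\{\reportval_\ell\}_{\ell\neq i}\cup\{\reportwealth_\ell\}_{\ell\neq i}$ — has the closed form $(\sum_{\ell\in S}\reportwealth_\ell)/\price+|S'|$ for a fixed active set $S$ and a fixed set $S'$ of agents demanding the whole item. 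Imposing market clearing $\min\{1,W/\price\}+G(\price)=1$, together with uniqueness of FPPE prices (\Cref{lem:FPPE uniqueness}) and monotonicity (\Cref{lem:fppe increasing budget}: the price weakly increases in $W$), yields a continuous, weakly increasing map $W\mapsto\price(W)$ that is flat at each value $\reportval_\ell$, strictly increasing between such values, and equal to $W$ once $W\ge\price^{*}:=\min\{\price:G(\price)=0\}$; correspondingly agent $i$'s equilibrium allocation is $\alloc_i(W)=W/\price(W)$ on the flat stretches and $\alloc_i(W)=1-G(\price(W))$ elsewhere, reaching $1$ for $W\ge\price^{*}$. Both $\price(W)$ and $\alloc_i(W)$ are continuous and piecewise–rational in $W$ with $O(n)$ pieces whose endpoints and coefficients are obtained by sorting the $O(n)$ breakpoints in $O(n\log n)$ time, and by \Cref{lem:compute utility} we get the explicit formula $f(W)=\valuefunc_i(\alloc_i(W))-\moneycost_i(W)$, independent of the tie–breaking rule.

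\textbf{Per–piece optimization.} On each of the $O(n)$ pieces, reparametrize by $x:=\alloc_i\in[\underline{x},\overline{x}]$: on a flat–price piece $\price\equiv\reportval_\ell$ one has $W=\reportval_\ell\,x$, and on a price–moving piece with active set $S$ one computes $\price=c_S/(1-x)$ and $W=\price-c_S=c_S\,x/(1-x)$ where $c_S:=\sum_{\ell\in S}\reportwealth_\ell$; in both cases $W=W(x)$ is convex and increasing on the piece. Hence $f=\valuefunc_i(x)-\moneycost_i(W(x))$ is concave in $x$ there (a concave function minus the composition of a convex increasing function with a convex increasing function). Maximizing a concave univariate function over an interval is done in polynomial time by binary search on the sign of its derivative, using value/derivative–oracle access to $\valuefunc_i$ and $\moneycost_i$ exactly as in the proof of \Cref{prop:single item eqlb computation}; the terminal regime $x=1$ (where $f(W)=\valuefunc_i(1)-\moneycost_i(W)$ is decreasing in $W$) and the hard–budget endpoint $W=\truewealth_i$ are checked directly. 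Returning the best $W^{*}$ over all pieces yields the answer in time $O(n\log n)$ plus $O(n)$ one–dimensional concave maximizations, and the algorithm outputs $(\infty,W^{*})$.

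\textbf{Main obstacle.} The delicate step is pinning down $W\mapsto(\price(W),\alloc_i(W))$ precisely. Two points need care: (a) the report $\reportval_i=\infty$ is formally degenerate for the pacing–multiplier description, so one must argue through the competitive–equilibrium / limiting description and the FPPE axioms (using price uniqueness) that the inner price is exactly $\price(W)$ and that agent $i$ pays exactly $W$ throughout — including the range $W\ge\price^{*}$ where she wins the whole item; and (b) at each $\price=\reportval_\ell$ the competing demand jumps, so one must verify that the clearing price genuinely remains flat over a whole interval of $W$, passing through agent $\ell$'s indifference region, and that $\alloc_i(W)$ remains continuous. Once this finite case analysis is in place, piecewise concavity of $f$ and the $O(n)$ bound on the number of pieces follow routinely, giving the claimed polynomial–time algorithm.
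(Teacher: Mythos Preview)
Your proof is correct and follows essentially the same approach as the paper: reduce to reports of the form $(\infty,W)$ via \Cref{lem:reporting infinite value}, observe that the inner FPPE price is piecewise in $W$ (alternating between constant stretches at some $\reportval_\ell$ and linearly increasing stretches $p=W+c_S$), yielding $O(n)$ pieces on each of which the utility is concave, and then optimize piece by piece. The only cosmetic difference is that you reparametrize each piece by the allocation $x$ and argue concavity through the convexity of $W(x)=c_S\,x/(1-x)$, whereas the paper stays in the $W$-coordinate and uses that $W/(W+c_S)$ is concave in $W$; the two arguments are equivalent.
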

\begin{proof}
    We focus on the best response computation of agent $1$. 
    Without loss of generality, we assume $\reportval_2 \leq \reportval_3 \leq \dots \leq \reportval_n$.
    By \Cref{lem:reporting infinite value}, it suffices to consider reported message $(\reportval_1 = \infty, \reportwealth_1)$.
    By definition, the per-unit price $\price$ of the inner FPPE is continuous, weakly increasing in $\reportwealth_1$ and agent $1$ has allocation $\alloc_1 = \sfrac{\reportwealth_1}{\price}$ and payment $\payment_1 = \reportwealth_1$. Though agent $1$'s utility $\util_1$ is not globally concave in $\reportwealth_1$, it can be divided into at most $2n-1$ pieces where each piece is concave in $\reportwealth_1$. See \Cref{fig:linear agent 1 inefficient deviation} for an graphical illustration. Specifically, in each piece,  price $\price$ either (i) increases linearly in $\reportwealth_i$, or (ii) stays constant and is equal to $\reportval_i$ for some agent $i$. In both cases, it can be verified that agent $1$'s utility is concave, since her valuation function $\valuefunction_1$ (money cost function $\moneycost_i$) is concave (convex). By utilizing this piecewise concavity, the best $\reportwealth_1$ in each piece can be solved efficiently. Since there are at most $2n-1$ pieces, the optimal $\reportwealth_1$ can be computed in polynomial time as desired.
\end{proof}

\end{document}